\definecolor{detailsgray}{gray}{0.3}
\newcounter{todo}
\def \cbb{\mathbb{C}}
\def \nbb{\mathbb{N}}
\def \rbb{\mathbb{R}}
\def \zbb{\mathbb{Z}}
\def \bcal {\mathcal{B}}
\def \ccal {\mathcal{C}}
\def \dcal {\mathcal{D}}
\def \gcal {\mathcal{G}}
\def \hcal {\mathcal{H}}
\def \ical {\mathcal{I}}
\def \kcal {\mathcal{K}}
\def \ncal {\mathcal{N}}
\def \ocal {\mathcal{O}}
\def \qcal {\mathcal{Q}}
\def \rcal {\mathcal{R}}
\def \scal {\mathcal{S}}
\def \ucal {\mathcal{U}}
\def \wcal {\mathcal{W}}
\def \bfk  {\mathfrak{B}}
\newcommand{\thetav}{\boldsymbol{\theta}}
\newcommand{\etav}{\boldsymbol{\eta}}
\newcommand{\lambdav}{\boldsymbol{\lambda}}
\newcommand{\nuv}{\boldsymbol{\nu}}
\newcommand{\zetav}{\boldsymbol{\zeta}}
\newcommand{\piv}{\boldsymbol{\pi}}
\newcommand{\xiv}{\boldsymbol{\xi}}
\newcommand{\av}{\boldsymbol{a}}
\newcommand{\bv}{\boldsymbol{b}}
\newcommand{\cv}{\boldsymbol{c}}
\newcommand{\ev}{\boldsymbol{e}}
\newcommand{\nv}{\boldsymbol{n}}
\newcommand{\xv}{\boldsymbol{x}}
\newcommand{\yv}{\boldsymbol{y}}
\newcommand{\zv}{\boldsymbol{z}}
\newcommand{\zerov}{\boldsymbol{0}}
\def \. { \,\! }
\def\clap#1{\hbox to 0pt{\hss#1\hss}}
\def\mathclap{\mathpalette\mathclapinternal}
\def\mathclapinternal#1#2{\clap{$\mathsurround=0pt#1{#2}$}}
\def \cdotarg { \, \cdot \, }
\DeclareMathOperator{\im}{Im}
\DeclareMathOperator{\re}{Re}
\newcommand{\idop}{\boldsymbol{1}}
\def \st {^\ast}
\DeclareMathOperator{\supp}{supp}
\newcommand{\vn}[1]{^{(\smash{#1})}}
\newcommand{\rhs}[1]{\mathrm{r.h.s.}\eqref{#1}}
\newcommand{\Hil}{\mathcal{H}}
\newcommand{\fpn}{\Hil^{\mathrm{f}}}
\newcommand{\fpno}{\Hil^{\omega,\mathrm{f}}}
\newcommand{\fpnp}{P^\mathrm{f}}
\newcommand{\boundedops}{\bfk(\Hil)}
\newcommand{\qf}{\mathcal{Q}}
\newcommand{\A}{\mathcal{A}}
\newcommand{\M}{\mathcal{M}}
\newcommand{\hscalar}[2]{\langle #1 , #2 \rangle }
\newcommand{\gnorm}[2]{\lVert #1 \rVert_{#2}}
\newcommand{\onorm}[2]{\lVert #1 \rVert_{#2}^\omega}
\newcommand{\leftwedge}{\mathcal{W}'}
\newcommand{\rightwedge}{\mathcal{W}}
\newcommand{\zd}{z^\dagger}
\newcommand{\lvector}[2]{\boldsymbol{\ell}_{#1}(#2)}
\newcommand{\rvector}[2]{\boldsymbol{r}_{#1}(#2)}
\newcommand{\cmeA}[2]{f_{#1}^{\lbrack #2 \rbrack}}
\newcommand{\cmeB}[2]{f_{#1}{\lbrack #2 \rbrack}}
\newcommand{\cme}[2]{\mathchoice{\cmeA{#1}{#2}}{\cmeB{#1}{#2}}{\cmeB{#1}{#2}}{\cmeB{#1}{#2}}}
\newcommand{\oa}{\varpi}
\DeclareMathOperator{\strip}{S}
\newcommand{\tube}{\mathcal{T}}
\newcommand{\perms}[1]{\mathfrak{S}_{#1}}
\DeclareMathOperator*{\res}{res}
\DeclareMathOperator{\ich}{ich}
\DeclareMathOperator{\ach}{ach}
\numberwithin{equation}{section}
\newcommand{\cmpqed}{}
\newcommand{\ofwhat}[1]{Proof of {#1}}
\newcommand{\videoref}{\cite{suppl:video}}
\newcommand{\sidecaption}{\centering}
\newenvironment{theopargself}{}{}
\newtheorem{definition}{Definition}[section]
\newtheorem{lemma}[definition]{Lemma}
\newtheorem{proposition}[definition]{Proposition}
\newtheorem{theorem}[definition]{Theorem}
\title{Characterization of local observables in integrable quantum field theories}
\author{Henning Bostelmann\thanks{%
University of York, Department of Mathematics, York YO10 5DD, United Kindom. 
E-mail: \href{mailto:henning.bostelmann@york.ac.uk}{\nolinkurl{henning.bostelmann@york.ac.uk}}
}
 \and Daniela Cadamuro\thanks{%
University of Bristol, School of Mathematics, University Walk, Bristol BS8 1TW, United Kingdom.
E-mail: \href{href:dc13950@bristol.ac.uk}{\nolinkurl{dc13950@bristol.ac.uk}}
}
}
\date{March 5, 2015}
\begin{document}

\maketitle

\begin{abstract}

Integrable quantum field theories in 1+1 dimensions have recently become amenable to a rigorous construction, but many questions about the structure of their local observables remain open. 
Our goal is to characterize these local observables in terms of their expansion coefficients in a series expansion by interacting annihilators and creators, similar to form factors. 
We establish a rigorous one-to-one characterization, where locality of an observable is reflected in analyticity properties of its expansion coefficients; this includes detailed information about the high-energy behaviour of the observable and the growth properties of the analytic functions.
Our results hold for generic observables, not only smeared pointlike fields, and the characterizing conditions depend only on the localization region -- we consider wedges and double cones -- and on the permissible high energy behaviour. 
\end{abstract}


\section{Introduction}\label{sec:intro}

\subsection{Background}

The structure of local observables in relativistic quantum theory is a longstanding open problem. While the notion of locality as such is quite straightforward to formulate, e.g., in terms of the Wightman \cite{StrWig:PCT} or Haag-Kastler \cite{Haa:LQP} axioms, already the very existence of models with local observables is a hard mathematical question. In fact, beyond interaction-free models, rigorous existence results are known only in simplified situations in low-dimensional spacetime. Here a recent focus has been on so-called \emph{integrable quantum field theories} in 1+1 dimensions; see e.g.~\cite{Lechner:2008,BostelmannLechnerMorsella:2011,BischoffTanimoto:2013}. 

Integrable quantum field theories are often defined by means of a classical Lagrangian, and their local observables -- in the form of pointlike localized fields -- are traditionally constructed in terms of their form factors, i.e., specific matrix elements of the field in asymptotic scattering states. While these form factors have been computed in several classes of models \cite{FringMussardoSimonetti:1993,BabujianFringKarowskiZapletal:1993,BabujianFoersterKarowski:2013}, the local fields or their $n$-point functions are infinite series in the form factors, and it remains open whether these series converge in a meaningful way.

A very different approach, advocated by Schroer and Wiesbrock \cite{SchroerWiesbrock:2000-1}, is based on the notion of fields intrinsically localized in infinitely extended spacelike wedges, which have a much simpler structure. As there, let us restrict ourselves to a theory describing one species of scalar bosons of mass $\mu > 0$ without bound states. One considers the fields
\begin{equation}\label{eq:introfields}
 \phi(x) = \int d\theta \, \Big(e^{i p (\theta)\cdot x} \zd(\theta) + e^{-ip(\theta)\cdot x} z(\theta) \Big), 
\quad
\phi'(x) = J \phi(-x)  J,
\end{equation}
where $\zd(\theta)$, $z(\theta)$ are ``interacting'' creators and annihilators, depending on rapidity $\theta$, that fulfill the $S$-dependent Zamolodchikov-Faddeev relations rather than the CCR (see Sec.~\ref{sec:zgen}); $J$ is the PCT operator. One notices that 
\begin{equation}
  [\phi(x), \phi'(y)] = 0 \quad \text{if $(x-y)^2< 0$ and $x_1 < y_1$ },
\end{equation}
that is, if $x$ is spacelike \emph{to the left} of $y$. This allows us to interpret the field $\phi'(y)$ as localized in a wedge $\rightwedge_y$ with tip at $y$ extending to the right, and $\phi(x)$ in a wedge $\wcal_x'$ extending to the left.\footnote{For consistency with the literature, we stick with this slightly unintuitive usage of primed vs.~unprimed quantities.} Local observables in bounded regions can then be defined as those relatively local to $\phi$ and $\phi'$. (See Fig.~\ref{fig:wedges} for the concept, and Sec.~\ref{sec:locality} for a mathematical definition.) 

\begin{figure}\sidecaption
 \includegraphics[width=0.65\textwidth]{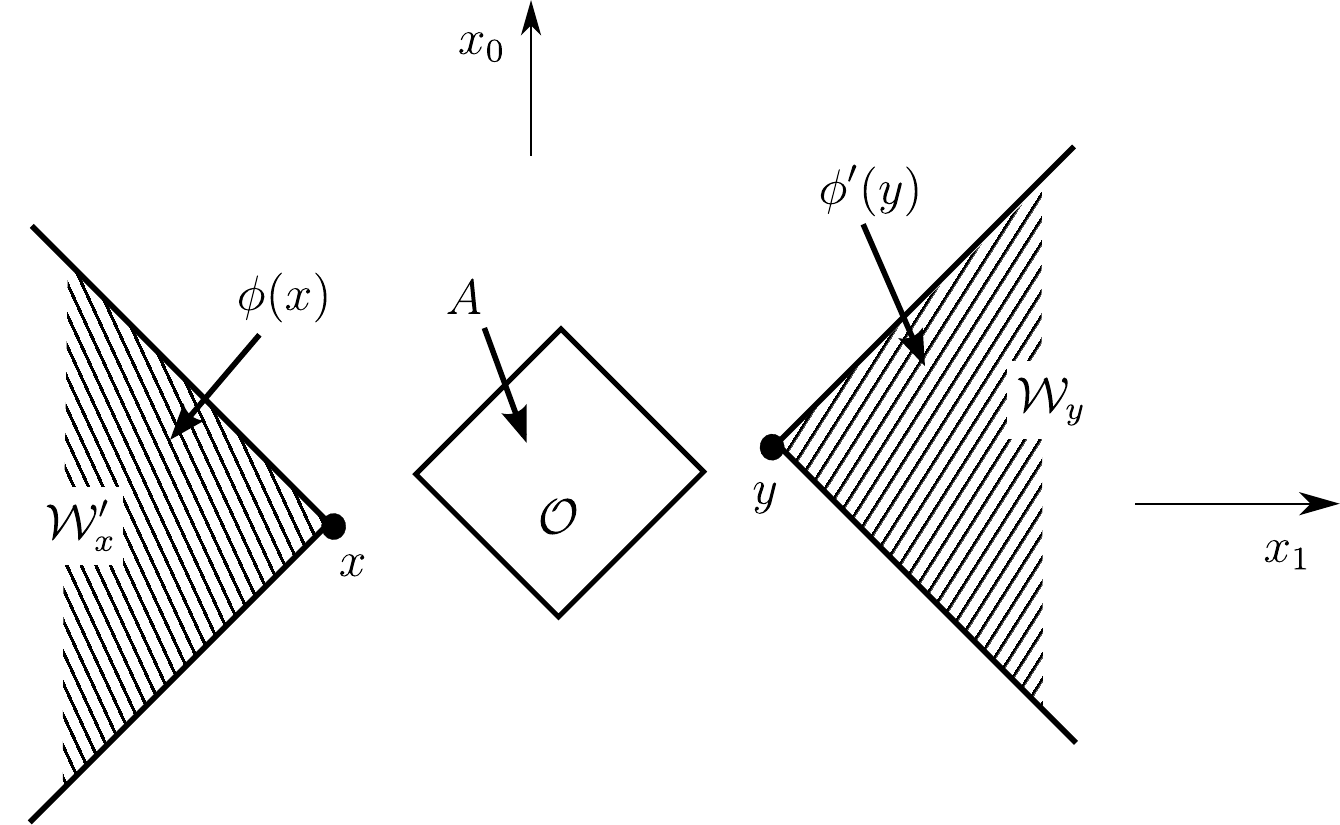}
 \caption{Localization regions of the left field $\phi(x)$, the right field $\phi'(y)$, and of a generic local observable $A$. The bounded region $\ocal$ is spacelike separated from both wedges $\wcal_x'$ and $\wcal_y$, corresponding to the fact that $A$ commutes with both $\phi(x)$ and $\phi'(y)$.}\label{fig:wedges}
\end{figure}

Lechner \cite{Lechner:2008} used this approach to prove the existence of local observables as bounded operators in a class of models, including the Ising and sinh-Gordon models. The proof uses rather abstract methods, and does not give an explicit construction of examples of local operators. Nonetheless, the abstract existence result is sufficient to show asymptotic completeness and to fully determine the scattering theory of the model. Note that in this context, the models are no longer defined in terms of a Lagrangian, but in the spirit of an inverse scattering problem: The particle spectrum and the two-particle scattering matrix $S$ are seen as inputs to the construction, entering the definition of $\zd$ and $z$. 

We will extract more information on the properties of these observables $A$ by expanding them into a series of normal-ordered monomials in $\zd,z$,
\begin{equation}\label{eq:expansionintro}
A = \sum_{m,n=0}^\infty \int \frac{d\thetav \, d\etav}{m!n!} \cme{m,n}{A}(\thetav,\etav) z^{\dagger}(\theta_1)\cdots z^\dagger(\theta_m) z(\eta_1)\cdots z(\eta_n),
\end{equation}
where we denote $\thetav=(\theta_1,\ldots,\theta_m)$, $\etav=(\eta_1,\ldots,\eta_n)$. This expansion is possible for every operator $A$ in a certain regularity class, independent of its localization properties  \cite{BostelmannCadamuro:expansion}. It is similar but not identical to the well-known form factor expansion, inasmuch as the form factor expansion is based on asymptotic ``free'' creators and annihilators $a^\dagger_\mathrm{in}$, $a_\mathrm{in}$, whereas \eqref{eq:expansionintro} uses the ``interacting'' objects $\zd$, $z$ instead. Correspondingly, the expansion coefficients $\cme{m,n}{A}(\thetav,\etav)$ agree with the form factors for certain regions of the arguments. There is an explicit (if intricate) expression for the  $\cme{m,n}{A}(\thetav,\etav)$ as linear functionals of $A$, and their behaviour under space-time symmetry transformations is known \cite{BostelmannCadamuro:expansion}.

\subsection{Aims and results}

Our aim in the present paper is to gain more insight into the structure of the local operators, complementing the abstract construction. Specifically, we will characterize local operators $A$ in terms of analyticity properties of their expansion coefficients $\cme{m,n}{A}$ in the series expansion \eqref{eq:expansionintro}. This is somewhat similar to the well-known analysis of analyticity of form factors \cite{Smirnov:1992}. However, our results differ from previous work in two essential aspects.

First, they are valid for any local observable $A$, and not restricted to quantum fields that are localized at space-time points. Rather, our analysis is based on the abstract localization region of $A$ only, keeping track of its size and shape. Specifically, we will derive results for left wedges and for double cones.

Second, our characterization accounts for the functional analytic properties of the operators (or quadratic forms) $A$. In particular, the high-energy behavior of $A$ and its influence on the asymptotic growth of the expansion coefficients $\cme{m,n}{A}$ is described in detail.

To illustrate the results, let us first consider the case that $A$ is localized in a left wedge $\leftwedge_r$ with tip at $(0,r)$ on the time-0 axis. 
The left wedge, as opposed to the right wedge, is a natural choice here: one notes that if $A$ is the left field $\phi(x)$, cf.~\eqref{eq:introfields}, then the expansion \eqref{eq:expansionintro} is rather simple and consists of only of two terms; but in the case of the right field, $A=\phi'(x)$, the expansion coefficients have a much more complicated structure, cf.~\cite[Prop.~3.11]{BostelmannCadamuro:expansion}.

Now if a general observable $A$ is localized in $\wcal'_{r}$, one will expect from \cite{Lechner:2008,BostelmannCadamuro:expansion} that the $\cme{m,n}{A}$ are boundary values of a common analytic function, i.e.,
one has
\begin{equation}
   \cme{m,n}{A}(\thetav,\etav)=F_{m+n}(\thetav+i\zerov,\etav+i\piv-i\zerov).
\end{equation}
These $F_k$ fulfill the following properties, which we write somewhat informally for the moment.
\begin{enumerate}[(1)]
 \item They are analytic in the tube region $0 < \im\zeta_1 < \ldots < \im \zeta_k < \pi$.
 \item As a consequence of the Zamolodchikov-Faddeev relations, they are $S$-symmetric, that is, their boundary values at real arguments fulfill
    \begin{equation}\label{eq:ssymmwedge}
       F_k(\theta_1,\ldots,\theta_{j+1},\theta_j,\ldots,\theta_k )  
       =  S(\theta_{j}-\theta_{j+1}) F_k(\theta_1,\ldots,    \theta_{j},\theta_{j+1},\ldots,\theta_k ). 
    \end{equation}
    This is also known as \emph{Watson's equation}.
  \item \label{it:wedgeb1}
          Their boundary values at real arguments, $\thetav \mapsto F_k(\thetav+i \zerov)$, are square integrable.
        (This follows if $A\Omega$ has finite norm, which we shall assume here.)
  \item \label{it:wedgeb2}
     Their growth behaviour at real infinity is essentially given by
    \begin{equation}
         \lvert F(\thetav+i\lambdav) \rvert \sim \prod_{j=1}^k e^{\mu r \cosh \theta_j \sin \lambda_j}. 
    \end{equation}
\end{enumerate}

Our aim is to formulate a full characterization of wedge-local observables, that is, to prove that a suitable variant of  (1)-(4) holds for $F_k$ \emph{if and only if} $A$ is localized in the wedge $\wcal_{r}'$. 
Evidently, the conditions need some mathematical refinement; in particular, the bounds (\ref{it:wedgeb1}) and (\ref{it:wedgeb2}) need to take care of the high-energy behaviour of the observable. A precise version is given as condition (FW) in Def.~\ref{def:conditionFW}. Given these, we can indeed find a full characterization (Thm.~\ref{theorem:wedgeequiv}).

The situation becomes more intricate when localization is restricted to a bounded region, say, a double cone $\ocal_r$ of radius $r>0$ around the origin. From \cite{Lashkevich:1994,SchroerWiesbrock:2000-1}, one expects that the functions $F_k$ behave as follows.

\begin{enumerate}[(1)]
 \item They are meromorphic on all of $\cbb^k$, and analytic on the tube $\im \zeta_1 < \ldots < \im \zeta_k < \im \zeta_1 + 2 \pi$, except for possible first-order poles at $\zeta_n-\zeta_m=i \pi$ (the so-called \emph{kinematic poles}).
 \item In generalization of \eqref{eq:ssymmwedge}, one has for all complex arguments $\zeta_1,\ldots,\zeta_k$,
      \begin{equation}\label{eq:ssymmdc}
          F_k(\zeta_1,\ldots,\zeta_{j+1},\zeta_j,\ldots,\zeta_k )  =  S(\zeta_{j}-\zeta_{j+1}) F_k(\zeta_1,\ldots,\zeta_{j},\zeta_{j+1},\ldots,\zeta_k ). 
      \end{equation}
  \item They are \emph{$S$-periodic}, i.e.,  
       \begin{equation}\label{eq:introperio}
         F_k(\zeta_1,\ldots,\zeta_{k-1},\zeta_k+2 i \pi )  =  F_k(\zeta_k,\zeta_1,\ldots,\zeta_{k-1}),
       \end{equation}
        of which periodicity-like properties in the other variables follow from \eqref{eq:ssymmdc}. 
  \item \label{it:recursion0}
        Their residue at $\zeta_k-\zeta_1=i\pi$ is given by
        \begin{equation}\label{eq:residue1k}
            \res_{\zeta_k-\zeta_1 = i \pi} F_{k}(\zetav)
                =  \frac{1}{2\pi i} \Big(1-\prod_{p=1}^{k} S(\zeta_1-\zeta_p) \Big) F_{k-2}(\zeta_2,\ldots,\zeta_{k-1} );
          \end{equation}
        the residues at the other kinematic poles can again be inferred from \eqref{eq:ssymmdc}.
  \item Again, $\thetav \mapsto F_k(\thetav+i \zerov)$ is square integrable.
  \item Their growth behaviour at real infinity is essentially given by
    \begin{equation}
         \lvert F(\thetav+i\lambdav) \rvert \sim \prod_{j=1}^k e^{\mu r \cosh \theta_j \lvert\sin \lambda_j\rvert}. 
    \end{equation}
\end{enumerate}

The essential new feature are the \emph{recursion relations} (\ref{it:recursion0}), linking $F_k$ to $F_{k-2}$. In particular, they enforce that the sequence $F_k$ cannot terminate, except in very specific cases of constant $S$. For completeness, we note that the $F_k$ can have further singularities on hyperplanes in $\cbb^k$, stemming from poles in the scattering function $S$ ``outside the physical strip''.

As in the wedge-local case, we will formulate a precise variant of these properties as condition (FD) in Def.~\ref{def:conditionFD}, and we can indeed prove a full equivalence of these conditions with locality of $A$ in $\ocal_r$ in  Thm.~\ref{theorem:doubleconeequiv}. An essential ingredient of the proof is that a double cone is the intersection of two wedges, $\ocal_r = \wcal_{-r} \cap \wcal_{r}'$; and $A$ is local in $\ocal_r$ if and only if it is local in both $\wcal_{-r}$ and $\wcal_{r}'$. This makes it useful to analyze the wedge-local case first.

\subsection{Methods}

While the statement of locality conditions above was straightforward but somewhat heuristic, a rigorous formulation  needs careful analysis of the topological properties of all objects involved.

First of all, regarding the observables $A$, one needs to clarify which operator-theoretic class these objects should belong to. In fact, choosing the class of bounded operators (as in \cite{Lechner:2008}) is not useful in our context, since the expansion \eqref{eq:expansionintro} is based on unbounded objects $z(\theta),\zd(\theta)$. We try to be as general as possible, and work within a class of quadratic forms $A$, general enough to include (but not restricted to) smeared Wightman fields. While these quadratic forms can be unbounded, it is important that they cannot be arbitrarily divergent either, in particular with respect to their high-energy behaviour. With applications in mind \cite[Sec.~9]{Cadamuro:2012}, we will aim here for the most singular high-energy behaviour that is still compatible with localization, namely the generalized $H$-bounds proposed by Jaffe \cite{Jaffe:1967}; see Sec.~\ref{sec:jaffe} for details. Along with the choice 
of a class of quadratic forms, we also need to review our notion of locality, since a commutator in the usual sense is not meaningful between quadratic forms. Instead, we will define it by relative locality to the wedge-local fields $\phi$ and $\phi'$ (Sec.~\ref{sec:locality}).

The high energy behavior of the observables $A$ will also be reflected in the growth of the functions $F_k$ at large $|\thetav|$. This requires us to introduce some further norms on their boundary distributions, generalizing the $L^2$ norm (Sec.~\ref{sec:zgen}), before we can formulate the precise locality conditions.  

As an intermediate step between distributions $\cme{m,n}{A}$ and analytic functions $F_k$, we also work with the boundary distributions of $F_k(\zetav)$ at the manifold 
\begin{equation}
 0 = \im\zeta_1 = \ldots = \im\zeta_{j-1}< \im \zeta_j < \im \zeta_{j+1}= \ldots = \im \zeta_k = \pi, \quad j \in\{ 1,\ldots, k\}.
\end{equation}
These distributions will be denoted $T_k(\zetav)$ below. Technically, the $T_k$ are distributions on compactly supported test functions but analytic in one variable $\zeta_j$. We will formalize this in terms of \emph{CR distributions} on certain graph domains, a concept that we explain in Sec.~\ref{sec:graphs}. Locality conditions (TW) and (TD) will be formulated for these $T_k$ as well (Def.~\ref{def:conditionTW} and \ref{def:conditionTD} respectively), and shown to be equivalent to conditions (FW) and (FD) for the meromorphic functions $F_k$.

The remaining paper is organized as follows. We start by introducing our general mathematical setting in Sec.~\ref{sec:preliminaries}, specifying the details of our quantum field theoretical models and recalling the details of the expansion \eqref{eq:expansionintro}. We also develop the necessary technical tools for analytic functions and CR distributions in Sec.~\ref{sec:tools}. The characterization theorem for  local operators is then, for the case of wedge localization, formulated and proved in Sec.~\ref{sec:wedges}, and in extension of those results, for double cones in Sec.~\ref{sec:doublecones}. An animation \videoref{} complements Sec.~\ref{sec:doublecones}. We end with conclusions in Sec.~\ref{sec:conclusion}.

The present article is based in parts on the Ph.D.~thesis of one of the authors \cite{Cadamuro:2012}.

\section{Mathematical setting for integrable QFTs}\label{sec:preliminaries}

We will first fix our mathematical setting and define the models of quantum field theory in question. We mainly follow \cite{Lechner:2008,BostelmannCadamuro:expansion}, and will recall the main results of those papers. 

We consider quantum field theory on 1+1 dimensional Minkowski space, with the indefinite scalar product written as $x \cdot y = x_0 y_0-x_1 y_1$. The models in question are integrable quantum field theories, specified by their particle spectrum and two-particle $S$ matrix. As in \cite{Lechner:2008}, we consider only one species of massive scalar particle, so that the two-particle scattering matrix is just a complex valued function $S$, which enters our construction as a parameter. We will specify the required properties of $S$ and the construction of the associated Hilbert space in Sec.~\ref{sec:scatter}.

Within these models, we are going to deal with (unbounded) quadratic forms with localization properties. We allow these to fulfill  generalized $H$ bounds in the sense of Jaffe \cite{Jaffe:1967}; this concept will be explained in Sec.~\ref{sec:jaffe}. We then define the associated spaces of quadratic forms in Sec.~\ref{sec:zgen}, and more importantly, recall from \cite{BostelmannCadamuro:expansion} that they can be expanded into a series of generalized annihilation and creation operators. Section~\ref{sec:locality} deals with locality properties in the models at hand, recalls Lechner's existence result for local operators \cite{Lechner:2008}, and extends the notion of locality to the level of quadratic forms.
 
\subsection{Scattering function and Hilbert space}\label{sec:scatter}

We first explain the properties of the two-particle scattering function $S$. Let $\strip(a,b)$ denote the strip $a < \im \zeta < b$ in the complex plane. We take $S$ to be an analytic function $S: \strip(0,\pi) \to \cbb$ which has a continuous and bounded extension to the closed strip $\overline{\strip(0,\pi)}$, and which fulfills the symmetry relations
\begin{equation}\label{eq:srelat}
  \forall \theta\in \rbb: \quad S(\theta)^{-1}=S(-\theta)=\overline{S(\theta)}=S(\theta+i \pi).
\end{equation}
Evidently, $S$ is of unit modulus on the lines $\rbb$ and $\rbb+i\pi$. Setting $S(\zeta):=S(\zeta+i\pi)^{-1}$ for $\zeta \in \strip(-\pi,0)$, and using \eqref{eq:srelat} and continuity at the boundary of the strip, we can extend $S$ to a $2 \pi i$-periodic meromorphic function on all of $\cbb$.

Since $S$ has no poles on the real line, its restriction to $\rbb$ is in particular smooth, so that the assumptions of \cite[Sec.~2.1]{BostelmannCadamuro:expansion} are fulfilled. However, compared with \cite{BostelmannCadamuro:expansion} we have added analyticity properties of $S$; these are crucial for describing local observables, as will become clear in Sec.~\ref{sec:locality}. We note that we will not need the additional regularity condition imposed in \cite[Def.~3.3]{Lechner:2008}. On the other hand, we stick to the assumption of \cite{Lechner:2008} that $S$ has no poles in the ``physical strip'' $\strip(0,\pi)$. Examples for scattering functions in our class include
\begin{enumerate}[(i)]
 \item $S(\theta)=1$ (the free field),
 \item $S(\theta)=-1$ (the massive Ising model),
 \item $S(\theta)=\dfrac{\sinh \theta - i a}{\sinh \theta + i a}$ with some $a \in (0,1)$ (the sinh-Gordon model),
 \item $S(\theta)=\exp( i a \sinh \theta)$ with some $a > 0$ (an ``exotic'' $S$-matrix used in \cite{GrosseLechner:2007}).
\end{enumerate}

Associated with $S$ and a permutation $\sigma \in \perms{n}$, we introduce the following function of $n$ variables:
\begin{equation}\label{eq:Sperm}
S^\sigma (\thetav) := \prod_{\substack{i<j \\ \sigma(i)>\sigma(j)}} S(\theta_{\sigma(i)}-\theta_{\sigma(j)}).
\end{equation}
Using these $S^\sigma$, one can define a representation of $\perms{n}$ on $L^2(\rbb^n)$ and on other function spaces \cite[Eq.~(3.5)]{Lechner:2008}. We are particularly interested in functions (or distributions) invariant under this representation; that is, functions $f$ of $n$ variables fulfilling
\begin{equation}\label{eq:ssymm}
\forall \sigma \in \perms{n}: \quad f(\thetav)=S^{\sigma}(\thetav)f(\thetav^{\sigma}),
\end{equation}
where $\thetav^{\sigma}=(\theta_{\sigma(1)},\ldots,\theta_{\sigma(n)})$.
We call these functions \emph{$S$-symmetric}. Since the $S^\sigma$ fulfill a simple composition law \cite[Eq.~(2.3)]{BostelmannCadamuro:expansion}, one knows that $f$ is $S$-symmetric when \eqref{eq:ssymm} is verified only for transpositions $\sigma$.

$S$-symmetric functions are relevant for defining the Hilbert space $\hcal$ of our model. Since we consider models with one species of scalar particle with mass $\mu >0$, the single particle space is given by $\Hil_{1}=L^{2}(\mathbb{R},d\theta)$, as in the real scalar free field in rapidity representation. Defining the $n$-particle space $\Hil_n$ as the subspace of $S$-symmetric functions in $L^2(\rbb^n)$, and setting $\Hil_0=\cbb\Omega$, we define the Hilbert space $\hcal$ as $\Hil:=\bigoplus_{n=0}^{\infty}\Hil_{n}$.
The orthogonal projection onto $\Hil_n\subset \Hil$ will be denoted as $P_n$, and we also consider the projectors $\fpnp_n:=\sum_{j=0}^{n}P_j$. The space of finite particle number states, $\fpn := \bigcup_{n}\fpnp_{n}\Hil$, is dense in $\Hil$.

We further recall the representation of the proper Poincar\'e group acting by (anti)unitary operators on $\hcal$. 
Translations and boosts act on $\psi = \oplus_{n=0}^\infty \psi_n\in\Hil$ as
\begin{equation}\label{eq:translboostdef}
(U(x,\lambda)\psi)_{n}(\thetav):= e^{ i p(\thetav) \cdot x } \psi_{n}(\thetav-\lambdav),
\end{equation}
where $\lambdav = (\lambda,\ldots,\lambda)$ and
\begin{equation}
 p(\thetav) := \sum_{k=1}^{n}p(\theta_{k}), \quad p(\theta):=\mu \begin{pmatrix} \cosh\theta \\ \sinh\theta \end{pmatrix} \quad (\thetav\in\rbb^n,\;\theta\in\rbb).
\end{equation}
The positive generator of time translations will be denoted $H$.
The space-time reflection acts by an antiunitary operator $U(j)=:J$ as
\begin{equation}
(U(j)\psi)_{n}(\thetav):= \overline{\psi_{n}(\theta_{n},\ldots,\theta_{1})}.
\end{equation}

For later reference, we fix the conventions for the Fourier transform of functions $g \in \scal(\rbb^2)$ in momentum and rapidity variables:
\begin{equation}\label{eq:ftg}
\tilde g(p) :=\frac{1}{2\pi}\int dx \, g(x)e^{ip\cdot x}, \quad
 g^{\pm}(\theta) := \tilde g (\pm p(\theta)) .
\end{equation}
Then $g^\pm \in \Hil_1$. We will also define the Zamolodchikov-Faddeev operators $\zd(\theta)$, $z(\theta)$ as ladder operators on $\Hil$ in Sec.~\ref{sec:zgen}, but we first need some functional analytic preparations.

\subsection{High energy behavior} \label{sec:jaffe}

The local observables that we will consider are not necessarily bounded operators, rather we will allow quadratic forms that are unbounded in states of high energy. This situation is common in Wightman quantum field theory, where local quantum fields are necessarily unbounded objects \cite{Wightman:1964}. One often confines attention to fields with polynomial energy bounds, i.e., such that $(1+H)^{-\ell} \phi(x) (1+H)^{-\ell}$ is bounded for some $\ell>0$ \cite{FreHer:pointlike_fields}. This choice is subtly related to the choice of test function space for the quantum fields, which is normally taken to be Schwartz space \cite{StrWig:PCT}. It was pointed out by Jaffe \cite{Jaffe:1967} that this choice is possibly too restrictive, and that there is room for generalization: instead of polynomial growth in energy, one can allow ``almost exponential'' growth like $\exp \omega(E)$ with a function $\omega$ that is almost, but not quite, growing linearly in $E$. (See also \cite{ConstantinescuThaleimer:1974}.) A 
mathematical treatment of the associated classes of test functions and distributions (due to Beurling) is given in \cite{Bjoerck:1965}.

In the present paper, with a view to constructing examples of local operators, we wish to treat as general a class of operators as possible, and will therefore adopt Jaffe's framework with some slight variations. We list the properties that we expect the function $\omega$ (the \emph{indicatrix}) to fulfill.

\begin{definition}\label{def:indicatrix}
An \emph{indicatrix}  is a smooth function $\omega:[0,\infty) \to [0,\infty)$ with the following properties.
\begin{enumerate}
\renewcommand{\theenumi}{($\omega$\arabic{enumi})}
\renewcommand{\labelenumi}{\theenumi}
\item \label{it:omegamonoton}
$\omega$ is monotonously increasing;
\item \label{it:sublinear}
$\omega(p+q)\leq \omega(p)+\omega(q)$ for all $p,q \geq 0$ \emph{(sublinearity)};
\item \label{it:omegagrowth}
$\displaystyle{\int_{0}^{\infty}\frac{\omega(p)}{1+p^{2}}\;dp < \infty}$
\emph{(Carleman's criterion)}.
\end{enumerate}
We call $\omega$ an \emph{analytic indicatrix} if, in addition, there exists a function $\oa$ on the upper half plane $\rbb + i [0,\infty)$, analytic in the interior and smooth at the boundary, such that
\begin{enumerate}
\setcounter{enumi}{3}
\renewcommand{\theenumi}{($\omega$\arabic{enumi})}
\renewcommand{\labelenumi}{\theenumi}

\item \label{it:omegaeven}
$\re\oa(p)=\re\oa(-p)$ for all $p \geq 0$;

\item \label{it:omegaestimate}
There exist $a_\omega,b_\omega>0$ such that $\omega(|z|) \leq \re\oa(z) \leq a_\omega \omega(|z|)+b_\omega$ for all $z \in \rbb+i[0,\infty)$.

\end{enumerate}
\end{definition}

These conditions are stronger than in \cite{BostelmannCadamuro:expansion}, where we required only \ref{it:omegamonoton} and \ref{it:sublinear}; the extra conditions are added for the purpose of describing locality, as will become clear below. Still, the conditions allow for a wide range of examples. One of them, corresponding to the usual polynomial growth behavior in energy, is the following for some $\beta>0$:
\begin{equation}\label{eq:omegalog}
\omega(p) = \beta\log(1+p), \quad
\oa(z)=2\beta\big(\operatorname{Log}(i+z)+1 \big).
\end{equation}
A second class of examples with stronger growth in $p$ is, with $0 < \alpha < 1$,
\begin{equation}\label{eq:omegapower}
\omega(p)= p^{\alpha}\cos \frac{\alpha \pi}{2},\quad
\oa(z)= i^{-\alpha}(z+i)^{\alpha}.
\end{equation}

We will now discuss spaces of functions with support in fixed regions of spacetime and with high-energy behavior controlled by a given indicatrix $\omega$. Let $\ocal$ be an open set in Minkowski space. We set $\dcal(\ocal) := C^{\infty}_{0}(\ocal)$ and
\begin{equation}
 \dcal^\omega(\ocal) := \{ f \in \dcal(\ocal) : \theta \mapsto e^{\omega(\cosh \theta)} f^\pm(\theta) \text{ is bounded and square integrable} \}.
\end{equation}
We don't equip $\dcal^\omega(\ocal)$ with a topology; see however \cite{Bjoerck:1965,ConstantinescuThaleimer:1974} on how to topologize similarly defined spaces. The interesting question for us is the size of $\dcal^\omega(\ocal)$. If $\omega$ is of the form \eqref{eq:omegalog}, or bounded by this, then $e^{\omega(p)}$ is bounded by a power of $p$, and evidently $\dcal^\omega(\ocal) =  \dcal(\ocal)$. For faster growing $\omega$, it is not clear a priori that $\dcal^\omega(\ocal)$ contains any non-zero element. But in fact, it is condition \ref{it:omegagrowth} that guarantees nontriviality: One even finds ``local units'' in $\dcal^\omega(\ocal)$, i.e., functions $f$ with $0 \leq f \leq 1$, with $f=1$ on any given compact set $\kcal\subset \ocal$, and $f=0$ outside any given neighborhood of $\kcal$ (\cite[Theorem~1.3.7]{Bjoerck:1965} holds analogously). By using convolutions with such functions, one finds that $\dcal^\omega(\ocal)$ is actually dense in $\dcal(\ocal)$, in the $\dcal(\ocal)$ topology.

For functions in $\dcal^\omega(\ocal)$, one can derive Paley-Wiener type estimates on their Fourier transform \cite[Sec~1.4]{Bjoerck:1965}. We use the following variant in our context, where the Fourier transform is defined as in \eqref{eq:ftg}, and where the region $\ocal$ is specifically the standard right wedge, 
\begin{equation}\label{eq:rightwedge}
  \rightwedge = \{ x \in \rbb^2: x_1>|x_0| \}.
\end{equation}

\begin{proposition}\label{prop:omegapw}
Let $\omega$ be an analytic indicatrix and $f \in \dcal^\omega(\wcal)$. Then $ f^-$ extends to an analytic function on the strip $\strip(0,\pi)$, continuous on its closure, and one has $f^-(\theta+i\pi)=f^+(\theta)$. For fixed $\ell\in\nbb_0$, there exists $c>0$ such that
\begin{equation}\label{eq:fmstrip}
  \Big\lvert \frac{d^\ell f^-}{d\zeta^\ell} (\theta+ i \lambda) \Big\rvert \leq c (\cosh \theta)^\ell e^{-\omega(\cosh \theta)/a_\omega}
\quad \text{for all }\theta \in \rbb, \; \lambda \in [0,\pi].
\end{equation}
\end{proposition}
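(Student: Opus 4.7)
The plan is to establish the case $\ell=0$ via an analytic majorant built from $\oa$ plus Phragm\'en--Lindel\"of, and then extract the derivative estimates via Cauchy's integral formula. The analytic extension and boundary identity are essentially free: compact support of $f$ makes $\tilde f$ entire, so $f^-(\zeta):=\tilde f(-p(\zeta))$ extends to an entire function, and the relations $\cosh(\zeta+i\pi)=-\cosh\zeta$, $\sinh(\zeta+i\pi)=-\sinh\zeta$ give $p(\zeta+i\pi)=-p(\zeta)$ and hence $f^-(\theta+i\pi)=\tilde f(p(\theta))=f^+(\theta)$.

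The key geometric observation for the decay bound is that $\sinh$ sends $\overline{\strip(0,\pi)}$ into the closed upper half plane, since $\im\sinh(\theta+i\lambda)=\cosh\theta\sin\lambda\ge 0$, with both horizontal boundaries mapped into $\rbb$. This lets me define $\Phi(\zeta):=\oa(\sinh\zeta)/a_\omega$, analytic on the open strip and smooth on its closure. Combining the elementary inequality $|\sinh\zeta|\le\cosh\theta\le|\sinh\zeta|+1$ for $\zeta\in\overline{\strip(0,\pi)}$ with \ref{it:omegaeven}, \ref{it:omegaestimate}, and the sublinearity of $\omega$, I would derive a two-sided bound
\[
  \omega(\cosh\theta)/a_\omega - c_1 \;\le\; \re\Phi(\zeta) \;\le\; \omega(\cosh\theta)+c_1
\]
on the closed strip, for some $c_1>0$. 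The hypothesis $|f^\pm(\theta)|\le Me^{-\omega(\cosh\theta)}$ together with the upper bound then make the entire function $H(\zeta):=f^-(\zeta)e^{\Phi(\zeta)}$ bounded by $Me^{c_1}$ on both horizontal boundary lines.

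In the interior, only the crude Paley--Wiener estimate $|f^-(\zeta)|\le C\exp(R\mu\sin\lambda\cosh\theta)$ is at hand from the compact support of $f$, so $|H|$ grows at worst as $\exp(D\cosh\theta)$, precisely the borderline rate for Phragm\'en--Lindel\"of on a strip of width $\pi$. The plan to circumvent this is to apply the principle first on a sub-strip $\strip(\delta,\pi-\delta)$, where the admissible rate strictly exceeds $1$, and then pass to the limit $\delta\to 0$ by continuity of $H$ on $\overline{\strip(0,\pi)}$; here Carleman's condition \ref{it:omegagrowth}, which forces $\omega(p)=o(p)$, is essential to keep the constants under control in the limit. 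Inverting the lower bound on $\re\Phi$ then yields $|f^-(\zeta)|\le c\,e^{-\omega(\cosh\theta)/a_\omega}$, establishing the case $\ell=0$.

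For $\ell\ge 1$ I would apply Cauchy's formula on a circle of radius $1/\cosh\theta$ centered at $\zeta$: the prefactor $\ell!/r^\ell$ supplies the $(\cosh\theta)^\ell$ factor, while the $\ell=0$ estimate together with sublinearity of $\omega$ (so that $\omega(\cosh\theta')\ge\omega(\cosh\theta)-O(1)$ on the circle) preserves the decay $e^{-\omega(\cosh\theta)/a_\omega}$. For $\zeta$ within $1/\cosh\theta$ of a horizontal boundary, where this disc would leave the strip, one substitutes a rectangular contour in $\overline{\strip(0,\pi)}$ and uses the decay of $f^\pm$ along the boundary provided directly by the hypothesis. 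The main technical obstacle throughout is the borderline Phragm\'en--Lindel\"of step; once that is cleared, the derivative argument is standard Cauchy estimation.
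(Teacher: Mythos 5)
Your overall architecture matches the paper's (auxiliary function $f^{-}(\zeta)e^{\oa(\sinh\zeta)/a_\omega}$, Phragm\'en--Lindel\"of, then Cauchy estimates for $\ell\geq 1$), but there is a genuine gap at the Phragm\'en--Lindel\"of step. You bound $f^-$ in the interior using only compact support of $f$, which puts $|H|$ exactly at the critical growth rate $\exp(D\cosh\theta)$ for a strip of width $\pi$ — and at the critical rate the principle genuinely fails, so this must be avoided rather than finessed. Your proposed circumvention does not work: applying the maximum principle on the sub-strip $\strip(\delta,\pi-\delta)$ requires a bound for $H$ on the lines $\im\zeta=\delta$ and $\im\zeta=\pi-\delta$, which is precisely what you are trying to prove; the only available bound there is the crude $\exp(D\cosh\theta)$, and continuity of $H$ on the closed strip does not control $\sup_\theta|H(\theta+i\delta)|$ as $\delta\to 0$. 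The missing ingredient is that $f$ is supported not merely compactly but in the \emph{right wedge} $\wcal$: for $x\in\wcal$ and $\zeta\in\overline{\strip(0,\pi)}$ one has $\im\bigl(-p(\zeta)\cdot x\bigr)=\mu\sin\lambda\,(x_1\cosh\theta-x_0\sinh\theta)\geq 0$, so $|f^-(\zeta)|\leq\frac{1}{2\pi}\|f\|_1$ on the whole closed strip. Then $|H|$ grows only like $\exp(\omega(\cosh\theta))=\exp(o(\cosh\theta))$ by Carleman's criterion, strictly below the critical rate, and the standard Phragm\'en--Lindel\"of argument applies directly. This is how the paper proceeds.

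A secondary problem is your treatment of $\ell\geq 1$ near the boundary lines: a rectangular contour contained in $\overline{\strip(0,\pi)}$ cannot enclose a point with $\lambda=0$ or $\lambda=\pi$, and the hypothesis $f\in\dcal^\omega(\wcal)$ controls $f^\pm$ on the boundary but not their derivatives, so you cannot read the derivative bound off the boundary data. The paper's fix is to first extend the $\ell=0$ estimate from $\overline{\strip(0,\pi)}$ to $\strip(-\pi,2\pi)$: choosing $s>0$ with $\supp f\subset\wcal'+(0,s)$ and setting $h(x)=f((0,s)-x)\in\dcal^\omega(\wcal)$, one gets $f^-(\zeta'\pm i\pi)=e^{i\mu s\sinh\zeta'}h^-(\zeta')$, hence a bound $c\,e^{-\omega(\cosh\theta')/a_\omega}e^{\mu s\cosh\theta'|\sin\lambda'|}$ on the extended strip; with the Cauchy radius $t=1/\cosh\theta$ the extra exponential stays bounded and the argument closes uniformly for all $\lambda\in[0,\pi]$.
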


\begin{proof}
 Since $f$ has compact support, $\tilde f$ and $f^\pm$ are actually entire, and the relation $f^-(\zeta \pm i\pi)=f^+(\zeta)$ follows by direct computation. We first prove the bound \eqref{eq:fmstrip} in the case $\ell=0$. To that end, we consider the function $g$ on $\strip(0,\pi)$ defined by
\begin{equation}
   g(\zeta) := f^-(\zeta) e^{\oa(\sinh \zeta)/a_\omega}.
\end{equation}
(Note that $\sinh(\cdotarg)$ maps the strip into the upper half plane.)
For $\zeta=\theta+i\lambda$ in the closed strip, one has
\begin{equation}\label{eq:omegacd}
 \re \oa(\sinh \zeta)/a_\omega \leq \omega( \lvert\sinh \zeta\rvert)+b_\omega/a_\omega \leq \omega(\cosh \theta)+b_\omega/a_\omega,
\end{equation}
where \ref{it:omegaestimate} and \ref{it:omegamonoton} have been used.
Since $f \in \dcal^\omega(\rightwedge)$, it follows that
\begin{equation}
   \sup_{\theta\in\rbb} |g(\theta)| \leq e^{b_\omega/a_\omega} \sup_{\theta\in\rbb} |e^{\omega(\cosh\theta)}f^{-}(\theta)|  < \infty.
\end{equation}
That is, $g$ is bounded on $\rbb$, and by a similar computation involving $f^+$ and \ref{it:omegaeven}, it is bounded on the line $\rbb+i\pi$ as well. In the interior of the strip, we know that $f^-(\zeta)$ is bounded since $\supp f \subset \rightwedge$; therefore,
\begin{equation}\label{eq:gstripest}
  \lvert g(\theta+i\lambda) \rvert  \leq e^{\omega(\cosh \theta)+b_\omega/a_\omega} \sup_{\zeta'\in\strip(0,\pi)} |f^- (\zeta') |,
\end{equation}
where \eqref{eq:omegacd} has been employed. We note that $\omega(p)=o(p)$ as $p \to \infty$ due to \ref{it:omegagrowth}, \ref{it:omegamonoton}. Using this in \eqref{eq:gstripest}, we can employ a Phragm\'en-Lindel\"of argument to show that $g$ is actually bounded on the strip, and takes its maximum at the boundary. (We can apply \cite[Theorem~3]{HardyRogosinski:1946} to the subharmonic function $\log |g|$.) In other words,
\begin{equation}\label{eq:fmbound}
  \lvert f^-(\zeta) \rvert \leq c \, \lvert e^{-\oa(\sinh \zeta)/a_\omega} \rvert\quad \text{for all } \zeta \in \strip(0,\pi)
\end{equation}
with some $c>0$. We estimate
\begin{equation}
  \re \oa(\sinh \zeta) \geq \omega(|\sinh \zeta|) \geq \omega (\cosh \theta - 1) \geq \omega(\cosh\theta)-\omega(1),
\end{equation}
where \ref{it:omegaestimate}, \ref{it:omegamonoton}, \ref{it:sublinear} have been used. Inserted into \eqref{eq:fmbound}, this gives \eqref{eq:fmstrip} for $\ell=0$.

For $\ell > 0$, we proceed as follows. Fix some $\zeta =\theta+i\lambda \in \overline{\strip(0,\pi)}$. We estimate the derivative of $f$ at $\zeta$ using Cauchy's formula: For any $0 < t < \pi$,
\begin{equation}\label{eq:cauchyest}
   \Big\lvert \frac{d^\ell f^-}{d\zeta^\ell} (\zeta) \Big\rvert \leq \ell! \,t^{-\ell} \sup_{|\zeta-\zeta'|=t} |f^-(\zeta')|.
\end{equation}
Note that here $\zeta'$ lies in $\strip(-\pi,2\pi)$ but not necessarily in $\strip(0,\pi)$, so that we will need to obtain estimates for $f^-$ on this extended strip. To that end, choose $s>0$ such that $ \supp f \subset \wcal_s'  = \wcal'+(0,s)$, and set $h(x):=f((0,s) -x)$, noting that
\begin{equation}\label{eq:fhexp}
    f^-(\zeta' \pm i \pi) = e^{i \mu s \sinh \zeta'} h^-(\zeta') .
\end{equation}
Since $h \in \dcal^\omega(\rightwedge)$, the result \eqref{eq:fmstrip} for $\ell=0$ applies to both $h$ and $f$. From \eqref{eq:fhexp} it then follows that
\begin{equation}
 \forall \zeta' =\theta' + i \lambda' \in \strip(-\pi,2\pi):\quad \lvert f^-(\zeta')\rvert \leq
c  \,e^{-\omega(\cosh \theta')/a_\omega} \,e^{\mu s \cosh \theta' \lvert \sin \lambda' \rvert}
\end{equation}
with some $c>0$. Inserting into \eqref{eq:cauchyest}, this yields for large $\theta$,
\begin{equation}
   \Big\lvert \frac{d^\ell f^-}{d\zeta^\ell} (\zeta) \Big\rvert
  \leq c\, \ell! \, t^{-\ell} \,e^{\mu s t \cosh (\theta+t) } e^{-\omega(\cosh(\theta-t))/a_\omega}.
\end{equation}
We now choose $t = 1/\cosh\theta$. With this $t$, we can find $c'>0$ such that $\cosh(\theta-t) \geq \cosh \theta - c'$, $\cosh (\theta+t) \leq \cosh \theta + c'$ for all $\theta$. Employing \ref{it:sublinear}, we obtain a constant $c''>0$ such that
\begin{equation}
   \Big\lvert \frac{d^\ell f^-}{d\zeta^\ell} (\zeta) \Big\rvert
  \leq c'' (\cosh\theta)^{\ell} e^{-\omega(\cosh\theta)/a_\omega}.
\end{equation}
For large $-\theta$, the computation is analogous. This yields \eqref{eq:fmstrip}.
\cmpqed\end{proof}

\subsection{Quadratic forms and operator expansions}\label{sec:zgen}

We will now define a space of quadratic forms on (dense subsets of) $\Hil$ and other structures related to a fixed indicatrix $\omega$, and we will recall how these quadratic forms can be expanded in Zamolodchikov-Faddeev operators as in \eqref{eq:expansionintro}. In most parts, this summarizes the setting and main results of \cite{BostelmannCadamuro:expansion}.

We first introduce some norms on distributions,\footnote{%
As in \cite{BostelmannCadamuro:expansion}, we will usually write all distributions as integrals of formal kernels. This is convenient since many of them will actually be boundary values of analytic functions. We will also often denote these kernels like maps, e.g., $\thetav \mapsto f(\thetav)$, although they are of course not maps in the strict sense; the purpose is to indicate the ``formal integration variables'' where needed.}
relating to the indicatrix $\omega$. 
For $f \in \dcal(\rbb^{m+n})'$, we set
\begin{align}\label{eq:crossnorm}\notag
\gnorm{f}{m \times n} &:= \sup \Big\{ \big\lvert \! \int f(\thetav,\etav) g(\thetav) h(\etav)  d\thetav d\etav \,\big\rvert : \\ 
&\qquad \qquad \qquad g \in \dcal(\rbb^m), \, h \in \dcal(\rbb^n), \,  \gnorm{g}{2} \leq 1, \, \gnorm{h}{2} \leq 1\Big\},
\\ \notag
\onorm{f}{m \times n} &:= \frac{1}{2} \gnorm{(\thetav,\etav) \mapsto e^{-\omega(E(\thetav))}f(\thetav,\etav)}{m \times n} 
\\ \label{eq:omegacrossnorm} &\qquad \qquad \qquad + \frac{1}{2} \gnorm{(\thetav,\etav) \mapsto f(\thetav,\etav)e^{-\omega(E(\etav))}}{m \times n}
\end{align}
where these expressions are finite; here  $E(\thetav) := p_0(\thetav)/\mu$.
We also consider the related, but $\omega$-independent norm for $f \in \dcal(\rbb^k)$,
\begin{equation}\label{eq:fullcrossnorm}
  \gnorm{f}{\times} := \sup\big\{ \Big\lvert  \int d\thetav f(\thetav) g_1(\theta_1)\cdots g_k(\theta_k) \Big\rvert : g_1,\ldots,g_k \in \dcal(\rbb), \, \gnorm{g_j}{2} \leq 1 \big\}.
\end{equation}
We note the computational rule \cite[Lemma~2.8]{Cadamuro:2012}
\begin{equation}
 \label{eq:crossnormfactor}
  \gnorm{\thetav \mapsto f(\thetav) \prod_j f_j(\theta_j)}{\times} \leq \gnorm{f}{\times} \prod_j \gnorm{f_j}{\infty}.
\end{equation}
For test functions $g \in \dcal(\rbb^k)$, we also set
\begin{equation}
  \onorm{g}{2} := \gnorm{\thetav \mapsto e^{\omega(E(\thetav))}g(\thetav)}{2}. 
\end{equation}
Further, we repeat the relevant $\omega$-related structures on the Hilbert space level. We denote $\Hil^\omega := \{\psi \in \Hil : \|e^{\omega(H/\mu)} \psi\| < \infty\}$. For $n\in\nbb_0$, we write $\Hil^\omega_n := \Hil^\omega \cap \Hil_n$, and $\fpno := \cup_n \Hil^{\omega}_n $; the latter space is dense in $\Hil$. 
By $\qf^\omega$, we denote the space of sesquilinear forms $A : \fpno\times\fpno \to \cbb$ such that for any $n \in \nbb_0$,
\begin{equation}\label{eq:aomeganorm}
 \gnorm{A}{n}^{\omega} := \frac{1}{2} \gnorm{\fpnp_n A e^{-\omega(H/\mu)}\fpnp_n}{} + \frac{1}{2} \gnorm{\fpnp_n e^{-\omega(H/\mu)} A \fpnp_n}{} < \infty.
\end{equation}

We also recall the representation of the Zamolodchikov-Faddeev algebra that underlies our models \cite{Lechner:2008}.  We define modified creation and annihilation operators $\zd(\theta)$, $z(\eta)$ by their action on $\psi \in \Hil_n \cap \dcal(\rbb^n)$ as
\begin{equation}
\begin{aligned}
(\zd(\theta)\psi)(\lambdav) &= \frac{\sqrt{n+1}}{(n+1)!} \,\sum_{\sigma\in\perms{n+1}} S^\sigma(\lambdav) \delta(\theta-\lambda_{\sigma(1)})\psi(\lambda_{\sigma(2)},\ldots,\lambda_{\sigma(n+1)}),\\
(z(\eta) \psi)(\lambdav) &= \sqrt{n} \,\psi(\eta,\lambdav).
\end{aligned}
\end{equation}
They fulfill the Zamolodchikov relations
\begin{equation}
\begin{aligned}
\zd(\theta)\zd(\theta') &= S(\theta-\theta')\zd(\theta')\zd(\theta),\\
z(\eta)z(\eta') &= S(\eta-\eta')z(\eta')z(\eta),\\
z(\eta)\zd(\theta) &= S(\theta-\eta)\zd(\theta)z(\eta) + \delta(\theta-\eta) \idop.
\end{aligned}
\end{equation}
More precisely, the ``smeared'' operators $z^\#(f) = \int d\theta f(\theta) z^\#(\theta)$ are operator-valued distributions. For fixed $f$, the $\zd(f)$, $z(f)$ are unbounded operators on $\fpn$, but they fulfill the bounds
\begin{equation}\label{omegaz}
\begin{aligned}
 \gnorm{ e^{\omega(H/\mu)} \zd(f) e^{-\omega(H/\mu)} \fpnp_n }{}
  &\leq \sqrt{n+1} \onorm{f}{2},
\\
 \gnorm{ e^{\omega(H/\mu)} z(f) e^{-\omega(H/\mu)} \fpnp_n }{}
  &\leq \sqrt{n} \onorm{f}{2}.
\end{aligned}
\end{equation}
We showed in \cite[Prop.~2.1]{BostelmannCadamuro:expansion} that one can define multilinear extensions of ``normal ordered monomials'' of these annihilators and creators, $\zd(\theta_1)\ldots \zd(\theta_m)z(\eta_1)\ldots(\eta_n)$. Namely, if $f \in \dcal(\rbb^{m+n})'$ with $\onorm{f}{m \times n} < \infty$, then
\begin{equation}
  z^{\dagger m}z^n(f) =  \int d\thetav \, d\etav\, f(\thetav,\etav) \, \zd(\theta_1)\ldots \zd(\theta_m)z(\eta_1)\ldots z(\eta_n)
\end{equation}
is a well-defined quadratic form in $\qf^\omega$.
The importance of these expressions lies in the fact that \emph{any} quadratic form can be expanded into a series of such monomials $z^{\dagger m}z^n(f)$. We summarize this result, see \cite[Sec.~3]{BostelmannCadamuro:expansion} for details.

\begin{theorem}\label{theorem:expansion}
 For any $m,n\in\nbb_0$, let $f_{m,n} \in \dcal(\rbb^{m+n})'$ with $\onorm{f_{m,n}}{m \times n} < \infty$. Then, there is a unique $A \in \qf^\omega$ such that
\begin{equation}\label{eq:shortexp}
   A = \sum_{m,n} \frac{1}{m!n!} z^{\dagger m}z^n(f_{m,n}).
\end{equation}
Conversely, given $A \in \qf^\omega$, there are unique $f_{m,n} \in \dcal(\rbb^{m+n})'$  that are $S$-symmetric (in the first $m$ and last $n$ variables separately) and fulfill $\onorm{f_{m,n}}{m \times n} < \infty$ such that \eqref{eq:shortexp} holds.
\end{theorem}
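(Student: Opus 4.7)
This statement is essentially imported from our earlier paper, so the plan is to verify both directions of the correspondence. The forward direction (existence of $A$ given the $f_{m,n}$) is the easier one. The key observation is that the normal-ordered monomial $z^{\dagger m}z^n(f_{m,n})$ maps $\Hil_n^\omega$ into $\Hil_m^\omega$ and annihilates $\Hil_k^\omega$ for $k<n$. Consequently, when testing the formal series \eqref{eq:shortexp} against vectors $\psi_p \in \fpnp_p\Hil$, $\psi_q \in \fpnp_q\Hil$, only the finitely many terms with $m\leq p$ and $n\leq q$ contribute. Hence there is no genuine convergence issue at the level of sesquilinear forms; one only needs to verify that each individual matrix element is finite and that the global bound \eqref{eq:aomeganorm} holds. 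Both follow by applying the creator/annihilator estimates \eqref{omegaz} iteratively: testing against $g \in \dcal(\rbb^p)$, $h\in \dcal(\rbb^q)$ and exploiting the definition \eqref{eq:crossnorm} of the cross norm, one obtains a bound proportional to $\sqrt{p!\,q!}\,\onorm{f_{p,q}}{p\times q}\gnorm{g}{2}\gnorm{h}{2}$, which gives the required finiteness of $\gnorm{A}{n}^\omega$.

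For the converse direction (existence and uniqueness of the $f_{m,n}$ given $A$), the natural strategy is an inductive \emph{extraction} procedure. Define $f_{m,n}$ through the matrix elements
\[
\bighscalar{\zd(\theta_1)\cdots \zd(\theta_m)\Omega}{A\, \zd(\eta_1)\cdots \zd(\eta_n)\Omega}
\]
and invert the resulting lower-triangular system (in $m+n$) generated by the Zamolodchikov-Faddeev contraction $z\zd = S \zd z + \delta$, which plays the role of a Wick theorem in our setting. Peeling off one creator or annihilator at a time and using the already-constructed $f_{m',n'}$ with $m'+n'<m+n$ yields an explicit, recursive formula for $f_{m,n}$. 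The $S$-symmetry is then automatic: the vectors $\zd(\theta_1)\cdots \zd(\theta_m)\Omega$ lie in the $S$-symmetric subspace $\Hil_m$, so only the $S$-symmetric part of $f_{m,n}$ can be recovered, and this is precisely what our formula produces. Uniqueness follows in parallel: if two expansions of the same $A$ existed, their difference would satisfy $\sum_{m,n} z^{\dagger m}z^n(\Delta f_{m,n})=0$, and testing against the standard basis vectors shows $\Delta f_{m,n}=0$ term by term.

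The main technical obstacle is bounding the extracted $f_{m,n}$ in the cross norm \eqref{eq:omegacrossnorm}. Unlike $L^2$-type bounds, the cross norm requires controlling separate tests against $g(\thetav)$ and $h(\etav)$, which forces one to re-express the multi-commutator formula for $f_{m,n}$ in a form where $g$ and $h$ couple only to creator/annihilator products on opposite sides of $A$. Here the $\omega$-twisted Hilbert-space bound \eqref{eq:aomeganorm} is essential: it allows the exponential factor $e^{-\omega(E(\cdot))}$ in \eqref{eq:omegacrossnorm} to be absorbed into the energy damping $e^{-\omega(H/\mu)}$ appearing in the definition of $\gnorm{A}{n}^\omega$, while the contraction corrections from $z\zd = S\zd z + \delta$ — containing lower-order $f_{m',n'}$ against contracted $\delta$-functions — are estimated inductively via \eqref{eq:crossnormfactor}. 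Combining these ingredients yields a bound of the form $\onorm{f_{m,n}}{m\times n} \leq C_{m,n}\, \gnorm{A}{m+n}^\omega$, completing the proof.
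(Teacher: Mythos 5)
Your proposal is correct and follows essentially the same route as the source: the paper itself only cites \cite{BostelmannCadamuro:expansion} for the proof, but the closed-form contraction formula \eqref{eq:fmndef} quoted in the text is precisely the solved form of the triangular extraction you describe (inverting $z\zd = S\zd z+\delta$ term by term), and your treatment of the forward direction, $S$-symmetry, uniqueness, and the cross-norm bound $\onorm{f_{m,n}}{m\times n}\leq C_{m,n}\gnorm{A}{m+n}^\omega$ all match the cited argument. The only caveat is that the hardest step — estimating the $\delta_C S_C$-contracted terms of \eqref{eq:fmndef} in the $m\times n$ cross norm — is sketched rather than carried out, but the ingredients you name (\eqref{eq:aomeganorm}, \eqref{omegaz}, \eqref{eq:crossnormfactor}) are the right ones.
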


The ``expansion coefficients'' $f_{m,n}$ depend linearly on $A$; we will denote them as $\cme{m,n}{A}$ in the following. They can be expressed as matrix elements of $A$ in a very explicit way, which we will now recall.

As in \cite{BostelmannCadamuro:expansion}, a \emph{contraction} $C$ is defined as a triple $C= (m,n,\{(l_1,r_1),\ldots,(l_{|C|},r_{|C|})\})$, where $m,n\in\nbb_0$, $1 \leq l_j \leq m$, $m+1 \leq r_j \leq m+n$. We denote $\ccal_{m,n}$ the set of contractions with fixed $m$ and $n$. Associated to a fixed contraction $C$, we consider the quantities

\begin{align}
\label{eq:lcvector}
   \lvector{C}{\thetav} &:= \zd(\theta_1) \cdots \widehat{\zd(\theta_{l_1})} \cdots \widehat{\zd(\theta_{l_{|C|}})} \cdots \zd(\theta_m) \Omega,
\\
   \rvector{C}{\etav} &:= \zd(\eta_n) \cdots \widehat{\zd(\eta_{r_1-m})} \cdots \widehat{\zd(\eta_{r_{|C|}-m})} \cdots \zd(\eta_1) \Omega,
\\
 \delta_{C}(\thetav,\etav) &:= \prod_{j=1}^{|C|}\delta(\theta_{l_j}-\eta_{r_j-m}), \label{eq:deltac}
\\
 S_C(\thetav,\etav) &:= \Big(\prod_{j=1}^{|C|}\prod_{p_{j}=l_{j}+1}^{r_{j}-1}S^{(m)}_{p_{j},l_{j}}\Big) \prod_{\substack{r_{i}<r_{j} \\ l_{i}<l_{j}}} S^{(m)}_{l_{j},r_{i}},\label{eq:sc}
\\
 R_C(\thetav,\etav) &:= \prod_{j=1}^{|C|} \Big(1-\prod_{p_j=1}^{m+n} S^{(m)}_{l_j,p_j}(\thetav,\etav) \Big),\label{eq:rc}
\end{align}
where the hat marks elements that are left out of the sequence, and where $S^{(m)}_{p,q}(\xiv)=S(\xi_p-\xi_q)$ if $p,q \leq m$ or $p,q>m$, and $S^{(m)}_{p,q}(\xiv)=S(\xi_q-\xi_p)$ otherwise.
With these notions, we have \cite[Eq.~(3.16)]{BostelmannCadamuro:expansion}
\begin{equation}\label{eq:fmndef}
\cme{m,n}{A}(  \thetav,  \etav) =
\sum_{C\in \ccal_{m,n}} (-1)^{|C|} \delta_C \, S_C (\thetav,\etav) \,
\hscalar{ \lvector{C}{\thetav} }{ A \,\rvector{C}{\etav} }.
\end{equation}
Using this explicit expression, one can prove directly that the $\cme{m,n}{A}$ are $S$-symmetric in $\thetav$ and in $\etav$, that $\onorm{\cme{m,n}{A}}{m \times n}<\infty$ if $A \in \qf^\omega$, and one can compute the behavior of $\cme{m,n}{A}$ if $A$ is subjected to translations, boosts and space-time reflections. Details can be found in \cite[Sec.~3]{BostelmannCadamuro:expansion}, and we will refer to there directly when we use those results in the present paper.

\subsection{Locality}\label{sec:locality}

We now come to the description of \emph{locality} in our models, i.e., to observables associated with certain regions of spacetime. The regions of interest are, first of all, wedges: the standard right wedge 
$\rightwedge$ as defined in \eqref{eq:rightwedge}; its causal complement, the standard left wedge $\leftwedge$; and their translates, $\wcal_x := \rightwedge + x$  and $\wcal_y' := \leftwedge + y = (\wcal_y)'$ ($x,y \in \rbb^2$). 
Further, we will consider the double cone $\mathcal{O}_{x,y}=\mathcal{W}_{x}\cap \mathcal{W}_{y}'$, with $x,y \in \mathbb{R}^{2}$, $y-x\in \mathcal{W}$.
In particular, we are interested in the double cone of radius $r>0$ centered at the origin, defined as
$\mathcal{O}_{r}=\mathcal{W}_{-r}\cap \mathcal{W}_{r}'$, where $\wcal_{r}:=\wcal_{(0,r)}$ etc.

As suggested by Schroer \cite{Schroer:1997-1}, we take our basic observables to be localized in wedge regions (say, in $\rightwedge$) rather than in bounded regions, since wedge-local observables are easier to describe. More precisely, we define a quantum field $\phi$ as
\begin{equation}
\phi(f):= \zd(f^{+}) + z(f^{-}), \quad f\in \mathcal{S}(\mathbb{R}^{2}).
\end{equation}
This is very similar to the free scalar field, and in fact, it reduces to the free field if $S=1$. In the general case, the field $\phi$ still retains a number of properties that are known from the free field \cite{Lechner:2003}: It is defined on $\fpn$, and essentially selfadjoint for real-valued $f$. Moreover, $\phi$ has the Reeh-Schlieder property, transforms covariantly under the representation $U(x,\lambda)$ of the proper orthochronous Poincar\'e group, and it solves the Klein-Gordon equation.

However, for generic $S$, the field $\phi(x)$ is not localized at the space-time point $x$ in the usual sense. Rather, we can understand $\phi(x)$ as localized in the infinitely extended wedge $\wcal_x'$ in the following way. Let us introduce the ``reflected'' Zamolodchikov operators, 
\begin{equation}
z(\theta)':=Jz(\theta)J, \quad \zd(\theta)':= J \zd(\theta)J,
\end{equation}
and define another field $\phi'$ as, $f\in \mathcal{S}(\mathbb{R}^{2})$,
\begin{equation}
\phi'(f):=J\phi(f^{j})J,\quad f^{j}(x):=\overline{f(-x)}.
\end{equation}
The two fields $\phi, \phi'$ are now relatively \emph{wedge-local} \cite[Prop.~2]{Lechner:2003}: For real-valued test functions $f,g$ with $\supp f \subset \mathcal{W}'$ and $\supp g \subset \mathcal{W}$, one finds that the closures $\phi(f)^-$ and $\phi'(g)^-$ spectrally commute. (This is obtained by computing the commutation relations of $z,z^\dagger$ with $z', z^{\dagger \prime}$; we will recall these later in Sec.~\ref{sec:fw-to-aw}.) Hence, we can understand $\phi(x)$ and $\phi'(y)$ as being localized in the shifted left wedge $\mathcal{W}'_{x}$ and in the shifted right wedge $\mathcal{W}_{y}$, respectively.

Instead of working with unbounded (closed) operators, it is often convenient to work with associated algebras of bounded operators. We consider the following von Neumann algebra:
\begin{equation}
\M = \{e^{i\phi(f)^-} \,|\, f \in \scal_\rbb(\rbb^2), \, \supp f \subset \wcal' \}'
= \{e^{i\phi(f)^-} \,|\, f \in \dcal_\rbb^\omega(\wcal') \}'.
\end{equation}
(The subscript $_\rbb$ means the restriction to real-valued functions, and the second equality follows due to continuity of the map $\scal(\rbb^2) \to \boundedops$, $f \mapsto \exp i \phi(f)^-$ in the strong operator topology, cf.~\cite[Prop.~5.2.4]{BraRob:qsm2}.)
As shown in \cite[Thm.~3.2]{Lechner:2008}, we can consistently interpret the algebra $\M$ as localized in the right wedge $\rightwedge$, so that the canonically defined algebras for other wedges,
\begin{equation}
  \A(\wcal_x) := U(x,0) \M U(x,0)\st, \quad  \A(\wcal_y') := J \A(\wcal_{-y}) J,
\end{equation}
fulfill causal commutation relations. We can then define the algebra for a double cone $\ocal_{x,y}= \wcal_x \cap \wcal_y'$ by intersection,
\begin{equation}\label{eq:aodef}
\A(\mathcal{O}_{x,y}):= \A(\wcal_x) \cap  \A(\wcal_y'),
\end{equation}
and for other bounded regions $\ocal$ by additivity. The resulting map $\ocal \mapsto \A(\ocal)$ then fulfills all standard axioms of the algebraic approach to quantum field theory \cite[Sec.~2]{Lechner:2008}.

It is not \emph{a priori} clear that the algebras $\A(\ocal)$ are nontrivial for bounded regions $\ocal$, i.e., that they contain any operator except for multiples of the identity. However, Lechner proved \cite{Lechner:2008} that at least for $\ocal$ of a certain minimum size, and under a mild regularity condition for the scattering function $S$, the vacuum vector $\Omega$ is indeed cyclic for $\A(\ocal)$, of which it follows that the algebras are type $\mathrm{III}_1$ factors \cite{BuchholzLechner:2004}. This is important in our context, since it is (to the best of our knowledge) the only existence result so far for compactly localized observables in the models at hand, as long as $S$ is not constant.

In this paper, we will discuss locality properties in terms of the decomposition in Thm.~\ref{theorem:expansion}, and therefore, on the level of quadratic forms. We need a notion of locality that is applicable to such quadratic forms $A \in \qf^\omega$. This will be defined relative to the wedge-local fields $\phi,\phi'$, that is, based on commutators of $A$ with these fields. 

To that end, we first need to clarify in which sense these commutators are defined.
If $f \in \dcal^\omega(\rbb^2)$, then $\zd(f^+)$ maps $\fpno$ into $\fpno$, so that $A\zd(f^+)$ is well-defined as a quadratic form; indeed, one finds from \eqref{eq:aomeganorm}, \eqref{omegaz} that $\onorm{A\zd(f^+)}{k} \leq \sqrt{k+1}\onorm{f^+}{2}\onorm{A}{k+1}$. 
Analogously, the product $\zd(f^+)A$, and products of $A$ with $z(f^-)$, $\phi(f)$, $\phi'(f)$ from the left or the right are well-defined within $\qf^\omega$, and hence we can also define the commutator $[A,\phi(f)]:=A\phi(f)-\phi(f)A \in \qf^\omega$. This enables us to introduce our notion of locality.

\begin{definition}\label{definition:omegalocal}
  Let $A \in \qf^\omega$. We say that $A$ is \emph{$\omega$-local} in $\wcal_x$  if
\begin{equation}\label{eq:acommute}
     [A,\phi(f)] = 0
\quad
\text{for all }
    f \in \dcal^\omega(\wcal_x'), \text{ as a relation in $\qf^\omega$}.
\end{equation}
$A$ is called $\omega$-local in $\wcal_x'$ if $J A^\ast J$ is $\omega$-local in $\wcal_{-x}$. $A$ is called $\omega$-local in the double cone $\ocal_{x,y} = \wcal_x \cap \wcal'_y$ if it is $\omega$-local in both $\wcal_x$ and $\wcal'_y$.
\end{definition}

$\omega$-locality (say, in $\rightwedge$) is a priori a weaker notion than locality in the usual sense. For example, it does not tell us whether $A$ commutes with unitary operators $\exp i\phi(f)^-$ with $\supp f \subset \leftwedge$, or with a general element $B \in \M'$. In fact, we would not be able to write down such commutators in a meaningful way if $A$ is just a quadratic form. For bounded operators, $\omega$-locality reduces to the usual notion: If $A \in \boundedops$, then $A$ is $\omega$-local in $\wcal$ if and only if $A \in \M$, and it is $\omega$-local in $\ocal_{x,y}$ if and only if $A \in \A(\ocal_{x,y})$ as defined in \eqref{eq:aodef}. (This is easy to see by using closability of $\phi(f),\phi'(f)$ and density arguments.) In view of applications, it would be favorable to have a similar statement for unbounded, but closable operators $A$. This can in fact be achieved \cite[Prop.~4.4]{Cadamuro:2012}, and will be presented elsewhere \cite{BostelmannCadamuro:examples-wip}.
For our present purposes, we give the following characterization of $\omega$-locality.

\begin{lemma}\label{lemma:localitychar}
 Let $\omega$ be an indicatrix, and $A \in \qf^\omega$. The following conditions are equivalent:
\begin{enumerate}
\renewcommand{\theenumi}{(\roman{enumi})}
\renewcommand{\labelenumi}{\theenumi}

 \item \label{it:charlocal}
    $A$ is $\omega$-local in $\wcal$.
 \item \label{it:charcommutator}
    $[A,\phi(f)]= 0$ for all $f \in \dcal^\omega(\wcal')$, as a relation in $\qf^\omega$.
 \item \label{it:charomegavar}
    For every $\psi,\chi\in\fpno$, there exists an indicatrix $\omega'$ such that
   $\hscalar{\psi}{[A,\phi(f)]\chi} = 0$ for all $f \in \dcal^{\omega'}(\wcal')$.
 \item \label{it:chartempered}
    For every $\psi,\chi\in\fpno$, it holds that
    $\hscalar{\phi(x)\psi}{A \chi} = \hscalar{\psi}{A \phi(x) \chi}$ for $x \in\wcal'$,
    in the sense of tempered distributions.
\end{enumerate}
\end{lemma}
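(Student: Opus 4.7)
The equivalence \ref{it:charlocal}$\Leftrightarrow$\ref{it:charcommutator} is just Definition~\ref{definition:omegalocal} specialized to $x=0$, and the implication \ref{it:charcommutator}$\Rightarrow$\ref{it:charomegavar} holds trivially by taking $\omega'=\omega$. The direction \ref{it:chartempered}$\Rightarrow$\ref{it:charcommutator} is also routine: any $f \in \dcal^\omega(\wcal')$ is in particular a Schwartz function supported in $\wcal'$, so smearing the tempered-distribution identity of \ref{it:chartempered} with $f$ gives $\hscalar{\psi}{[A,\phi(f)]\chi}=0$ for all $\psi,\chi\in\fpno$, i.e.\ vanishing of $[A,\phi(f)]$ as a form in $\qf^\omega$. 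Hence the only substantive step is \ref{it:charomegavar}$\Rightarrow$\ref{it:chartempered}.

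For this, fix $\psi,\chi\in\fpno$ and let $\omega'$ be the indicatrix produced by \ref{it:charomegavar}. Since $\psi,\chi$ have bounded particle number, the standard bound on the smeared fields (specializing \eqref{omegaz} with trivial weight, combined with the Schwartz continuity of $f\mapsto f^\pm$) shows that $f\mapsto\hscalar{\psi}{[A,\phi(f)]\chi}$ is continuous in the Schwartz topology and hence defines a tempered distribution $T_{\psi,\chi}$. To obtain \ref{it:chartempered} we must show $T_{\psi,\chi}(f)=0$ for every $f\in\scal(\rbb^2)$ with $\supp f\subset\wcal'$, and by hypothesis this already holds on $\dcal^{\omega'}(\wcal')$. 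The plan is therefore to establish that $\dcal^{\omega'}(\wcal')$ is sequentially dense in $\{g\in\scal(\rbb^2):\supp g\subset\wcal'\}$ in the Schwartz topology.

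This density is obtained by a two-step approximation. Given $f\in\scal(\rbb^2)$ with $\supp f\subset\wcal'$, first approximate $f$ by $\chi_R f$ in Schwartz topology, where $\chi_R(x)=\chi(x/R)$ with $\chi\in\dcal(\rbb^2)$ equal to $1$ near the origin; a standard Leibniz-rule estimate on the Schwartz seminorms gives $\chi_R f\to f$ as $R\to\infty$, and each $\chi_R f\in\dcal(\wcal')$ is compactly supported. Second, mollify $\chi_R f$ by convolving with an approximate identity $\eta_\epsilon\in\dcal^{\omega'}$, whose existence follows from the local units in $\dcal^{\omega'}$ constructed from Carleman's criterion \ref{it:omegagrowth} (as recalled in Section~\ref{sec:jaffe}). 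For $\epsilon$ smaller than the distance from $\supp(\chi_R f)$ to $\partial\wcal'$, the convolution lies in $\dcal(\wcal')$, and $\chi_R f*\eta_\epsilon\to\chi_R f$ in $\scal$ as $\epsilon\to 0$ by uniform compactness of the supports. A diagonal extraction then yields a Schwartz-convergent sequence in $\dcal^{\omega'}(\wcal')$ with limit $f$, and continuity of $T_{\psi,\chi}$ transfers the vanishing from $\dcal^{\omega'}(\wcal')$ to all Schwartz functions supported in $\wcal'$.

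The main technical point is verifying that the mollified functions actually lie in $\dcal^{\omega'}(\wcal')$ rather than merely in $\dcal(\wcal')$; concretely, one has to show that $\theta\mapsto e^{\omega'(\cosh\theta)}(\chi_R f*\eta_\epsilon)^\pm(\theta)$ is bounded and square-integrable. This follows from the Fourier identity $(\chi_R f*\eta_\epsilon)^\pm=2\pi\,(\chi_R f)^\pm\,\eta_\epsilon^\pm$ together with boundedness of $(\chi_R f)^\pm$ (compact support) and the defining $\omega'$-decay of $\eta_\epsilon^\pm$; this is the one nontrivial ingredient beyond general continuity-and-density reasoning.
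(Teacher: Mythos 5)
Your proposal is correct and follows essentially the same route as the paper: the only substantive implication is \ref{it:charomegavar}$\Rightarrow$\ref{it:chartempered}, handled via the density of $\dcal^{\omega'}(\wcal')$ obtained by mollification with approximate identities built from Carleman's criterion, exactly as recalled in Sec.~\ref{sec:jaffe}. The paper is merely terser, testing the (tempered, hence ordinary) distribution against $\dcal(\wcal')$ and citing the already-established $\dcal$-topology density, whereas you re-derive the density in the Schwartz topology; both variants are fine.
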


\begin{proof} 
We first note that in \ref{it:chartempered}, the expression $\hscalar{\psi}{A \phi(x) \chi}$ can indeed be understood as a tempered distribution (and similar arguments then apply to $\hscalar{\phi(x)\psi}{A \chi}$). Namely, since $\gnorm{A}{k}^\omega < \infty$, the matrix element $\hscalar{\psi}{A \chi}$ is well defined (by continuous extension) if $\psi,\chi \in \fpn$ and at least \emph{one} of $\psi,\chi$ is in $\Hil^{\omega}$. 
Noting that $\phi(f) \fpn \subset \fpn$, and that the map $\scal(\rbb^2) \to \Hil$, $f \mapsto \phi(f)\chi$ is continuous, it then follows that $f \mapsto \hscalar{\psi}{A \phi(f) \chi}$ is continuous in the Schwartz topology.---%
Now \ref{it:charlocal}$\Leftrightarrow$\ref{it:charcommutator} is true by definition; \ref{it:charcommutator}$\Rightarrow$\ref{it:charomegavar} is trivial with $\omega'=\omega$; \ref{it:charomegavar}$\Rightarrow$\ref{it:chartempered} follows due to density of $\dcal^{\omega'}(\wcal')$ in $\dcal(\wcal')$; and \ref{it:chartempered}$\Rightarrow$\ref{it:charcommutator} holds since $\dcal^\omega(\wcal')\subset\scal(\rbb^2)$.
\cmpqed\end{proof}

\section{Meromorphic functions on tube domains}\label{sec:tools}

As may be evident from the introduction, our analysis will rely in large parts on the theory of analytic and meromorphic functions in several variables. Specifically, most of these functions will be defined on a tube domain $\tube(\bcal) = \rbb^k + i \bcal$ with some set $\bcal \subset \rbb^k$. We will now establish some tools which are helpful in this context. First, this concerns the case where $\bcal$ is a priori not an open set, but a collection of certain lines (or a graph). In Sec.~\ref{sec:graphs}, we discuss how analytic functions on such domains can be understood, and how well-known results -- Bochner's tube theorem and the maximum modulus principle -- extend to this situation. Second, we need to understand the structure of first-order poles and the residues of meromorphic functions on tube domains, and control the directional dependence of the boundary distributions near such poles; this will be done in Sec.~\ref{sec:bvlemma}.



  \subsection{CR functions on graphs} \label{sec:graphs}

By a \emph{graph} $\gcal$ in $\rbb^k$, we mean a collection of points in $\rbb^k$ (the nodes), together with a set of straight lines connecting some of these nodes (the edges). For our purposes, the nodes will always lie on the grid $\pi \zbb^k$, and the edges will always be axis-parallel lines between next neighbors; that is, the lines have the parametrized form $\lambdav(s) = \boldsymbol{\nu} + s \ev^{(j)}$, where $\ev^{(j)}$ is a standard basis vector of $\rbb^k$, where $0 < s < \pi$, and $\boldsymbol{\nu}$ as well as $\boldsymbol{\nu} + \pi \ev^{(j)}$ are nodes of $\gcal$.
The \emph{tube over $\gcal$}, denoted $\tube(\gcal)$, is the set of all $\zetav = \thetav + i \lambdav$ with $\thetav \in \rbb^k$ and $\lambdav$ on an edge of $\gcal$.

A \emph{CR function $F$ on $\tube(\gcal)$} is a smooth function on $\tube(\gcal)$ which is analytic along the edges; that is, with an edge $\lambdav(s)$ parametrized as above, $F$ is analytic in $\zeta_j$ in the specified domain, while being smooth in all (real) variables.
We moreover demand that the boundary values of $F$ and of all its derivatives exist at the nodes, $s \searrow 0$ and $s \nearrow \pi$,  and that where several edges end in a common node, the different limits of $F$ agree.  We can therefore just refer to \emph{the} boundary value at a node, without indicating the direction of the limit. We will however sometimes write $F(\cdotarg + i \nuv)\vert_\gcal$ for the boundary value at node $\nuv$ obtained within $\gcal$, in case that several graphs play a role.

A \emph{CR distribution $F$ on $\tube(\gcal)$} is, correspondingly, a function analytic along the edges while being a $\dcal(\rbb^{k-1})'$ distribution in the remaining real variables.\footnote{%
See~\cite[Ch.~I, Appendix 2, \S{}3]{GelfandShilov:1964vol1} for a discussion of distributions depending analytically on a parameter.} Similar to the above, we demand that all boundary values at nodes exist in the sense of distributions, and agree where several edges meet in a common node.

We will derive some general properties of CR functions on $\tube(\gcal)$, which are mostly extensions of standard results adapted to our framework. 

First of all, we remark that CR distributions can be ``regularized'' by convolution with test functions: Let $F$ be a CR distribution on $\gcal$, and let $g = (g_1,\ldots,g_k) \in \dcal(\rbb)^k$. We define
\begin{equation}\label{eq:fgconvolute}
     (F \ast g)(\zetav) := \int F(\zetav - \xiv) g_1(\xi_1) \ldots g_k(\xi_k) \, d^k\xiv.
\end{equation}
Then $F \ast g$ is a CR \emph{function} on $\tube(\gcal)$, as follows from continuity in $g_j$ in the $\dcal(\rbb)$ topology. 

Further, CR distributions obey a version of the tube theorem: they can be extended analytically to the convex hull of the graph. 
To formulate that, let us denote with $\bar \gcal \subset \rbb^k$ the closure of the edges of $\gcal$ (or equivalently, the edges together with the nodes, as a subset of $\rbb^k$). 
Following \cite{Kazlow:1979}, we define 
\begin{equation}
\begin{aligned}
 \ich \gcal &:= \operatorname{conv}(\bar \gcal)^\circ, & \text{the \emph{interior} (of the convex hull) of $\gcal$},
\\
 \ach \gcal &:= (\ich\gcal) \cup \bar\gcal, \quad &\text{the \emph{almost convex hull} of $\gcal$.}
\end{aligned}
\end{equation}

\begin{lemma} \label{lem:graphtube}
  Let $\gcal$ be a connected graph and $F$ a CR distribution on $\tube(\gcal)$. Then, $F$ extends to an analytic function on $\ich{\gcal}$ with distributional boundary values on $\ach \gcal$.
\end{lemma}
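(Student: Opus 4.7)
My plan is to reduce the distributional case to that of CR functions via the convolution regularization $F \mapsto F \ast g$ already introduced in the excerpt, and then construct the analytic extension by iterating a pairwise Bochner-tube extension along adjacent edges of $\gcal$, which chain together by connectedness.

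The key pairwise step is: given two edges $\gamma_1, \gamma_2$ of $\gcal$ sharing a node $\nuv$, extend the CR function $F \ast g$ analytically to the tube over $\operatorname{conv}(\gamma_1 \cup \gamma_2)$. The collinear case (same axis direction) is handled by smoothness at $\nuv$ together with Morera's theorem (or a fiberwise removable singularity argument). In the perpendicular case, say with edge directions $\ev^{(i)}, \ev^{(j)}$, I fix all other coordinates and translate $\nuv$ to $0$, reducing to a two-variable problem: a function of $(\zeta_i, \zeta_j)$ analytic in $\zeta_i$ on a strip segment for each real $\zeta_j$, analytic in $\zeta_j$ on a strip segment for each real $\zeta_i$, and smooth at the real corner. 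This is a model case of the tube theorem for CR manifolds of~\cite{Kazlow:1979}, which yields an analytic extension on the tube over the triangular convex hull of the two edges.

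Chaining such extensions along paths in $\gcal$ and invoking uniqueness of analytic continuation on overlaps yields an analytic extension of $F \ast g$ on $\tube(\ich\gcal)$. To return to $F$ itself, I take a delta sequence of mollifiers $g_\epsilon$: the corresponding analytic extensions converge distributionally on each compact subset of $\tube(\ich\gcal)$, with uniform bounds supplied by Cauchy's formula applied on slightly smaller sub-tubes already contained in the extended domain. The limit is the desired analytic extension of $F$ on $\tube(\ich\gcal)$, and its distributional boundary values on $\tube(\bar\gcal)$ reproduce $F$ there; together this provides the distributional boundary values on all of $\ach\gcal$.

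The main obstacle is the perpendicular pairwise case: the initial data live on a one-dimensional set in the imaginary-direction parameter space, not on an open cone, so ordinary open-set Bochner-tube arguments do not apply directly. The edge-analyticity combined with smoothness at the node is exactly the input needed to invoke the CR tube theorem; once this is in hand, the remaining steps (connected-graph patching, mollification, and boundary-value identification) are essentially bookkeeping.
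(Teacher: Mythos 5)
Your pairwise step is fine as far as it goes, but the globalization fails: for $k\geq 3$ the union of the convex hulls of adjacent edge pairs is \emph{not} $\operatorname{conv}(\bar\gcal)$, and is in fact a set with empty interior. Take $\gcal=\gcal_+^3$: the two triangles $\operatorname{conv}(\gamma_1\cup\gamma_2)$ and $\operatorname{conv}(\gamma_2\cup\gamma_3)$ lie in the boundary faces $\lambda_1=0$ and $\lambda_3=\pi$ of the open simplex $\ich\gcal_+^3=\{0<\lambda_1<\lambda_2<\lambda_3<\pi\}$, so ``chaining along paths and invoking uniqueness on overlaps'' never produces any values of $F\ast g$ in $\tube(\ich\gcal)$. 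To repair this you would have to iterate the convexification, and at each stage you again face a tube-theorem problem over a connected but non-open (lower-dimensional, locally starlike) base set --- i.e.\ exactly the general situation of Kazlow's theorem, not the two-edge model case. This is why the proof in the paper checks the hypotheses of Kazlow's Theorem~6.1 (connected, locally closed, locally starlike set; $F\ast g$ a CR$'$ function, which requires verifying Whitney smoothness at the nodes and Morera-type continuation across collinear nodes) for the \emph{whole} graph at once, rather than reducing to pairs of edges.

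The second gap is the return from $F\ast g_\epsilon$ to $F$. Cauchy's formula on a smaller sub-tube bounds the derivatives of the extension in terms of its values on a larger sub-tube, so it cannot by itself supply the uniform bounds you need; the only a priori control is on the boundary data $F\ast g_\epsilon$ over $\tube(\bar\gcal)$ (via the maximum modulus principle), and these are \emph{not} uniformly bounded as $\epsilon\searrow 0$, since a delta sequence is unbounded in every $\dcal$-seminorm and $F$ is merely a distribution on the graph. (Uniform bounds of the form $\sup_{\lambdav\in\bar\gcal}\gnorm{F(\cdotarg+i\lambdav)}{\times}<\infty$ would save the argument, but they are not part of the hypothesis of the lemma.) The paper avoids this by keeping $g$ fixed: one observes that $g\mapsto (F\ast g)^{\mathrm{ext}}(i\lambdav)$ is $\dcal$-continuous for each $\lambdav\in\ich\gcal$, smears additionally in $\lambdav$ to obtain a distribution in $2k$ real variables satisfying the Cauchy--Riemann equations weakly, and then invokes the regularity theorem that such a distribution has an analytic kernel. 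You should replace your mollifier limit by an argument of this type (or supply genuinely uniform, $\epsilon$-independent bounds, which the hypotheses do not provide).
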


\begin{proof}
 We apply results from \cite{Kazlow:1979}. In terms used there, $\bar \gcal$ is a connected, locally closed, locally starlike set. For any $g \in \dcal(\rbb)^k$, the convolution $F \ast g$ is a CR function on $\tube(\gcal)$. Indeed, at nodes where two edges along the same axis meet, $F \ast g$ continues analytically in the respective variable across the node (by Morera's theorem). Thus, at any node, $F \ast g$ is a smooth function (possibly with single-sided derivatives) defined on lines in at most $k$ independent directions in $\im\zetav$; it is then easy to see that $F \ast g$ is smooth in the sense of Whitney. Hence $F \ast g$ is a CR' function on $\tube(\bar \gcal)$ in the sense of \cite[Def.~2.12]{Kazlow:1979}. Applying \cite[Theorem~6.1]{Kazlow:1979} yields an extension $G$ of $F \ast g$ to $\tube(\ach \gcal)$, analytic in $\tube(\ich\gcal)$.

 It remains to show that $G = F \ast g$ with some function $F$ analytic in $\tube(\ich\gcal)$. We sketch this argument briefly.\footnote{See, e.g., \cite[p.~530]{Eps:edge_of_wedge} for a more detailed exposition in a similar situation.} One first observes that at each fixed $\lambdav\in\ich\gcal$, the map $g \mapsto G(i\lambdav)$ is continuous in the $\dcal$-topology (cf.~the remark in \cite[Sec.~12]{Kazlow:1979}). Smearing also in $\lambdav$ within $\ich\gcal$, we obtain a distribution in $2k$ variables which -- due to analyticity of $G$ -- fulfills the Cauchy-Riemann equations in the weak sense. But that implies that this distribution has an analytic kernel \cite[p.~72]{Schwartz:1959b}, which yields the desired function $F$.
\cmpqed\end{proof}

As a next point, due to \cite[Sec.~11, Corollary]{Kazlow:1979}, the maximum modulus principle holds for CR functions on $\tube(\gcal)$ if $\gcal$ is connected:\footnote{%
Given some mild conditions on the growth of $F$, one can use a Phragm\'en-Lindel\"of argument to show that the supremum in \eqref{eq:maxmodulusedge} can actually be restricted to the tube over the \emph{nodes} of $\gcal$ \cite[Eq.~(C.6)]{Cadamuro:2012}.}
\begin{equation}\label{eq:maxmodulusedge}
   \sup_{\zetav \in \tube (\ach \gcal) } |F(\zetav)|
 =   \sup_{\zetav \in \tube(\bar \gcal) } |F(\zetav)|.
\end{equation}
We want to obtain a similar maximum modulus principle for $L^2$-like bounds of a CR \emph{distribution} $F$, more precisely for the norm $\gnorm{\cdotarg}{\times}$ as defined in Eq.~\eqref{eq:fullcrossnorm}. 
This can be achieved with standard techniques.

\begin{lemma}\label{lemma:maxmodcross}
Let $\gcal$ be a connected graph and $F$ a CR distribution on $\tube(\gcal)$. For the extension of $F$ to $\tube(\ach \gcal)$, it holds that
\begin{equation}\label{eq:maxmodcrossedge}
    \sup_{\lambdav \in \ach \gcal } \gnorm{ F(\cdotarg + i \lambdav ) }{\times}
 =  \sup_{\lambdav \in \bar\gcal }\gnorm{ F(\cdotarg + i \lambdav ) }{\times}.
\end{equation}
\end{lemma}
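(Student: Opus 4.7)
The plan is to reduce the distributional cross-norm statement to the already-established pointwise maximum modulus principle \eqref{eq:maxmodulusedge} by smearing against test functions and exploiting translation-invariance of the $L^2$ norm.

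First, for any fixed tuple $g = (g_1,\ldots,g_k) \in \dcal(\rbb)^k$ with $\gnorm{g_j}{2}\le 1$, consider the convolution $F \ast g$ defined in \eqref{eq:fgconvolute}. As noted just after \eqref{eq:fgconvolute}, $F\ast g$ is a CR \emph{function} on $\tube(\gcal)$, so by \cite[Sec.~11, Corollary]{Kazlow:1979}, i.e., equation \eqref{eq:maxmodulusedge}, its analytic extension to $\tube(\ach\gcal)$ satisfies
\begin{equation*}
  \sup_{\zetav \in \tube(\ach\gcal)} |(F\ast g)(\zetav)|
  = \sup_{\zetav \in \tube(\bar\gcal)} |(F\ast g)(\zetav)|.
\end{equation*}

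Next, I express both sides of \eqref{eq:maxmodcrossedge} through such convolutions. For any $\lambdav \in \ach\gcal$ and real $\thetav \in \rbb^k$, the substitution $\xi_j \mapsto \theta_j - \xi_j$ in \eqref{eq:fgconvolute} gives
\begin{equation*}
  (F\ast g)(\thetav + i\lambdav)
  = \int F(\xiv + i\lambdav)\, g_1(\theta_1-\xi_1)\cdots g_k(\theta_k-\xi_k) \, d\xiv,
\end{equation*}
and since the translated functions $\xi \mapsto g_j(\theta_j - \xi)$ again have $L^2$ norm $\le 1$, the definition \eqref{eq:fullcrossnorm} yields
\begin{equation*}
  |(F\ast g)(\thetav + i\lambdav)| \le \gnorm{F(\cdotarg + i\lambdav)}{\times}.
\end{equation*}
In the opposite direction, the definition \eqref{eq:fullcrossnorm} combined with density of products of $\dcal(\rbb)$-functions $g_j$ with $\gnorm{g_j}{2}\le 1$ in the unit ball of such tuples shows
\begin{equation*}
   \gnorm{F(\cdotarg + i\lambdav)}{\times}
   = \sup_{\gnorm{g_j}{2}\le 1} |(F\ast g)(i\lambdav)|,
\end{equation*}
where we used that the reflection $g_j(\xi) \mapsto g_j(-\xi)$ is an isometry on the $L^2$ unit ball.

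Combining these two observations with the scalar maximum modulus principle, for any $\lambdav_0 \in \ach\gcal$ and any $g$ with $\gnorm{g_j}{2}\le 1$,
\begin{equation*}
   |(F\ast g)(i\lambdav_0)|
   \le \sup_{\lambdav \in \bar\gcal,\, \thetav \in \rbb^k} |(F\ast g)(\thetav + i\lambdav)|
   \le \sup_{\lambdav \in \bar\gcal} \gnorm{F(\cdotarg + i\lambdav)}{\times}.
\end{equation*}
Taking the supremum over $g$ gives $\gnorm{F(\cdotarg + i\lambdav_0)}{\times} \le \sup_{\lambdav\in\bar\gcal} \gnorm{F(\cdotarg + i\lambdav)}{\times}$, and then taking the supremum over $\lambdav_0\in\ach\gcal$ yields ``$\le$'' in \eqref{eq:maxmodcrossedge}; the reverse inequality is immediate from $\bar\gcal \subset \ach\gcal$.

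The only mildly delicate point is ensuring that the distributional $\gnorm{\cdotarg}{\times}$ can be computed by smearing with compactly supported test functions rather than bounded $L^2$ functions, but this is built into the definition \eqref{eq:fullcrossnorm} which already uses $\dcal(\rbb)$. Everything else is bookkeeping around the translation invariance of $L^2$ norms, so no genuine analytic difficulty arises beyond invoking \eqref{eq:maxmodulusedge}.
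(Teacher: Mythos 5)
Your proposal is correct and follows essentially the same route as the paper: convolve with test function tuples $g$, observe that $\gnorm{F(\cdotarg+i\lambdav)}{\times}=\sup_g|(F\ast g)(i\lambdav)|=\sup_{g,\thetav}|(F\ast g)(\thetav+i\lambdav)|$, and apply the pointwise maximum modulus principle \eqref{eq:maxmodulusedge} to each $F\ast g$. The paper compresses the translation/reflection bookkeeping into the single displayed identity \eqref{eq:fgsup}, which you have simply spelled out in more detail.
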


\begin{proof}
 For $g=(g_1, \ldots, g_k)\in\dcal(\rbb)^k$ with $\gnorm{g_j}{2}\leq 1$, we define $F \ast g(\zetav)$ as in \eqref{eq:fgconvolute}; this function is analytic on $\tube(\ich \gcal )$ and a CR function on $\tube(\gcal)$. We note
\begin{equation}\label{eq:fgsup}
   \gnorm{F(\cdotarg + i \lambdav)}{\times} = \sup_g |F \ast g( i \lambdav)|
   = \sup_{g,\thetav} |F \ast g( \thetav + i \lambdav)|.
\end{equation}
 Applying the maximum modulus principle \eqref{eq:maxmodulusedge} to $F \ast g$ then immediately yields \eqref{eq:maxmodcrossedge}.
\cmpqed\end{proof}

We further prove a result on pointwise bounds on analytic functions, estimated by the supremum of their norm $\gnorm{\cdotarg}{\times}$. This is not restricted to CR functions on graphs, but is useful in conjunction with Lemma~\ref{lemma:maxmodcross}. It follows by use of the mean value property; see e.g.~\cite[Prop.~4.4]{Lechner:2008} for a  similar application of this technique.

\begin{proposition}\label{proposition:pointwise}
Let $\ical\subset \rbb^k$ be open and $F$ analytic on $\tube(\ical)$. Then, for all $\zetav\in \tube(\ical)$,
\begin{equation}
|F(\zetav)|\leq \frac{(4/\pi)^{k} \; k^{k/4}}{\operatorname{dist}(\im\zetav,\partial \ical)^{k/2}}\sup_{\lambdav \in \ical} \gnorm{F(\cdotarg +i\lambdav)}{\times} .
\end{equation}
\end{proposition}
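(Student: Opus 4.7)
The plan is to combine the mean value property of analytic functions on a suitably chosen polydisc with the defining sup over tensor-product test functions in $\gnorm{\cdotarg}{\times}$. Given $\zetav_0 = \thetav_0 + i\lambdav_0 \in \tube(\ical)$, let $d = \operatorname{dist}(\lambdav_0,\partial\ical)$ and set $r := d/\sqrt{k}$. Then the closed polydisc of common radius $r$ centered at $\zetav_0$ sits inside $\tube(\ical)$, since the imaginary parts sweep out the rectangle $\prod_j[\lambda_{0,j}-r,\lambda_{0,j}+r]$, which is contained in the Euclidean ball of radius $r\sqrt k = d$ around $\lambdav_0$, hence in $\ical$.

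Applying the $k$-fold mean value property over this polydisc gives
\begin{equation}
 F(\zetav_0) = \frac{1}{(\pi r^2)^k}\int_{\prod_j\{\xi_j^2+\eta_j^2 \leq r^2\}} F(\thetav_0 + \xiv + i(\lambdav_0+\etav))\, d\xiv\, d\etav.
\end{equation}
For each fixed $\etav$ with $|\eta_j|\leq r$, the inner $\xiv$-integration runs over the rectangle $\prod_j[-\sqrt{r^2-\eta_j^2}, \sqrt{r^2-\eta_j^2}]$. Using tensor-product indicator functions as test functions (extended to $L^2$ by density, which is legitimate as soon as $\sup_\lambdav \gnorm{F(\cdotarg+i\lambdav)}{\times}<\infty$), the definition of $\gnorm{\cdotarg}{\times}$ bounds this $\xiv$-integral by
\begin{equation}
 \gnorm{F(\cdotarg+i(\lambdav_0+\etav))}{\times}\,\prod_{j=1}^k \bigl(2\sqrt{r^2-\eta_j^2}\bigr)^{1/2}.
\end{equation}

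Finally, I integrate over $\etav\in\prod_j[-r,r]$ using the pointwise bound $(r^2-\eta_j^2)^{1/4}\leq r^{1/2}$, which produces a factor $(2r^{3/2})^k$. Multiplying together the prefactor $(\pi r^2)^{-k}$, the factor $2^{k/2}$, and the factor $(2r^{3/2})^k$, and substituting $r = d/\sqrt{k}$, the powers of $r$ collapse to $r^{-k/2} = k^{k/4}d^{-k/2}$ and the numerical constant becomes $2^{3k/2}/\pi^k$. Since $2^{3/2} = 2\sqrt{2} < 4$, this is majorized by $(4/\pi)^k$, yielding the asserted bound.

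No step presents a real obstacle; the only essential choice is that of the polyradius $r = d/\sqrt{k}$, which is exactly what is needed to fit the polydisc inside $\tube(\ical)$ while leaving the averaging domain as large as possible. The slight loss in the numerical constant (we obtain $2\sqrt{2}/\pi$ in place of $4/\pi$) could be tightened via the Beta-function exact value $\int_{-r}^r (r^2-\eta^2)^{1/4}d\eta$, but no such sharpening is needed for the stated inequality.
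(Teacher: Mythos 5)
Your proof is correct and follows essentially the same route as the paper's: the mean value property over a polydisc of polyradius proportional to $\operatorname{dist}(\im\zetav,\partial\ical)/\sqrt{k}$, combined with characteristic functions as ($L^2$-extended) test functions in the definition of $\gnorm{\cdotarg}{\times}$. The only caveat is that with $r=d/\sqrt{k}$ the corners of the rectangle of imaginary parts lie at distance exactly $d$ from $\lambdav_0$ and so may touch $\partial\ical$; one should therefore run the argument with radius $(1-\epsilon)r$ and let $\epsilon\to 0$ (the paper sidesteps this by taking half the radius, which is precisely where its constant $(4/\pi)^k$, rather than your sharper $(2\sqrt{2}/\pi)^k$, comes from).
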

\begin{proof}
\sloppy
For fixed $\zetav\in\tube(\ical)$, let $D_{t}\subset \mathbb{C}$ be the disc around the origin with radius $t:=\frac{1}{2}k^{-1/2}\operatorname{dist}(\im\zetav,\partial \mathcal{I})$. The polydisc $(D_{t}\times\cdots \times D_{t})+\zetav$ is then contained in $\tube(\ical)$. The mean value property for analytic functions yields
\begin{equation}\label{eq:meanval}
\begin{aligned}
F(\zetav) &=
(\pi t^{2})^{-k}\int_{D_{t}}d\theta_{1}d\lambda_{1}\ldots \int_{D_{t}}d\theta_{k}d\lambda_{k}\;F(\zetav+\thetav+i\lambdav)\\
&= (\pi t^{2})^{-k}\int_{[-t,t]^{\times k}}d\lambdav\; \big( F \ast \chi_{\lambdav} \big) (\zetav+i\lambdav),
\end{aligned}
\end{equation}
where $\chi_{\lambdav}(\thetav) = \prod_{j=1}^k \chi_j(\theta_j)$, and where $\chi_j$ is the characteristic function of the interval $[-(t^2\!-\!\lambda_j^2)^{1/2}, +(t^2\!-\!\lambda_j^2)^{1/2}]$.
Since $\zetav+i\lambdav \in \tube(\ical)$ by construction, we can estimate
\begin{equation}
| (F \ast \chi_{\lambdav})(\zetav+i\lambdav) |
\leq \sup_{\lambdav' \in \ical} \gnorm{F(\cdotarg +i\lambdav')}{\times} \cdot \prod_{j=1}^{k} \gnorm{\chi_j}{2}.
\end{equation}
(Cf.~Eq.~\eqref{eq:fgsup}; that relation can be continued to $L^2$ functions $g_j$ by continuity.)
Taking into account $\gnorm{\chi_j}{2} \leq \sqrt{2t}$, we find from \eqref{eq:meanval},
\begin{equation}
|F(\zetav)| \leq (\pi t^{2})^{-k}  (2t)^{k} (2t)^{k/2} \sup_{\lambdav' \in \ical} \gnorm{F(\cdotarg +i\lambdav')}{\times} ,
\end{equation}
which implies the desired result after inserting the definition of $t$.
\cmpqed\end{proof}

  \subsection{Residues and boundary distributions in several variables} \label{sec:bvlemma}

For our analysis, we make use of meromorphic functions in several variables and of their residues. Let us fix the corresponding notations and conventions. In this paper, all poles of meromorphic functions will be located on hyperplanes, $\zv \cdot \av = c$ with $\av \in\rbb^k$, $c \in \cbb$. Specifically, if $\av^{(1)},\ldots,\av^{(p)}\in\rbb^k$ are pairwise linear independent and $c_1,\ldots,c_p\in\cbb$, then we say that $F$ has \emph{first-order poles} at the hyperplanes $\zv \cdot \av^{(j)} = c_j$ if
\begin{equation}\label{eq:firstorderpole}
   F(\zv) \prod_{j=1}^p (\zv \cdot \av^{(j)} - c_j)
\end{equation}
is analytic in a neighborhood of these hyperplanes. Regarding residues of $F$, we choose our notational convention as follows: 
\begin{equation} \label{eq:resnotation}
   \res_{\zv \cdot \av^{(j)} = c_j} F =  \big((\zv \cdot \av^{(j)} - c_j) F(\zv)\big) \Big\vert_{\zv \cdot \av^{(j)} = c_j}
\end{equation}
The residues of $F$ on one of the hyperplanes are then again meromorphic on a lower-dimensional complex manifold, which we can identify with $\cbb^{k-1}$.
Note that the notation \eqref{eq:resnotation}, while convenient for us, needs to be taken with some care: For $\alpha \in \rbb \backslash\{0\}$, we have
\begin{equation} 
   \res_{\zv \cdot (\alpha\av) = \alpha c} F = \alpha \res_{\zv \cdot \av = c} F,
\end{equation}
although $\zv \cdot \av = c$ and $\zv \cdot (\alpha \av) = \alpha c$ describe the same geometric set. (We accept this slightly unintuitive feature for simplicity; the alternative would be to work with oriented manifolds, and with differential forms rather than functions.)

In this section, we will investigate the boundary values of meromorphic functions (in the sense of distributions). Suppose that $F$ is meromorphic around the real hyperplane $\rbb^k$ and fulfills bounds of the type $|F(\xv+ i\epsilon \bv)| \leq c \epsilon^{-\ell}$ with some $\ell>0$,  locally uniformly in $\xv\in\rbb^k$ and in $\bv$ within some open convex cone. (This is evidently fulfilled in the situation \eqref{eq:firstorderpole}.) Then the boundary value
\begin{equation}
  F(\xv + i 0 \bv) := \lim_{\epsilon \searrow 0} F(\xv+ i \epsilon \bv)
\end{equation}
exists in $\dcal(\rbb^k)'$ and is independent of $\bv$ inside the chosen cone \cite[Sec.~125~Z]{Ito:1993vol1}. As shorthand, we will often write the boundary distribution just as $F(\xv+ i \zerov)$ without specifying the cone, just noting that the boundary is approached from a certain part of the analyticity region. The notation $F(\xv+ i \yv + i \zerov)$ with fixed $\yv\in\rbb^k$ should be understood accordingly by translating the argument.

Evidently, if $\bv$ is varied \emph{across} connected components of the analyticity region -- e.g., if $\bv$ is taken across a pole hyperplane -- then the boundary value can change with $\bv$. We aim to describe this in more detail.
To that end, recall that if $F$ is a function of one complex variable, analytic near the real axis except for a first-order pole at $z=0$, then we have the relation between boundary distributions,
\begin{equation}
   F(x-i0) = F(x+i0) + 2 \pi i \delta(x) \res_{z=0} F(z).
\end{equation}
A first multi-dimensional generalization is formulated as follows.

\begin{lemma}\label{lemma:onepole}
  Let $\ucal \subset \rbb^k$ be a neighborhood of zero, $\ccal \subset \rbb^k$ an open convex cone, and $\av \in \rbb^k$.
  Let $F$ be meromorphic on $\tube(\ucal)$ and $(\zv \cdot \av)F(\zv)$ analytic on $\tube(\ccal \cap \ucal)$. 
  Let $\bv^+,\bv^-,\bv^\bot \in \ccal$ so that $\pm \av \cdot \bv^\pm  > 0$, $\av\cdot\bv^\bot  = 0$. Then it holds that
  \begin{equation}\label{eq:onepole}
     F(\xv + i 0 \bv^-)  = F(\xv + i 0 \bv^+) + 2 \pi i \delta(\xv \cdot \av) \res_{\zv \cdot \av = 0} F(\xv + i 0 \bv^\bot).
  \end{equation}
\end{lemma}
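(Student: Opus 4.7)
The strategy is to factor out the pole and reduce everything to the classical one-variable Sokhotski--Plemelj formula in the direction $\av$. Set $G(\zv) := (\zv \cdot \av) F(\zv)$. By hypothesis, $G$ is analytic on $\tube(\ccal \cap \ucal)$. Standard estimates (e.g.\ from \eqref{eq:firstorderpole} we get $|G(\xv + i \epsilon \bv)| \leq c$ locally uniformly) ensure that $G$ admits a distributional boundary value $G_0(\xv) := \lim_{\epsilon \searrow 0} G(\xv + i \epsilon \bv)$, and since $G$ is analytic on the \emph{entire} tube over the connected open convex set $\ccal \cap \ucal$ containing $\bv^+$, $\bv^-$, and $\bv^\bot$, this boundary value is independent of which direction $\bv \in \ccal$ we choose. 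In particular $G_0(\xv) = G(\xv + i 0 \bv^\bot)$.

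Next, I would test the claimed identity against an arbitrary $\varphi \in \dcal(\ucal)$. Writing
\begin{equation*}
   \int F(\xv + i\epsilon \bv^{\pm})\,\varphi(\xv)\, d\xv
   = \int \frac{G(\xv + i\epsilon \bv^{\pm})}{\xv\cdot\av + i\epsilon(\bv^{\pm}\cdot\av)}\,\varphi(\xv)\, d\xv
\end{equation*}
and performing an affine change of coordinates that sends $\av$ to $(|\av|, 0, \dots, 0)$, the integrand becomes, in the new variable $t = \xv\cdot\av$ and transverse variables $\yv$, a smooth function of $\yv$ times the single-variable kernel $(t \pm i\epsilon |\bv^{\pm}\cdot\av|)^{-1}$ (with sign determined by that of $\bv^{\pm} \cdot \av$). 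The classical Sokhotski--Plemelj formula then yields, as $\epsilon \searrow 0$ and as distributions on $\ucal$,
\begin{equation*}
   F(\xv + i 0 \bv^{+}) = \frac{G_0(\xv)}{\xv\cdot\av + i 0}, \qquad
   F(\xv + i 0 \bv^{-}) = \frac{G_0(\xv)}{\xv\cdot\av - i 0}.
\end{equation*}
Subtracting these two identities and using the one-dimensional distributional formula $(t-i0)^{-1} - (t+i0)^{-1} = 2\pi i\, \delta(t)$, pulled back to $t = \xv\cdot\av$, produces
\begin{equation*}
   F(\xv + i 0 \bv^{-}) - F(\xv + i 0 \bv^{+}) = 2\pi i\, G_0(\xv)\,\delta(\xv\cdot\av).
\end{equation*}

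Finally, I would identify the right-hand side with the residue term appearing in \eqref{eq:onepole}. The product $G_0(\xv)\,\delta(\xv\cdot\av)$ is unambiguous because $\delta(\xv\cdot\av)$ localizes to the hyperplane $\xv\cdot\av = 0$, and on that hyperplane $G_0$ is smooth as a boundary value approached \emph{along} $\bv^\bot$ (which satisfies $\bv^\bot \cdot \av = 0$, so the trajectory $\xv + i\epsilon\bv^\bot$ never crosses the pole surface). By the convention \eqref{eq:resnotation}, the restriction $G_0(\xv)|_{\xv\cdot\av = 0}$ is exactly $\res_{\zv\cdot\av = 0} F(\xv + i 0 \bv^\bot)$, which completes the proof.

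The main obstacle I anticipate is the rigorous treatment of the product $G_0(\xv)\,\delta(\xv\cdot\av)$ and the identification of its value with the residue approached through $\bv^\bot$. I would handle this concretely by testing against $\varphi$ after an affine change of coordinates decoupling $\xv\cdot\av$ from the transverse directions, so that the whole computation reduces via Fubini to the one-variable case and no abstract wavefront-set machinery is needed.
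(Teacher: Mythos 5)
Your overall strategy---factoring out the pole via $G(\zv)=(\zv\cdot\av)F(\zv)$ and reducing to the one-variable Sokhotski--Plemelj formula---is the same as the paper's, and it works under an extra regularity hypothesis. The genuine gap is your claim that ``standard estimates'' give $|G(\xv+i\epsilon\bv)|\leq c$ locally uniformly. The hypotheses only make $G$ analytic on $\tube(\ccal\cap\ucal)$, whose closure contains the real points but whose interior does not; $F$ may have further polar hypersurfaces that meet $\rbb^k$ while avoiding the open tube, and then $G$ blows up as $\im\zv\to 0$. Concretely, with $k=2$, $\av=\ev^{(1)}$ and $\ccal=\{\yv: y_2>|y_1|\}$, take $F(\zv)=1/\bigl(z_1(z_1-z_2)\bigr)$: then $G(\zv)=1/(z_1-z_2)$ is analytic on $\tube(\ccal)$ (the zero set needs $y_1=y_2$, which lies outside the open cone), yet $|G(\xv+i\epsilon\bv)|\sim\epsilon^{-1}$ near $x_1=x_2$. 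This is not a pathological case: in the induction proving Prop.~\ref{proposition:multivarres}, the lemma is applied to functions with several pole hyperplanes through common real points, so the unbounded case must be covered. Without uniform boundedness, $G_0$ is only a distribution, the Fubini separation into ``a smooth function of the transverse variables times $(t\pm i\epsilon)^{-1}$'' fails, and the expressions $G_0(\xv)/(\xv\cdot\av\pm i0)$ and $G_0(\xv)\,\delta(\xv\cdot\av)$ are not defined without further argument (your claim that $G_0$ is \emph{smooth} on the hyperplane is likewise unjustified; the residue boundary value is in general itself only a distribution there).

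The paper closes exactly this gap: meromorphy of $G$ on $\tube(\ucal)$ yields a bound $|G(\zv)|+\gnorm{\nabla G(\zv)}{}\leq c\,\gnorm{\im\zv}{}^{-\ell}$, and one replaces $G$ by an $m$-th antiderivative $G^{(-m)}$ taken in the direction $\bv^\bot$ (legitimate because $\bv^\bot\cdot\av=0$, so the denominator $\xv\cdot\av+i\epsilon b_1^\pm$ is unaffected by this integration), which for $m$ large enough is $C^1$ up to $\im\zv=0$; one then integrates by parts $m$ times onto the test function and runs the limiting argument for $G^{(-m)}$ and $\partial_\bot^m g$. If you add this reduction---and, in the continuous case, justify the $\epsilon\searrow0$ limit either by the substitution $y=(\xv\cdot\av)/\epsilon$ with dominated convergence (as the paper does) or by an $O(\epsilon)\cdot O(\log(1/\epsilon))$ error estimate using continuity of $\nabla G$---your proof becomes complete.
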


\begin{proof}
We assume without loss of generality that $\ucal$ is a ball around the origin, that $\bv^+,\bv^-,\bv^\bot\in\ucal$, and that $\av=\ev^{(1)}$.
We prove the distributional equation \eqref{eq:onepole} when evaluated on test functions $g \in \dcal(\kcal)$, where $\kcal$ is a fixed convex compact set. Let $G(\zv):=(\zv\cdot\av)F(\zv)$; by hypothesis, this function is analytic on $\tube(\ccal \cap \ucal)$, and we have $G(\zv)=\res_{\zv\cdot\av=0}F(\zv)$ if $\zv\cdot\av=0$. 

We will first prove \eqref{eq:onepole} under the additional assumption that $G$ and its gradient, $\nabla G$, have continuous extensions to $\kcal + i (\bar\ccal \cap \bar\ucal)$. We compute,
\begin{multline}\label{eq:fdiff}
 \int \Big( F(\xv+i0 \bv^-)-F(\xv+i0 \bv^+) \Big) g(\xv)d\xv
 \\ = \lim_{\epsilon\searrow 0}
 \int \Big( \frac{G(\xv+i \epsilon \bv^-)}{x_1 + i \epsilon b_1^-} - \frac{G(\xv + i \epsilon \bv^+)}{x_1 + i \epsilon b_1^+} \Big) g(\xv)d\xv.
\end{multline}
With the notation $\xv=(x_1,\hat\xv)$ and the substitution $y = x_1/\epsilon$, we can rewrite this as
\begin{equation}\label{eq:gdiff}
 \eqref{eq:fdiff} 
 = \lim_{\epsilon\searrow 0}
 \int dy\,d\hat\xv\,  \frac{i b_1^+ G(\zv_\epsilon^-) + i b_1^- G(\zv_\epsilon^+) 
+ y \big(G(\zv_\epsilon^-) - G(\zv_\epsilon^+) \big)}{(y+ib_1^-)(y+ib_1^+)}
g(\epsilon y,\hat\xv),
\end{equation}
where $\zv_\epsilon^\pm := (\epsilon y + i \epsilon b_1^\pm , \hat\xv + i \epsilon \hat{\bv} {\vphantom{\bv}}^\pm)$. In the numerator of \eqref{eq:gdiff}, we have due to the support properties of $g$ that $|y|<c/\epsilon$ with some $c>0$, and for $\epsilon\leq 1$,
\begin{equation}
   |G(\zv_\epsilon^-) - G(\zv_\epsilon^+) |  \leq \epsilon \gnorm{\bv^+-\bv^-}{} \;\sup \big\{ \gnorm{\nabla G(\zv)}{} : \zv \in \kcal + i (\bar\ccal \cap\bar\ucal ) \big\}.
\end{equation}
By our continuity assumption, the supremum is finite and $|G(\zv_\epsilon^\pm)|$ is bounded as well; thus we obtain an integrable majorant in \eqref{eq:gdiff}. Applying the dominated convergence theorem, we arrive at
\begin{equation}
 \eqref{eq:fdiff} 
 =  \int dy \, \frac{i b_1^+ + i b_1^-}{(y+ib_1^-)(y+ib_1^+)}
\int d\hat\xv\, G(0,\hat\xv) g(0,\hat\xv).
\end{equation}
Using $\pm b_1^\pm > 0$, the integral in $y$ can be solved to give $2 \pi i$, which proves \eqref{eq:onepole} under our additional continuity hypothesis.

In the general case, after passing to slightly smaller $\ccal$ and $\ucal$, we can achieve that $G$ and $\nabla G$ are continuous on $\kcal + i (\bar\ccal \cap \bar\ucal)$ except possibly at $\im \zv = 0$. However, since $G$ is meromorphic in $\tube(\ucal)$, it is locally given as a quotient of two analytic functions, where the denominator has zeros of finite order; therefore, we can find $c>0$, $\ell>0$ such that
\begin{equation}\label{eq:gpoly}
   |G(\zv)| + \gnorm{\nabla G(\zv)}{} \leq c \gnorm{\im \zv}{}^{-\ell} \quad \text{for all }
  \zv \in \kcal + i (\bar\ccal \cap \bar\ucal), \; \im \zv \neq 0.
\end{equation}
Now let $\partial_\bot = \bv^\bot\cdot\nabla$ be the partial derivative in direction of $\bv^\bot$, and $G^{(-m)}$ an $m$th-order antiderivative of $G$ with respect to that direction. (The antiderivative is constructed by repeated integration; note that convexity of $\ccal$ enters here.) Due to \eqref{eq:gpoly}, we know that for sufficiently large $m$, both $G^{(-m)}$ and $\nabla G^{(-m)}$ are continuous on $\kcal + i (\bar\ccal \cap \bar\ucal)$, including the points where $\im \zv = 0$. (Cf.~\cite[Thm.~IX.16]{ReedSimon:1975-2} for this technique.) Since $\bv^\bot \cdot\av=0$, we have
\begin{equation}
   \int F(\xv + i \epsilon \bv^\pm) g(\xv) d\xv =
   (-1)^m \int \frac{G^{(-m)}(\xv + i \epsilon \bv^\pm)}{x_1 + i \epsilon b^\pm_1 }  \partial_\bot^m g (\xv) \, d\xv.
\end{equation}
We can now apply our previous analysis to $G^{(-m)}$, $ \partial_\bot^m g$ in place of $G,g$, yielding
\begin{equation}
 \int \Big( F(\xv+i0 \bv^-)-F(\xv+i0 \bv^+) \Big) g(\xv)d\xv = (-1)^m 2 \pi i \int G^{(-m)}(0,\hat\xv) \partial_\bot^m g(0,\hat\xv)\, d\hat\xv. 
\end{equation}
Observing that $G^{(-m)}(0,\hat\xv) = \lim_{\epsilon \to 0} G^{(-m)}(0,\hat\xv + i \epsilon \hat\bv\vphantom{\bv}^\bot)$, exchanging the limit with the integration sign, and then integrating by parts gives the result \eqref{eq:onepole}.
\cmpqed\end{proof}

Using the above lemma, we can derive a similar formula for a function that has first-order poles at several distinct hyperplanes.

\begin{proposition}\label{proposition:multivarres}
  Let $\ucal \subset \rbb^k$ be a neighborhood of zero, $\ccal \subset \rbb^k$ an open convex cone, and $\av\vn{1},\ldots,\av\vn{p} \in \rbb^k$ pairwise linear independent.
  Let $F$ be meromorphic on $\tube(\ucal)$ and $(\zv \cdot \av\vn{1})\cdots (\zv \cdot \av\vn{p})F(\zv)$ analytic on $\tube(\ccal \cap \ucal)$. 
  For any $M\subset \{1,\ldots,p\}$, let $\bv^M\in\ccal$ such that $\av\vn{j}\cdot\bv^M=0$ if $j\in M$, $\av\vn{j}\cdot\bv^M>0$ if $j \not\in M$.
  Let $\cv \in \ccal$ such that $\av\vn{j}\cdot\cv <0$ for all $j$. Then it holds that
\begin{equation}\label{sevdimres}
F(\xv+i0\, \cv)=\;\sum_{\mathclap{M\subset\left\{ 1,\ldots,p\right\}}}\;(2i\pi)^{|M|}\Big( \prod_{m\in M}\delta(\xv\cdot \av\vn{m})\Big) \res_{\zv\cdot \av\vn{m_{1}}=0} \!\! \ldots \!\! \res_{\zv\cdot \av\vn{m_{|M|}}=0}F(\xv+i0\,\bv^{M})
\end{equation}
with the notation $M=\{m_1,\ldots,m_{|M|}\}$.
\end{proposition}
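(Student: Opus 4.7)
The plan is to induct on $p$, the number of pole hyperplanes, using Lemma~\ref{lemma:onepole} as both the base case and the tool that drives the induction step. For $p=1$, the proposition specializes to Lemma~\ref{lemma:onepole} itself, with $\bv^+=\bv^\emptyset$, $\bv^-=\cv$, $\bv^\bot=\bv^{\{1\}}$, so the two-term right-hand side (over $M=\emptyset$ and $M=\{1\}$) matches exactly.

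For the induction step $p-1 \to p$, I would first ``cross'' only the hyperplane $\zv \cdot \av\vn{p}=0$ using Lemma~\ref{lemma:onepole}, working inside the open convex sub-cone $\ccal' := \{\bv \in \ccal : \av\vn{j} \cdot \bv < 0 \text{ for all } j < p\}$, on which $F$ has no pole other than at $\zv\cdot\av\vn{p}=0$. I would pick two auxiliary directions $\bv^\sharp,\bv^\flat \in \ccal'$ with $\av\vn{p}\cdot\bv^\sharp>0$ and $\av\vn{p}\cdot\bv^\flat=0$; these exist as suitable convex combinations of $\cv$ and $\bv^{\{1,\dots,p-1\}}$, which both lie in the open convex cone $\ccal$. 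Lemma~\ref{lemma:onepole} then produces
\begin{equation*}
F(\xv + i 0\,\cv) = F(\xv + i 0\,\bv^\sharp) + 2\pi i\,\delta(\xv \cdot \av\vn{p})\, G(\xv + i 0\,\bv^\flat),
\qquad G := \res_{\zv\cdot\av\vn{p}=0} F.
\end{equation*}

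Next, I would apply the induction hypothesis to each term separately. For $F(\xv+i0\,\bv^\sharp)$: inside the open half-space $\av\vn{p}\cdot\bv>0$, $F$ has only the $p-1$ poles $\av\vn{j}=0$ ($j<p$), and $\bv^\sharp$ sits in the ``all-negative'' chamber relative to them; the original vectors $\bv^{M'}\in\ccal$ (for $M'\subset\{1,\dots,p-1\}$) satisfy all the needed sign conditions in this half-space, so the induction hypothesis yields exactly the $M\not\ni p$ contributions of the target formula. For $G(\xv+i0\,\bv^\flat)$: $G$ is meromorphic on the hyperplane $\av\vn{p}=0$ with $p-1$ poles at the restrictions of $\av\vn{j}=0$ ($j<p$), and $\bv^\flat$ lies in the ``all-negative'' chamber for these restricted poles; using $\bv^{M'\cup\{p\}}$ as auxiliary directions (these lie on the hyperplane and carry the correct signs) and noting $\res_{M'} G = \res_{M'\cup\{p\}} F$, the induction hypothesis delivers precisely the $M\ni p$ contributions, with $M=M'\cup\{p\}$. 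Summing the two pieces gives the full expansion over all $M\subset\{1,\dots,p\}$.

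The main obstacle I anticipate is verifying that the hypotheses of the proposition survive when passing to the residue $G$: one must check that the restrictions $\av\vn{j}|_{\av\vn{p}=0}$ ($j<p$) remain pairwise linearly independent and that the cone/neighborhood data restrict properly to the hyperplane. The linear independence follows from the existence of $\bv^{\{j,p\}}\in\ccal$: if $\av\vn{i}-\lambda\av\vn{j}=\mu\av\vn{p}$ with $i,j<p$, $i\ne j$, $\lambda,\mu\in\rbb$, then $\av\vn{i}\cdot\bv^{\{j,p\}}=0$, contradicting $\av\vn{i}\cdot\bv^{\{j,p\}}>0$. The remaining verifications (openness and convexity of $\ccal'$ and of $\ccal\cap\{\av\vn{p}\cdot\bv=0\}$, and that $\ucal$ restricts to a neighborhood of zero in the hyperplane) are straightforward. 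Aside from this bookkeeping, the algebra that combines the two sub-sums into a single sum over subsets of $\{1,\dots,p\}$ is automatic once one writes $M=M'$ or $M=M'\cup\{p\}$.
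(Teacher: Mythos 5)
Your proposal is correct and follows essentially the same route as the paper's own proof: induction on $p$, using Lemma~\ref{lemma:onepole} to cross a single hyperplane (the paper peels off $\av\vn{1}$ after renumbering, you peel off $\av\vn{p}$), and then applying the induction hypothesis separately to the boundary value on the far side and to the residue term, restricted to the sub-cones $\ccal^\pm$ and $\ccal^0$. Your explicit verification that the restricted functionals $\av\vn{j}\vert_{\av\vn{p}=0}$ remain pairwise linearly independent, and the construction of $\bv^\sharp,\bv^\flat$ as convex combinations, supply details the paper leaves implicit, but do not change the argument.
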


\begin{proof}

We use induction on $p$. For $p=1$, the claim follows directly from Lemma~\ref{lemma:onepole} with $\bv^+ = \bv^{\emptyset}$, $\bv^-=\cv$, $\bv^\bot=\bv^{\{1\}}$.

Suppose now that the statement holds for $p-1$ in place of $p$. After possibly renumbering  the vectors $\av\vn{j}$, we can choose $\cv'\in\ccal$ such that $\av\vn{1}\cdot \cv' >0$, but $\av\vn{j}\cdot \cv' <0$ for $j \geq 2$. Within the tube over the cone $\ccal^- := \{ \yv \in \ccal : \yv\cdot\av\vn{j} < 0 \text{ for } j \geq 2 \}$, the function $(\zv\cdot\av\vn{1}) F(\zv) $ is analytic. Applying Lemma~\ref{lemma:onepole} with $\bv^+ = \cv'$, $\bv^-=\cv$, yields
\begin{equation}\label{eq:ccp}
 F(\xv + i  0 \cv) = F(\xv + i 0 \cv') + 2 \pi i \delta(\xv\cdot\av\vn{1}) \res_{\zv \cdot \av\vn{1}=0} F(\xv + i 0 \cv'')
\end{equation}
where $\cv''\in\ccal^-$ is chosen such that $\av\vn{1}\cdot \cv''=0$. To the term $F(\xv + i 0 \cv')$ we can apply the induction hypothesis with respect to the cone $\ccal^+:=\{ \yv \in \ccal : \yv\cdot\av\vn{1} > 0 \}$, noting that $(\zv \cdot \av\vn{2})\cdots (\zv \cdot \av\vn{p})F(\zv)$ is analytic on $\tube(\ccal^+ \cap \ucal)$. This yields
\begin{equation}\label{eq:fcpinduction}
F(\xv+i0 \cv')=\;\sum_{\mathclap{M\subset\left\{ 2,\ldots,p\right\}}}(2i\pi)^{|M|}\Big( \prod_{m\in M}\delta(\xv\cdot \av\vn{m})\Big) \res_{\zv\cdot \av\vn{m_{1}}=0}\!\!\ldots\!\! \res_{\zv\cdot \av\vn{m_{|M|}}=0}F(\xv+i0\,\bv^{M}).
\end{equation}
Further, the residue of $F$ in \eqref{eq:ccp} is a meromorphic function on the hyperplane $\zv\cdot\av\vn{1}=0$, which we can identify with $\cbb^{k-1}$; the function is analytic when multiplied with $(\zv\cdot\av\vn{2})\cdots(\zv\cdot\av\vn{p})$. Applying the induction hypothesis with respect to the cone $\ccal^0:=\{ \xv \in \ccal : \xv\cdot\av\vn{1} = 0 \}$ yields
\begin{equation}\label{eq:fresinduction}
\begin{aligned}
\res_{\zv\cdot\av\vn{1}=0} F(\xv+i0 \cv'')
=&\sum_{M\subset\left\{ 2,\ldots,p\right\}}(2i\pi)^{|M|}\Big( \prod_{m\in M}\delta(\xv\cdot \av\vn{m})\Big) 
\\
&\times \res_{\zv\cdot \av\vn{m_{1}}=0}\ldots \res_{\zv\cdot \av\vn{m_{|M|}}=0} \;\res_{\zv\cdot \av\vn{1}=0}F(\xv+i0\,\bv^{M\cup\{1\}}).
\end{aligned}
\end{equation}
Inserting \eqref{eq:fcpinduction} and \eqref{eq:fresinduction} into \eqref{eq:ccp}, and relabeling the summation index $M$ in \eqref{eq:fresinduction} as $M \cup \{1\}$, we arrive at the proposed result.
\cmpqed\end{proof}

\section{Locality in the left wedge}\label{sec:wedges}

As announced in the introduction, we will now proceed to characterize the locality of observables in terms of analyticity properties of their expansion coefficients $\cme{m,n}{\cdotarg}$. In this section, we will consider observables localized in the left wedge $\wcal_r'$ with tip at $(0,r)$ on the time-0 axis. 

This characterization is formulated as the equivalence of three conditions. We formalize locality of a quadratic form $A$ in $\wcal_r'$ as a condition (AW). Given such $A$, we will be able to extend its expansion coefficients $\cme{m,n}{A}$ to CR distributions $T_{m+n}$ on a certain graph, fulfilling a set of conditions (TW). These distributions $T_{m+n}$ can further be extended to analytic functions $F_{m+n}$ on the convex hull of the graph, fulfilling conditions (FW). From functions $F_k$ fulfilling (FW), we can in turn construct a quadratic form $A$ which fulfills (AW). All three conditions will depend on the parameter $r \in \rbb$ and on the choice of an analytic indicatrix $\omega$; we do not explicitly denote this dependence.

Let us now define the three conditions in detail. This is simple to do for condition (AW) on the level of quadratic forms, since we already introduced a corresponding notion of locality in Def.~\ref{definition:omegalocal}.

\begin{definition}\label{def:conditionAW}
$A \in \qf^\omega$ fulfills condition (AW) if it is $\omega$-local in $\wcal_r'$.
\end{definition}


The next locality condition is formulated in the language of CR distributions on graphs in $\rbb^k$, as introduced in Sec.~\ref{sec:graphs}. 
The graph in question, which we denote $\gcal^k_+$, is given as follows: Its nodes are $\lambdav^{(k,j)} = (0,\ldots,0,\pi\ldots,\pi)\in \rbb^k$, with $j$ entries of $\pi$, and $0 \leq j \leq k$. Its edges are those between $\lambdav^{(k,j)}$ and $\lambdav^{(k,j+1)}$. Thus $\gcal_+^k$ has $k+1$ nodes and $k$ edges; see Fig.~\ref{fig:gkplus} for the cases $k=2$ and $k=3$. The locality condition reads as follows.

\begin{definition} \label{def:conditionTW}
A collection $T=(T_{k})_{k=0}^\infty$ of distributions on $\tube(\gcal^k_+)$ fulfills condition (TW) if the following holds for any fixed $k$, and with $\thetav \in \rbb^k$ arbitrary:
\begin{enumerate}
\renewcommand{\theenumi}{(TW\arabic{enumi})}
\renewcommand{\labelenumi}{\theenumi}

\item \label{it:twmero} \emph{Analyticity:} $T_k$ is a CR distribution on $\tube(\gcal^k_+)$.

\item \label{it:twsymm} \emph{$S$-symmetry}:
For any $\sigma \in \perms{k}$, we have
$\displaystyle{
T_k(\thetav)= S^\sigma(\thetav) T_k(\thetav^\sigma) .
}$

\item \label{it:twboundsreal}
\emph{Bounds at nodes:}
For any $j \in \{0,\ldots,k\}$,
\begin{equation*}
\onorm{ T_k( \cdotarg + i \lambdav^{(k,j)} )}{(k-j) \times j} < \infty.
\end{equation*}

\item \label{it:twboundsimag}
\emph{Bounds at edges:}
There exists $c>0$ such that for any $\lambdav\in \bar \gcal_+^k$,
\begin{equation*}
\gnorm{ e^{i \mu r \sum_j \sinh \zeta_j}e^{-\sum_{j}\oa( \sinh \zeta_{j})}
T_k( \pmb{\zeta})\big\vert_{\zetav=\cdotarg +i\lambdav} }{\times} \leq c.
\end{equation*}

\end{enumerate}

\end{definition}

\begin{figure}
        \centering
        \begin{subfigure}[b]{0.3\textwidth}
                \centering
                \includegraphics[width=\textwidth]{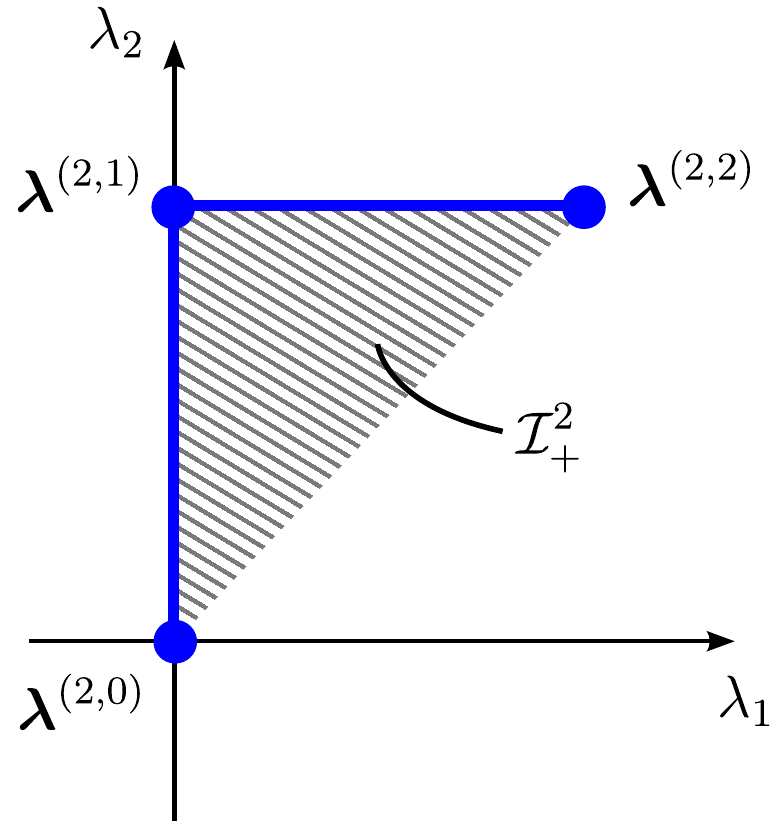}
                \caption{$k=2$}
                \label{fig:g2plus}
        \end{subfigure}%
        ~ 
\qquad
        \begin{subfigure}[b]{0.3\textwidth}
                \centering
                \includegraphics[width=\textwidth]{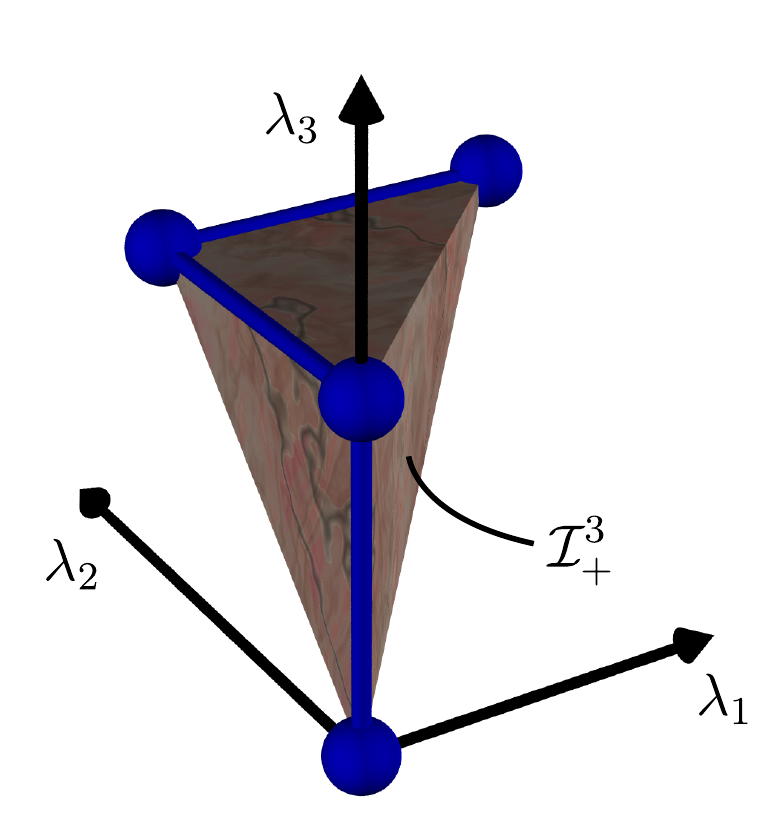}
                \caption{$k=3$}
                \label{fig:g3plus}
        \end{subfigure}
        \caption{The graph $\gcal^k_+$ and its interior $\ical^k_+$}\label{fig:gkplus}
\end{figure}

Note that \ref{it:twsymm} is a rewritten version of Eq.~\eqref{eq:ssymmwedge} in the introduction. Conditions \ref{it:twboundsreal} and \ref{it:twboundsimag} encode the high-energy behavior of the observable and, in the case of \ref{it:twboundsimag}, the geometric position of the localization region $\wcal_r'$.


Finally, we consider analytic functions on the tube domain $\tube(\ical^k_+)$, where 
\begin{equation}\label{eq:ikplus}
 \ical^k_+ := \ich \gcal_+^k = \{\lambdav : 0 < \lambda_1 < \ldots < \lambda_k < \pi \}
\end{equation}
is the interior of (the convex hull of) the graph above. This corresponds to the domain $\rbb^k-i\Lambda_k$ in the notation of \cite{Lechner:2008}. We formulate a locality condition in these terms.

\begin{definition}\label{def:conditionFW}
A collection $F=(F_{k})_{k=0}^\infty$ of functions $\tube(\ical_+^k) \to \cbb$ fulfills condition (FW) if the following holds for any fixed $k$, and with $\thetav\in\rbb^k$, $\zetav \in \cbb^k$ arbitrary:

\begin{enumerate}
\renewcommand{\theenumi}{(FW\arabic{enumi})}
\renewcommand{\labelenumi}{\theenumi}

\item \label{it:fwmero}
\emph{Analyticity:}
$F_k$ is analytic on $\tube(\ical_+^k)$.

\item \label{it:fwsymm} \emph{$S$-symmetry:}
For any $\sigma \in \perms{k}$, we have
$
\displaystyle{
F_k(\thetav+i\zerov)
= S^\sigma(\thetav) F_k(\thetav^\sigma +i\zerov) .
}
$

\item \label{it:fwboundsreal}
\emph{Bounds at nodes:}
For each $j \in \{0,\ldots,k\}$, we have
\begin{equation*}
\| F_k\big( \cdotarg + i \lambdav^{(k,j)} + i \zerov\big) \|_{(k-j) \times j}^{\omega} < \infty.
\end{equation*}

\item \label{it:fwboundsimag}
\emph{Pointwise bounds:}
There exist $c,c'>0$ such that for all $\zetav\in\tube(\ical^k_+)$,
\begin{equation*}
  |F_k(\zetav)| \leq c \operatorname{dist}(\im \zetav,\partial \ical_+^k)^{-k/2} \prod_{j=1}^k \exp \big(\mu r  \im \sinh \zeta_j+ c' \omega(\cosh \re \zeta_j)\big).
\end{equation*}
\end{enumerate}
\noindent
Here $+i\zerov$ denotes the boundary distribution when approached from within $\ical_+^k$.
\end{definition}

Again, \ref{it:fwsymm} corresponds to Eq.~\eqref{eq:ssymmwedge}, and \ref{it:fwboundsreal}, \ref{it:fwboundsimag} encode high-energy behavior of the observable and the position of the wedge.
%
%
The conditions (AW), (TW), (FW) are now equivalent in the following precise sense.

\begin{theorem} \label{theorem:wedgeequiv}
 Let $r\in\rbb$ and an analytic indicatrix $\omega$ be fixed.
\begin{enumerate}
\renewcommand{\theenumi}{(\roman{enumi})}
\renewcommand{\labelenumi}{\theenumi}
 \item \label{it:wedgeequiv-at}
 If $A \in \qf^\omega$ fulfills (AW), then there are distributions $T_k$ fulfilling (TW) such that
\begin{equation}\label{eq:tmnboundary}
\cme{m,n}{A}(\thetav,\etav) = T_{m+n}(\thetav,\etav+i\piv).
\end{equation}
 \item \label{it:wedgeequiv-tf}
If $T_k$ fulfill (TW), then there are functions $F_k$ fulfilling (FW) such that for $0 \leq j \leq k$,
\begin{equation}\label{eq:tkfk}
T_{k}\big(\thetav + i \lambdav^{(k,j)}\big)
= F_{k}\big(\thetav + i \lambdav^{(k,j)} + i\zerov \big).
\end{equation}
 \item \label{it:wedgeequiv-fa} 
  If $F_k$ fulfill (FW), then there is a quadratic form $A$ fulfilling (AW) such that
\begin{equation}\label{eq:fmnfromF}
  \cme{m,n}{A}(\thetav,\etav)=F_{m+n}(\thetav+i\zerov,\etav+i\piv-i\zerov).
\end{equation}
\end{enumerate}
\noindent
Again, $\pm i\zerov$ denotes approach from within $\ical_+^k$.
\end{theorem}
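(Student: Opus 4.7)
My plan is to prove the three implications in the cyclic order (i), (ii), (iii). Part (ii) is largely an application of the graph and tube techniques from Section \ref{sec:tools}, while the serious analytic content lies in parts (i) and (iii).

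For part (i), I would start from the explicit expression \eqref{eq:fmndef} for $\cme{m,n}{A}$, in which the only place $A$ enters is through the matrix elements $\hscalar{\lvector{C}{\thetav}}{A\,\rvector{C}{\etav}}$. By the spectrum condition for spacetime translations, each such matrix element is a boundary value of a function analytic in the $\etav$-rapidities in a tube with $0<\im\eta_j<\pi$. The $\omega$-locality of $A$ in $\wcal_r'$ enters to identify the boundary value at $\im\eta_j=\pi$: commuting $A$ past $\phi(f)$ with $f\in\dcal^\omega(\wcal_r')$, after a suitable translation that brings $\wcal_r'$ to the standard left wedge, converts the $z^\dagger$-legs on the right into $z$-legs absorbed into $\lvector{C}{\cdotarg}$ on the left, giving exactly the crossing relation $\cme{m,n}{A}(\thetav,\etav)=T_{m+n}(\thetav,\etav+i\piv)$. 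The same translation produces the $r$-dependent weight $e^{i\mu r\sum_j\sinh\zeta_j}$ appearing in \ref{it:twboundsimag}. Property \ref{it:twsymm} is inherited from the $S$-symmetry of $\cme{m,n}{A}$ supplied by Theorem \ref{theorem:expansion}, and \ref{it:twboundsreal} is a direct translation of $\onorm{\cme{m,n}{A}}{m\times n}<\infty$. For \ref{it:twboundsimag}, one combines $\gnorm{A}{k+1}^\omega$ with the Paley-Wiener estimates for $f^-$ inside the strip $\strip(0,\pi)$ supplied by Proposition \ref{prop:omegapw}.

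For part (ii), the graph $\gcal^k_+$ is connected, so Lemma \ref{lem:graphtube} gives an analytic extension $F_k$ of $T_k$ to $\tube(\ich\gcal^k_+)=\tube(\ical^k_+)$, with boundary distributions on $\ach\gcal^k_+$ satisfying \eqref{eq:tkfk}. Condition \ref{it:fwsymm} follows from \ref{it:twsymm} by analytic continuation, since both sides are analytic on $\tube(\ical^k_+)$ and agree on the graph. Condition \ref{it:fwboundsreal} is literally \ref{it:twboundsreal}. For the pointwise bound \ref{it:fwboundsimag}, I would apply Lemma \ref{lemma:maxmodcross} to the auxiliary CR function $\zetav\mapsto \exp(i\mu r\sum_j\sinh\zeta_j-\sum_j\oa(\sinh\zeta_j))F_k(\zetav)$, whose $\gnorm{\cdotarg}{\times}$-norm is bounded uniformly on $\ach\gcal^k_+$ by \ref{it:twboundsimag}; Proposition \ref{proposition:pointwise} then converts this into a pointwise estimate with the required factor $\operatorname{dist}(\im\zetav,\partial\ical^k_+)^{-k/2}$. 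Dividing out the regulator and using \ref{it:omegaestimate} to compare $\re\oa$ with $\omega$ yields exactly \ref{it:fwboundsimag}.

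For part (iii), I would set $\cme{m,n}{A}(\thetav,\etav):=F_{m+n}(\thetav+i\zerov,\etav+i\piv-i\zerov)$, which is well defined by \ref{it:fwmero} and \ref{it:fwboundsimag}. Analytic continuation of \ref{it:fwsymm}, combined with \eqref{eq:srelat}, yields $S$-symmetry of $\cme{m,n}{A}$ separately in $\thetav$ and $\etav$, and \ref{it:fwboundsreal} supplies $\onorm{\cme{m,n}{A}}{m\times n}<\infty$, so Theorem \ref{theorem:expansion} produces a unique $A\in\qf^\omega$ with these expansion coefficients and \eqref{eq:fmnfromF} holds by construction. The substance is then to prove (AW): for $f\in\dcal^\omega(\wcal_r')$ one must show $[A,\phi(f)]=0$ as a quadratic form. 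I would expand $[A,\phi(f)]$ into normal-ordered monomials using \eqref{eq:shortexp} and the Zamolodchikov relations, so that its expansion coefficients become explicit contractions of $\cme{m',n'}{A}$ against $f^\pm$. Rewriting these in terms of $F_{m'+n'}$, the contributions from $z^\dagger(f^+)$ and $z(f^-)$ differ by a shift of one integration contour from $\im\zeta_j=0$ to $\im\zeta_j=\pi$ inside $\tube(\ical^{m'+n'}_+)$, using $f^-(\theta+i\pi)=f^+(\theta)$ from Proposition \ref{prop:omegapw}; this shift is justified because the Paley-Wiener decay of $f^-$ in the strip, estimated via \eqref{eq:fmstrip}, dominates the growth of $F_{m'+n'}$ permitted by \ref{it:fwboundsimag}, and the $r$-dependent exponential factors cancel precisely because $\supp f\subset \wcal_r'$. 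The remaining rearrangements close by $S$-symmetry \ref{it:fwsymm}. The main obstacle is exactly this contour-shift argument: it must be carried out uniformly in $(m,n)$ on $\dcal'$-valued integrands, and the lateral contributions at the kinematic edges of $\tube(\ical^k_+)$ must be shown to vanish; this is the step that makes essential use of the \emph{analytic} structure of the indicatrix in the sense of \ref{it:omegaeven}--\ref{it:omegaestimate}.
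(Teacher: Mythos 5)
Your overall architecture coincides with the paper's (parts (ii) and (iii) in particular follow the same route: Lemma~\ref{lem:graphtube}, Lemma~\ref{lemma:maxmodcross} and Prop.~\ref{proposition:pointwise} for (ii); definition of $A$ via Thm.~\ref{theorem:expansion} plus a contour shift in one variable for (iii)). However, there are two genuine gaps.

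In part (i), you attribute the strip analyticity of the matrix elements $\hscalar{\lvector{C}{\thetav}}{A\,\rvector{C}{\etav}}$ in the $\etav$-variables to ``the spectrum condition for spacetime translations''. This is false: $A$ is an arbitrary element of $\qf^\omega$, and for non-local $A$ such matrix elements are generically non-analytic distributions in $\etav$ (take $A=z^{\dagger}(g)z(h)$ with non-analytic $h$). The spectrum condition yields analyticity in the translation parameter $x$ over the forward tube, not analyticity in the rapidities of the bra/ket vectors. The actual source of analyticity is wedge locality itself: as in Lemma~\ref{lemma:K}, one passes to the time-zero fields, uses $\omega$-locality to conclude that the commutator distributions $k_\pm$ are supported in a half-line, obtains analyticity of their Fourier--Laplace transforms in a half-plane, and pulls this back to the strip via $\zeta\mapsto-\mu\sinh\zeta$; the two boundary values of the resulting function $K$ are $\hscalar{\psi}{[\zd(\theta),A]\chi}$ and $\hscalar{\psi}{[A,z(\theta)]\chi}$, and iterating this (Prop.~\ref{proposition:analpositivesimplex}) produces the CR extension one variable at a time. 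In your version, locality is reserved only for ``identifying the boundary value at $\im\eta_j=\pi$'', so the analyticity step has no valid justification and the argument would not get off the ground.

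In part (iii), the domination claim justifying the contour shift does not hold as stated. For $f\in\dcal^\omega(\wcal_r')$, Prop.~\ref{prop:omegapw} gives decay of order $e^{-\omega(\cosh\xi)/a_\omega}$ for $f^-$ inside the strip, whereas the smeared function $K(\xi)$ built from $F_{m+n+1}$ grows like $e^{c'\omega(\cosh\xi)}$ with a constant $c'$ coming from \ref{it:fwboundsimag} that need not be smaller than $1/a_\omega$; so the product need not be integrable and Cauchy's theorem cannot be applied for all $f\in\dcal^\omega(\wcal_r')$. The paper's resolution (Lemma~\ref{lemma:integralshift}) is to construct a \emph{larger} analytic indicatrix $\omega'\geq\omega$ (a positive combination of $\omega$ and a logarithmic indicatrix), perform the shift only for $f\in\dcal^{\omega'}(\wcal_r')$, and then conclude $\omega$-locality via the density characterization in Lemma~\ref{lemma:localitychar}\ref{it:charomegavar}. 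Without invoking that characterization your argument only establishes commutativity with a subclass of test functions and does not directly verify Definition~\ref{definition:omegalocal}. A further technical point you gloss over is the boundary behaviour as $\im\xi\to 0,\pi$, where $K$ diverges like $(\lambda(\pi-\lambda))^{-(m+n)/2}$; the paper controls this with antiderivatives $K^{(-\ell)}$ and integration by parts, not by a vanishing of ``lateral contributions''.
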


Before proceeding to a proof, let us first note that the conditions (AW), (TW), (FW) as well as the statement of the theorem behave ``covariantly'' under translations along the time-0 axis. Namely, if $A$ fulfills (AW) for some $r$, and if $s \in \rbb$, then $A':=U((0,s),0) A U((0,s),0)^\ast$ fulfills (AW) with $r+s$ instead of $r$. Also, one finds \cite[Prop.~3.9]{BostelmannCadamuro:expansion} that $\cme{m,n}{A'}(\thetav,\etav) = \exp(isp_1(\thetav,\etav+i\piv)) \cme{m,n}{A}(\thetav,\etav)$. Similarly, if $T_k$ fulfill (TW) for some $r$, then $T'_k(\zetav):=\exp(i s p_1(\zetav)) T_k(\zetav)$ fulfills (TW) for $r+s$, and likewise for $F'_k(\zetav):=\exp(i s p_1(\zetav)) F_k(\zetav)$. (With respect to \ref{it:twboundsreal} and \ref{it:fwboundsreal}, one notes here that $\exp(i s p_1(\zetav))$ is a factorizing phase factor on the nodes of the graph.) Therefore, if Thm.~\ref{theorem:wedgeequiv} holds for some $r$, then it holds for $r+s$ as well.

Thus, it suffices to prove Thm.~\ref{theorem:wedgeequiv} in the case $r=0$. We will do this for the three parts of the theorem individually in the following three subsections.


\subsection{(AW) \texorpdfstring{$\Rightarrow$}{=>} (TW)}\label{sec:aw-to-tw}

In order to show the first part of Thm.~\ref{theorem:wedgeequiv}, we start from a quadratic form $A\in\qcal^\omega$ which is $\omega$-local in the wedge $\wcal'$, and study the analyticity properties of its expansion coefficients $\cme{m,n}{A}$ along the graph $\gcal^k_+$. As a first step, we prove the following lemma which is very similar to \cite[Lemma~4.1]{Lechner:2008}, but generalized to our class of observables. Note that this is formulated as a statement about matrix elements of observables in the \emph{right} wedge $\wcal$.

\begin{lemma}\label{lemma:K}
Let $A\in \qf^{\omega}$ be $\omega$-local in $\rightwedge$,  and $\psi\in\Hil_m^\omega$, $\chi\in\Hil_n^\omega$. There exists an analytic function $K:\strip(0,\pi) \to \mathbb{C}$ whose boundary values satisfy
\begin{equation}\label{kboundary}
K(\theta)= \hscalar{ \psi}{ [\zd(\theta),A]\chi },
\quad
K(\theta +i\pi)= \hscalar{ \psi}{[A,z(\theta)]\chi}
\end{equation}
in the sense of distributions. Moreover, there is a constant $c_{mn}$ such that for any $0 \leq \lambda \leq \pi$, 
\begin{equation}\label{eq:kbounds}
  \gnorm{K(\cdotarg + i \lambda)}{2} \leq c_{mn} \, \onorm{\psi}{2} \; \onorm{\chi}{2} \; \onorm{A}{m+n+1} \,.
\end{equation}
\end{lemma}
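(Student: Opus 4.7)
The plan is to obtain $K$ as the inverse Fourier transform of $\hat K_0$ after showing that $\omega$-locality forces the relation $\hat K_\pi(p)=e^{-\pi p}\hat K_0(p)$; analyticity on $\strip(0,\pi)$ then follows automatically. The argument splits into three steps, of which the derivation of the Fourier identity will be the main obstacle.

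\emph{Step 1 ($L^2$-bounds on $K_0,K_\pi$).} Expanding $K_0(g) = \hscalar{z(\bar g)\psi}{A\chi} - \hscalar{\psi}{A\zd(g)\chi}$ and the analogous formula for $K_\pi$, I pair in each term the $e^{-\omega(H/\mu)}$ factor of $\onorm{A}{m+n+1}$ (cf.~\eqref{eq:aomeganorm}) with the state side already carrying an $\onorm{\cdotarg}{2}$-norm, while the creator/annihilator on the other side is controlled by the standard (non-$\omega$) $L^2$-bound $\gnorm{z(g)\fpnp_N}{},\gnorm{\zd(g)\fpnp_N}{}\leq\sqrt{N+1}\gnorm{g}{2}$. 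This yields $|K_0(g)|,|K_\pi(g)|\leq c_{mn}\gnorm{g}{2}\onorm{A}{m+n+1}\onorm{\psi}{2}\onorm{\chi}{2}$, so that $K_0, K_\pi\in L^2(\rbb)$ with the norms claimed in \eqref{eq:kbounds}.

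\emph{Step 2 (Fourier identity from locality).} $\omega$-locality yields $\hscalar{\psi}{[A,\phi(f)]\chi}=0$ for all $f\in\dcal^\omega(\wcal')$; splitting $\phi(f)=\zd(f^+)+z(f^-)$ and rearranging gives $\int f^+(\theta)K_0(\theta)d\theta = \int f^-(\theta)K_\pi(\theta)d\theta$. Applying Proposition~\ref{prop:omegapw} to the reflected function $\tilde f(x):=f(-x)\in\dcal^\omega(\wcal)$ (for which $\tilde f^\pm = f^\mp$) shows that $f^+$ extends analytically to $\strip(0,\pi)$, satisfies $f^+(\theta+i\pi)=f^-(\theta)$, and has the rapid decay \eqref{eq:fmstrip}. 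A contour shift in the Fourier transform then gives $\widehat{f^+}(p)=e^{\pi p}\widehat{f^-}(p)$. Substituting this into the locality identity via Plancherel reduces it to $\int\widehat{f^-}(p)\bigl[e^{\pi p}\hat K_0(-p) - \hat K_\pi(-p)\bigr]dp = 0$, and a sufficiency argument on the rich class of test functions $\widehat{f^-}$ forces $\hat K_\pi(p)=e^{-\pi p}\hat K_0(p)$ almost everywhere.

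\emph{Step 3 (construction of $K$ and verification).} Define $K(\zeta):=(2\pi)^{-1}\int\hat K_0(p)e^{ip\zeta}dp$ on $\strip(0,\pi)$. For $\zeta=\theta+i\lambda$ the integrand has modulus $e^{-\lambda p}|\hat K_0(p)|$, which splitting at $p=0$ is dominated by $|\hat K_0(p)|$ for $p>0$ and by $|\hat K_\pi(p)|=e^{-\pi p}|\hat K_0(p)|$ for $p<0$ (valid for $\lambda\in[0,\pi]$). Hence the integrand lies in $L^2(\rbb_p)$ uniformly in $\lambda$ with norm bounded by $(\gnorm{K_0}{2}^2+\gnorm{K_\pi}{2}^2)^{1/2}$ up to numerical constants, so $K$ is well-defined on the strip and analytic by differentiation under the integral. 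Fourier inversion delivers $K(\theta)=K_0(\theta)$ on the bottom boundary, and step 2 gives $K(\theta+i\pi)=K_\pi(\theta)$ on the top. The same $L^2$-estimate combined with step 1 yields the uniform bound \eqref{eq:kbounds}. The main obstacle is step 2: rigorously establishing the Fourier identity requires both the Paley--Wiener estimates of Proposition~\ref{prop:omegapw} to license the contour shift, and a careful density argument to pass from the integral identity over the class $\dcal^\omega(\wcal')$ to a pointwise Fourier relation.
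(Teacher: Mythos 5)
Your architecture is genuinely different from the paper's. The paper (following Lechner) passes to the time-zero fields $\varphi,\pi$, observes that the commutator distributions $k_\pm(f)=\hscalar{\psi}{[\varphi(f),A]\chi}$, $\hscalar{\psi}{[\pi(f),A]\chi}$ are supported in a half-line, applies the classical Paley--Wiener theorem to make their Fourier--Laplace transforms analytic in a half-plane, and composes with $\zeta\mapsto-\mu\sinh\zeta$ to produce $K$; the interior bound \eqref{eq:kbounds} then comes from the three-lines theorem. You instead stay in rapidity/momentum space and try to establish directly the Hardy-space relation $\hat K_\pi(p)=e^{-\pi p}\hat K_0(p)$ between the two prospective boundary values. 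Your Step 1 is correct and yields the endpoint cases of \eqref{eq:kbounds}; Step 3 is the standard reconstruction and even delivers the uniform interior bound without a separate three-lines argument.

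The gap sits in the final move of Step 2. From locality you correctly obtain $\int\widehat{f^-}(p)\,L(p)\,dp=0$ with $L(p)=e^{\pi p}\hat K_0(-p)-\hat K_\pi(-p)$ for all $f\in\dcal^{\omega}(\wcal')$, but the concluding ``sufficiency argument on the rich class of test functions'' cannot be an ordinary $L^2$-density argument. The functional $g\mapsto\int gL$ is continuous only in the graph norm $\gnorm{g}{2}+\gnorm{e^{\pi p}g}{2}$, since $e^{\pi p}\hat K_0(-p)$ is not square-integrable a priori; and every admissible test function automatically lies in the corresponding Hardy-type space, because $e^{\pi p}\widehat{f^-}=\widehat{f^+}\in L^2$. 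What you therefore need is that $\{\widehat{f^-}:f\in\dcal^\omega(\wcal')\}$ is dense in that \emph{weighted} space (dense enough to reach all compactly supported $g$ in the graph norm). This is essentially the one-particle standard-subspace (Bisognano--Wichmann) property of the wedge: it is true, but it is of the same depth as the lemma itself, and its standard proof runs through exactly the position-space support/analyticity argument that the paper uses directly. As written, the entire analytic content of the lemma has been deferred to this unproven density claim. (A secondary, fixable point: the contour shift yielding $\widehat{f^+}(p)=e^{\pi p}\widehat{f^-}(p)$ needs decay of $f^+$ at $\re\zeta\to\pm\infty$ uniformly on the closed strip; \eqref{eq:fmstrip} gives essentially none for a slowly growing $\omega$, but since $f$ has compact support one may apply Prop.~\ref{prop:omegapw} with an auxiliary logarithmic indicatrix to obtain arbitrary polynomial decay.)
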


\begin{proof}
The proof mainly follows \cite{Lechner:2008}, so we confine ourselves to a sketch; see \cite[Lemma~6.2]{Cadamuro:2012} for details in our case.
One considers the time zero fields $\varphi,\pi$ of $\phi$ \cite[Eq.~(3.18)]{Lechner:2008}, and the corresponding expectation values
\begin{equation}
k_{-}(f):=\langle \psi,[\varphi(f),A]\chi\rangle,\quad k_{+}(f):=\langle \psi,[\pi(f),A]\chi\rangle.
\end{equation}
As in Lemma~\ref{lemma:localitychar}\ref{it:chartempered}, these $k_{\pm}$ are Schwartz distributions with support in the right half-line. Hence their Fourier-Laplace transforms $\tilde k_\pm$ are analytic on the lower half plane and fulfill Paley-Wiener type bounds \cite[Thm.~IX.16]{ReedSimon:1975-2}.
Our function $K$ is then given by
\begin{equation}
 K(\zeta) := \frac{1}{2}\Big( \mu \cosh(\zeta) \,\tilde k_-(-\mu\sinh\zeta) - i \tilde k_+(-\mu\sinh\zeta) \Big),
\end{equation}
which works out to have the proposed boundary values \eqref{kboundary}. 

The bound \eqref{eq:kbounds} follows for $\lambda = 0$ and $\lambda = \pi$ by direct estimates from \eqref{kboundary}, using \eqref{eq:aomeganorm}. For general $\lambda$, it is then a consequence of the three-lines theorem, as in \cite[Eq.~(4.15)]{Lechner:2008}.
\cmpqed\end{proof}

This allows us to find analytic continuations of the distributions $\cme{m,n}{A}$ if $A$ is localized in the \emph{left} wedge, again very similar to \cite{Lechner:2008}.\footnote{%
Note that our coefficients $\cme{m,n}{A}$ are equal to $\langle J A^\ast J\rangle_{m+n,m}^{\mathrm{con}}$ in the notation of \cite{Lechner:2008}; correspondingly, \cite{Lechner:2008} obtains the analogue of our Prop.~\ref{proposition:analpositivesimplex} for localization in the \emph{right} wedge.}

\begin{proposition}\label{proposition:analpositivesimplex}

If $A\in \mathcal{Q}^{\omega}$ is $\omega$-local in $\leftwedge$, then there are CR distributions $T_k$ on $\tube(\gcal^k_+)$ such that
\begin{equation}\label{eq:tkboundaryvalue}
T_k(\thetav+i\lambdav^{(k,j)})=\cme{k-j,j}{A}(\thetav), \quad
0\leq j \leq k.
\end{equation}
Further, there exists a constant $c_k$ such that for all $\lambdav\in\bar\gcal^k_+$,
\begin{equation}\label{eq:tkobounds}
\gnorm{\thetav \mapsto T_k(\thetav + i \lambdav) e^{-\sum_j \omega(\cosh \theta_j)}}{\times} \leq c_k \onorm{A}{k}. \end{equation}
\end{proposition}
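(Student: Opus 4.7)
By the translation covariance noted after Theorem~\ref{theorem:wedgeequiv}, it suffices to treat $r=0$. The strategy, following the spirit of \cite[Sec.~4]{Lechner:2008} but adapted to the broader class $\qf^\omega$, is to construct $T_k$ along the chain-graph $\gcal^k_+$ one edge at a time. Since $\gcal^k_+$ has a linear structure with nodes $\lambdav^{(k,j)}$ and edges connecting consecutive nodes, the CR-distribution condition amounts to producing, on each edge, an analytic function in one variable (distributional in the others) whose boundary values at the two endpoints match $\cme{k-j,j}{A}$ and $\cme{k-j-1,j+1}{A}$ with appropriately relabelled arguments. The central tool is Lemma~\ref{lemma:K}, applied to the operator $A' := JA^*J$, which by Definition~\ref{definition:omegalocal} is $\omega$-local in $\rightwedge$; this yields analytic one-variable continuations of suitable commutator matrix elements across $\strip(0,\pi)$, with $L^2$-bounds controlled by \eqref{eq:kbounds}.

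The edge from $\lambdav^{(k,j)}$ to $\lambdav^{(k,j+1)}$ corresponds to continuing the variable $\zeta_{k-j}$ from imaginary part $0$ to $\pi$, which geometrically moves an argument from the ``creation side'' to the ``annihilation side'' in $\cme{\cdotarg,\cdotarg}{A}$. Using the explicit formula \eqref{eq:fmndef}, together with the Zamolodchikov--Faddeev relations and the action of $z(\eta)$ on products $\zd(\theta_1)\cdots\zd(\theta_{k-1})\Omega$, one verifies directly that $\cme{k-j-1,j+1}{A}$ can be identified (up to rearrangement by the $S$-symmetry already established in \cite[Sec.~3]{BostelmannCadamuro:expansion}) with an iterated commutator matrix element of the form $\hscalar{\psi}{[\ldots[A,\zd(\theta_{k-j})]\ldots]\chi}$, with $\psi,\chi$ built from the remaining arguments (in $\fpno$). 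The corresponding matrix element at the adjacent node, $\cme{k-j,j}{A}$, reduces analogously to a \emph{plain} product matrix element containing the same free variable. The transformed version of Lemma~\ref{lemma:K} (obtained by conjugating with $J$) applied to this pair of states gives the desired analytic continuation and the match at both nodes. Consistency at interior nodes then follows by direct comparison, since both adjacent edges produce the same value $\cme{k-j,j}{A}(\thetav)$.

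The bound \eqref{eq:tkobounds} is obtained by iterating \eqref{eq:kbounds}: the full cross norm $\gnorm{\cdotarg}{\times}$ is, by definition \eqref{eq:fullcrossnorm}, controlled by testing against one-variable functions $g_j \in \dcal(\rbb)$ of unit $L^2$-norm, and a Lemma-\ref{lemma:K} continuation in each of the $k$ variables contributes a factor $\onorm{g_j e^{\omega(\cosh \theta_j)}}{2}$ (which is absorbed by the weight $e^{-\omega(\cosh\theta_j)}$ in the bound) together with a factor $\onorm{A'}{k+1}=\onorm{A}{k+1}$; the constant $c_k$ accounts for the three-lines interpolation in the intermediate $\lambda_j \in (0,\pi)$, exactly as in \cite[Eq.~(4.15)]{Lechner:2008}. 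The main obstacle throughout is combinatorial rather than analytic: one must show that the full sum over contractions in \eqref{eq:fmndef} assembles into precisely the ``connected'' iterated-commutator matrix element to which Lemma~\ref{lemma:K} applies, rather than merely continuing term by term; this reorganization, akin to the relation between truncated Wightman functions and nested commutators in standard QFT, is what makes the variable-by-variable continuation along $\gcal^k_+$ well-defined.
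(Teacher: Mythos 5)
Your proposal follows essentially the same route as the paper's (itself only sketched, deferring to \cite[Lemma~6.3]{Cadamuro:2012}): one continues a single variable at a time along the edges of $\gcal^k_+$ by rewriting $\cme{m,n}{A}$ as a sum of matrix elements of commutators $[\zd(\theta),JA^\ast J]$ in the spirit of \cite[Lemma~4.2]{Lechner:2008}, applies Lemma~\ref{lemma:K} to $JA^\ast J$ (which is $\omega$-local in $\rightwedge$), matches boundary values via $\cme{m,n}{A}(\ldots,\theta_m+i\pi,\ldots)=\cme{m-1,n+1}{A}(\ldots)$, and derives \eqref{eq:tkobounds} from \eqref{eq:kbounds}. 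Only a minor slip: since the auxiliary states carry total particle number $k-1$, Lemma~\ref{lemma:K} yields $\onorm{A}{k}$ rather than $\onorm{A}{k+1}$, as required by \eqref{eq:tkobounds}.
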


\begin{proof}
Again, we confine ourselves to a sketch, and refer the reader to \cite[Lemma~6.3]{Cadamuro:2012} for details. One shows that for any $m \geq 1$, $n \geq 0$, the distribution $\cme{m,n}{A}(\thetav,\etav)$ has an analytic continuation in the variable $\theta_{m}$ to the strip $\strip(0,\pi)$, with its distributional boundary value at $\im\theta_{m}=\pi$ given by
\begin{equation}\label{eq:analbouf}
\cme{m,n}{A}(\theta_1,\ldots,\theta_{m}+i\pi,\ldots,\theta_{m+n})
=
\cme{m-1,n+1}{A}(\theta_1,\ldots,\theta_{m},\ldots,\theta_{m+n}).
\end{equation}
This is obtained by rewriting the definition of $\cme{m,n}{A}$ as a sum of matrix elements of commutators $[\zd(\theta), J A\st J]$, as in \cite[Lemma 4.2]{Lechner:2008}. One then applies Lemma~\ref{lemma:K} to obtain the desired analytic continuation. 

Using \eqref{eq:analbouf} repeatedly, we can construct $T_k$ as the analytic continuation of $\cme{k,0}{A}$ along $\tube(\gcal_+^k)$ with the boundary values \eqref{eq:tkboundaryvalue}. Since these are independent of direction, the $T_k$ are CR distributions. By direct computation from the representation of $\cme{m,n}{A}$ in \cite[Lemma 4.2]{Lechner:2008}, using the estimate \eqref{eq:kbounds}, we can also obtain
\begin{multline}
  \Big\lvert \int d\thetav \,\cme{m,n}{A} (\theta_1,\ldots,\theta_{m}+i\lambda,\ldots,\theta_{m+n}) g_1(\theta_1)\cdots g_{m+n}(\theta_{m+n})\Big\rvert
 \\ \leq c_{m+n} \onorm{A}{m+n} \gnorm{g_m}{2} \prod_{j \neq m} \onorm{g_j}{2}
\end{multline}
for any $g_1,\ldots,g_{m+n}\in\dcal(\rbb)$, with some $c_{m+n}>0$ depending only on $m+n$. The proposed bounds \eqref{eq:tkobounds} then follow immediately.
\cmpqed\end{proof}

We are now in the position to prove that the $T_k$ above fulfill all conditions (TW), summarizing the result of the subsection.

\begin{theopargself}
\begin{proof}[\ofwhat{Thm.~\ref{theorem:wedgeequiv}\ref{it:wedgeequiv-at}}]
Let $r=0$. Given $A$ which is $\omega$-local in $\leftwedge$, Prop.~\ref{proposition:analpositivesimplex} gives us distributions $T_k$ with the boundary values \eqref{eq:tmnboundary} and the analyticity property \ref{it:twmero}.
$S$-symmetry \ref{it:twsymm} follows since $\cme{k,0}{A}$ has this property \cite[Prop.~3.4]{BostelmannCadamuro:expansion}. The bounds \ref{it:twboundsreal} are immediate, since 
\begin{equation}\label{eq:fkpnodebounds}
  \gnorm{ T_k (\cdotarg + i \lambdav^{(k,j)})}{(k-j) \times j}^{\omega}
 = \gnorm{ \cme{k-j,j}{A} }{(k-j) \times j}^{\omega} < \infty;
\end{equation}
see \cite[Prop.~3.3]{BostelmannCadamuro:expansion}. 
For \ref{it:twboundsimag}, consider the distribution
\begin{equation}\label{eq:fkhatdef}
   T_k' (\zetav) := T_k (\zetav) \exp(-\sum_{j}\oa(\sinh\zeta_{j})),
\end{equation}
which is CR on $\tube(\gcal_+^k)$. By \ref{it:omegaestimate} and \ref{it:sublinear}, the exponential factor fulfills the estimate
\begin{equation}
|e^{-\oa(\sinh\zeta_{j})}|\leq e^{-\omega(\cosh\re\zeta_{j})}e^{\omega(1)}.
\end{equation}
Therefore, from \eqref{eq:tkobounds} and \eqref{eq:crossnormfactor}, $T_k'$ fulfills the bound
\begin{equation}
  \gnorm{ T_k' (\cdotarg + i \lambdav)}{ \times }
  \leq e^{k\cdot \omega(1)} c_k \onorm{A}{k},
\end{equation}
which implies \ref{it:twboundsimag}.
\cmpqed\end{proof}
\end{theopargself}

\subsection{(TW) \texorpdfstring{$\Rightarrow$}{=>} (FW)}\label{sec:tw-to-fw}

For the second part of Thm.~\ref{theorem:wedgeequiv}, we start from CR distributions $T_k$ on $\tube(\gcal_+^k)$ and extend them as analytic functions to the interior of the graph. The techniques for this have already been introduced in Sec.~\ref{sec:graphs}.

\begin{theopargself}
\begin{proof}[\ofwhat{Thm.~\ref{theorem:wedgeequiv}\ref{it:wedgeequiv-tf}}]
Let $T_k$ fulfill (TW) with $r=0$. Using Lemma~\ref{lem:graphtube}, we can find analytic functions $F_k$ on $\tube(\ical_+^k)$ which have $T_k$ as boundary distributions, as required for \eqref{eq:tkfk}. The $F_k$ evidently fulfill \ref{it:fwmero}. Also, \ref{it:fwsymm} and \ref{it:fwboundsreal} are immediate from  \ref{it:twsymm} and \ref{it:twboundsreal}, respectively. For \ref{it:fwboundsimag}, we consider the function 
\begin{equation}
F_k'(\zetav):= F_k( \zetav ) \, \exp\big(-\sum_{j}\oa(\sinh \zeta_{j})\big).  
\end{equation}
From condition \ref{it:twboundsimag}, we know that
\begin{equation}
   \gnorm{F'_k(\cdotarg + i \lambdav)}{\times} \leq c \quad \text{for all $\lambdav\in\bar\gcal_+^k$. }
\end{equation}
Due to the maximum modulus principle, Lemma~\ref{lemma:maxmodcross}, the same bound holds for all $\lambdav\in\ical_+^k$. Prop.~\ref{proposition:pointwise} then yields
\begin{equation}
   \lvert F'_k(\thetav + i \lambdav) \rvert \leq c'  \operatorname{dist}(\lambdav,\partial \ical_+^k)^{-k/2}.
\end{equation}
Since $\re\oa(\sinh (\theta_j+i\lambda_j)) \leq a_\omega \omega(\cosh \theta_j) + b_\omega$ by \ref{it:omegaestimate}, this gives the  bounds \ref{it:fwboundsimag} for $F_k$.
\cmpqed\end{proof}
\end{theopargself}

\subsection{(FW) \texorpdfstring{$\Rightarrow$}{=>} (AW)}\label{sec:fw-to-aw}

For the last part of Thm.~\ref{theorem:wedgeequiv}, we set out from analytic functions $F_k$ fulfilling (FW) with $r=0$,  and construct a quadratic form $A$ which is $\omega$-local in $\wcal'$. In fact, we define $A$ by its series expansion,
\begin{equation}\label{eq:afromf-w}
A := \sum_{m,n=0}^\infty \int \frac{d\thetav\,d\etav }{m!n!}
F_{m+n}(\thetav + i \zerov, \etav+i\piv-i\zerov)  z^{\dagger m}(\thetav) z^n(\etav).
\end{equation}
We first remark that \eqref{eq:afromf-w} is well-defined. Setting $g_{mn}(\thetav,\etav):= F_{m+n}(\thetav+i\zerov,\etav+i\piv-i\zerov)$, it follows from \ref{it:fwboundsreal} that $\onorm{g_{mn}}{m\times n}<\infty$.
Thus, as a consequence of Thm.~\ref{theorem:expansion}, the series in \eqref{eq:afromf-w} gives a well-defined quadratic form $A\in\qf^\omega$.

To prove that $A$ is $\omega$-local in $\leftwedge$, we need to establish that its commutator with the wedge-local field $\phi'(x)$ vanishes if $x>0$; we do this on the level of the expansion terms in \eqref{eq:afromf-w}. 
To that end, we first recall the commutation relations of $z,z^\dagger$ with $z', z^{\dagger \prime}$ \cite[Lemma 4]{Lechner:2003}. For $g\in \mathcal{H}_{1}$, the following holds in the sense of operator-valued distributions on $\fpn$:
\begin{align}
 [z(\overline{g})',z^{\dagger}(\theta)]&=B^{g,\theta},
 &
[z^{\dagger}(\overline{g})',z(\theta)]&=-(B^{\bar g,\theta})^{*} \label{eq:comzpz}
\\
[z(g)',z(\theta)]&=0, & [z^{\dagger}(g)',z^{\dagger}(\theta)]&=0,\label{eq:zzp}
\end{align}
where $B^{g,\theta}=\oplus_{n=0}^{\infty}B_{n}^{g,\theta}$, and where $B_{n}^{g,\theta}$ acts on $\Hil_n$ as a multiplication operator,
\begin{equation}\label{multop}
B_{n}^{g,\theta}(\theta_{1},\ldots,\theta_{n})=g(\theta)\prod_{j=1}^{n}S(\theta-\theta_{j}).
\end{equation}
We show a generalization of \eqref{eq:comzpz} to normal-ordered products of annihilators and creators.

\begin{lemma}
Let $g\in\mathcal{H}_{1}$. The following commutation relations hold in the sense of operator-valued distributions on $\fpn$:
\begin{align}
[z(\overline{g})',\zd(\theta_{1})\ldots \zd(\theta_{m})]&=\sum_{j=1}^{m} \Big(\prod_{l=j+1}^{m}S(\theta_{j}-\theta_{l})\Big)\zd(\theta_{1})\ldots\widehat{\zd(\theta_{j})}\ldots \zd(\theta_{m})B^{g,\theta_{j}},\label{comm2}
\\
[\zd(\overline{g})',z(\theta_{1})\ldots z(\theta_{m})]&=-\sum_{j=1}^{m}\Big(\prod_{l=1}^{j-1}S(\theta_{l}-\theta_{j})\Big)(B^{\bar g,\theta_{j}})^{*}z(\theta_{1})\ldots\widehat{z(\theta_{j})}\ldots z(\theta_{m}).\label{comm4}
\end{align}
\end{lemma}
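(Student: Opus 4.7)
The plan is to prove both identities simultaneously by induction on $m$, with \eqref{eq:comzpz} serving as the base case $m=1$, using the convention that an empty product of $S$-factors equals one. For the inductive step I will apply the Leibniz rule for commutators, $[X,Y_1Y_2]=[X,Y_1]Y_2+Y_1[X,Y_2]$, splitting off the first creator in \eqref{comm2} and the first annihilator in \eqref{comm4}.

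The essential technical ingredient is a pair of ``commutation up to a scalar'' identities between $B^{g,\theta}$ (respectively its adjoint) and a single Zamolodchikov-Faddeev operator:
\begin{equation}
B^{g,\theta}\zd(\theta') = S(\theta-\theta')\,\zd(\theta')\,B^{g,\theta},
\qquad
(B^{\bar g,\theta})^{\ast} z(\theta') = S(\theta-\theta')\, z(\theta')\,(B^{\bar g,\theta})^{\ast}.
\end{equation}
Both follow by direct computation from the definitions of $\zd$, $z$ in Sec.~\ref{sec:zgen} and the multiplication-operator form \eqref{multop} of $B^{g,\theta}$: inserting an extra argument $\theta'$ into the state on which $B^{g,\theta}$ acts extends the product $\prod_i S(\theta-\lambda_i)$ by exactly the scalar $S(\theta-\theta')$ appearing on the right; the delta-distribution in the definition of $\zd$ together with $S$-symmetry makes this compatible with the symmetrization sum. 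For the second identity, one uses that $(B^{\bar g,\theta})^{\ast}$ acts on $\Hil_n$ as multiplication by $g(\theta)\prod_{i}S(\theta_i-\theta)$, by virtue of $\overline{S(\zeta)}=S(-\zeta)$.

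Given \eqref{eq:Bmove}, the proof of \eqref{comm2} unfolds as follows. At step $m$, Leibniz gives
\begin{equation}
[z(\bar g)',\,\zd(\theta_1)\cdots\zd(\theta_m)] = [z(\bar g)',\zd(\theta_1)]\,\zd(\theta_2)\cdots\zd(\theta_m) \,+\, \zd(\theta_1)\,[z(\bar g)',\,\zd(\theta_2)\cdots\zd(\theta_m)].
\end{equation}
The first commutator equals $B^{g,\theta_1}$ by \eqref{eq:comzpz}, and iterating the first identity of \eqref{eq:Bmove} pushes it to the right through the remaining $m-1$ creators, producing exactly the factor $\prod_{l=2}^{m}S(\theta_1-\theta_l)$ and thus the $j=1$ summand of the claim. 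The second term is the inductive hypothesis applied to a product of $m-1$ creators, contributing the summands with $j\geq 2$. The derivation of \eqref{comm4} is entirely parallel: splitting off $z(\theta_1)$, the first piece gives the $j=1$ summand, and in the inductive piece the second identity of \eqref{eq:Bmove} is used to move each $(B^{\bar g,\theta_j})^{\ast}$ past $z(\theta_1)$, extending $\prod_{l=2}^{j-1}S(\theta_l-\theta_j)$ to $\prod_{l=1}^{j-1}S(\theta_l-\theta_j)$.

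There is no deep obstacle here. The one point requiring care is the verification of \eqref{eq:Bmove}, because $B^{g,\theta}$ has a distinct multiplication kernel on each particle space $\Hil_n$ while $\zd(\theta')$ involves $S$-symmetrization over $\perms{n+1}$; once this bookkeeping is done, the induction is mechanical. All identities are to be understood in the sense of operator-valued distributions on $\fpn$, which is unproblematic since the base relations \eqref{eq:comzpz}--\eqref{eq:zzp} already hold in that sense.
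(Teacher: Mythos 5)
Your proposal is correct and follows essentially the same route as the paper: induction on $m$ via the Leibniz rule, with the exchange relation $B^{g,\theta}\zd(\theta')=S(\theta-\theta')\zd(\theta')B^{g,\theta}$ (verified directly from the definitions) as the key ingredient; the paper splits off the last factor where you split off the first, which is immaterial. The only other difference is that the paper obtains \eqref{comm4} from \eqref{comm2} simply by taking adjoints, whereas you run a second, parallel induction using the adjoint exchange relation — both are valid, the paper's being marginally shorter.
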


\begin{proof}
Our proof of Eq.~\eqref{comm2} is based on induction on $m$.
For $m=1$, Eq.~\eqref{comm2} reduces to \eqref{eq:comzpz}, and is proved as in \cite[Lemma 4]{Lechner:2003}.

Assume now that Eq.~\eqref{comm2} holds for $m-1$ in place of $m$. Using the induction hypothesis and Eq.~\eqref{eq:comzpz}, we have
\begin{equation}\label{eq:computecommutator}
\begin{aligned}
\lbrack z(\overline{g})',& \zd(\theta_{1})\ldots  \zd(\theta_{m-1})\zd(\theta_{m}) \rbrack\\
&=[z(\overline{g})',\zd(\theta_{1})\ldots \zd(\theta_{m-1})]\zd(\theta_{m})
+\zd(\theta_{1})\ldots \zd(\theta_{m-1})[z(\overline{g})',\zd(\theta_{m})]\\
&=\sum_{j=1}^{m-1}\Big(\prod_{l=j+1}^{m-1}S(\theta_{j}-\theta_{l})\Big) \zd(\theta_{1})\ldots \widehat{\zd(\theta_{j})}\ldots \zd(\theta_{m-1})B^{g,\theta_{j}}\zd(\theta_{m})\\
&\qquad\qquad+\zd(\theta_{1})\ldots \zd(\theta_{m-1})B^{g,\theta_{m}}.
\end{aligned}
\end{equation}
With the help of the exchange relation
\begin{equation}\label{combz}
B^{g,\theta_j}\zd(\theta_m)=S(\theta_j-\theta_m)\zd(\theta_m)B^{g,\theta_j},
\end{equation}
which can be computed directly from the definitions,
we obtain \eqref{comm2} from \eqref{eq:computecommutator}.
Now \eqref{comm4} follows from \eqref{comm2} by taking adjoints.
\cmpqed\end{proof}

Equipped with these tools, we can compute $[A,\phi'(x)]$ for a generic $A\in\qf^\omega$ in terms of its expansion coefficients.

\begin{proposition}\label{proposition:aphicomm}
Let $A\in \qf^{\omega}$, $g\in \dcal(\rbb^2)$. In the sense of matrix elements on $\fpno \times \fpno$, it holds that
\begin{multline}\label{maincommutator}
[A,\phi'(g)]=\sum_{m,n=0}^\infty \int \frac{d\thetav d\etav}{m!n!}\int d\xi\;
 \Big( \cme{m,n+1}{A} (\thetav, \xi , \etav) z^{\dagger m}(\thetav) (B^{\overline{g^{+}},\xi})^{*}  z^{n}(\etav) \\
- \cme{m+1,n}{A} (\thetav,\xi,\etav) z^{\dagger m}(\thetav) B^{g^{-},\xi}  z^{n}(\etav)\Big).
\end{multline}
\end{proposition}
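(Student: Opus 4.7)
My plan is to decompose $\phi'(g)$ into its creation and annihilation parts, expand $A$ via Thm.~\ref{theorem:expansion}, commute term by term using the identities \eqref{eq:comzpz}--\eqref{eq:zzp} and their multi-variable generalizations \eqref{comm2}--\eqref{comm4}, and then exploit the $S$-symmetry of $\cme{m,n}{A}$ to collapse the resulting sums.

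First I would unpack $\phi'(g)$. From $\phi'(g)=J\phi(g^{j})J$ and $\phi(f)=\zd(f^{+})+z(f^{-})$, a direct computation of $\widetilde{g^{j}}$ gives $(g^{j})^{\pm}=\overline{g^{\pm}}$, hence
\begin{equation*}
  \phi'(g) \;=\; \zd(\overline{g^{+}})' + z(\overline{g^{-}})'.
\end{equation*}
By Thm.~\ref{theorem:expansion}, $A$ acts on $\fpno$ as the (strongly convergent) series $\sum_{m,n}(m!n!)^{-1}z^{\dagger m}z^{n}(\cme{m,n}{A})$, so it suffices to work termwise, computing
\begin{equation*}
  [z^{\dagger m}(\thetav)z^{n}(\etav),\,\phi'(g)]
  = [z^{\dagger m}(\thetav)z^{n}(\etav),\,\zd(\overline{g^{+}})']
   +[z^{\dagger m}(\thetav)z^{n}(\etav),\,z(\overline{g^{-}})']
\end{equation*}
as operator-valued distributions on $\fpno$.

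Next I would apply the commutator lemma. Using \eqref{eq:zzp}, the reflected creator $\zd(\overline{g^{+}})'$ commutes with every $\zd(\theta_{i})$, so only the $z^{n}(\etav)$ factor contributes, and \eqref{comm4} yields a sum over $j\in\{1,\ldots,n\}$ involving $(B^{\overline{g^{+}},\eta_{j}})^{*}$ and an $S$-prefactor $\prod_{l<j}S(\eta_{l}-\eta_{j})$. Symmetrically, $z(\overline{g^{-}})'$ commutes with every $z(\eta_{i})$, and \eqref{comm2} produces a sum over $j\in\{1,\ldots,m\}$ involving $B^{g^{-},\theta_{j}}$ and $\prod_{l>j}S(\theta_{j}-\theta_{l})$. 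Assembling these and smearing against $\cme{m,n}{A}$, one obtains an expression of the shape \eqref{maincommutator} but with an internal sum over the position $j$ of the contracted variable.

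The key step is to collapse that sum. Here I would invoke the $S$-symmetry of $\cme{m,n+1}{A}$ and $\cme{m+1,n}{A}$. Using the transposition rule \eqref{eq:ssymmwedge} repeatedly, one can move a distinguished argument $\xi$ from position $j$ to the extremal position (position $1$ in the $\etav$-block for the first sum, and position $m+1$ in the $\thetav$-block for the second), picking up precisely the $S$-prefactor $\prod_{l<j}S(\eta_{l}-\xi)$ or $\prod_{l>j}S(\xi-\theta_{l})$ produced by the commutator. After this rewriting, the summand becomes independent of $j$, so the $j$-sum contributes a combinatorial factor $n+1$ (resp.\ $m+1$) which converts $\frac{1}{m!n!}$ into $\frac{1}{m!n!}$ at the new index $n\mapsto n+1$ (resp.\ $m\mapsto m+1$), giving the two terms of \eqref{maincommutator}.

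The main technical obstacle is control of the infinite series and the interchange of sums with the $\xi$-integration. I would handle this by working at fixed total particle number cut-off $\fpnp_{N}(\cdotarg)\fpnp_{N}$: each term in the expansion of $A$ is a well-defined form in $\qf^{\omega}$ (Thm.~\ref{theorem:expansion}), the commutators with $\phi'(g)$ inherit norm bounds from \eqref{omegaz} and \eqref{eq:aomeganorm}, and the compact support of $g$ together with $\onorm{\cme{m,n+1}{A}}{m\times(n+1)}<\infty$ (and likewise for $\cme{m+1,n}{A}$) ensures convergence of the regrouped series on the subspace $\fpno$. All distributional identities are then understood as matrix-element identities on $\fpno\times\fpno$, as stated.
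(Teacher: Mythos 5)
Your proposal is correct and follows essentially the same route as the paper's proof: decompose $\phi'(g)=\zd(\overline{g^{+}})'+z(\overline{g^{-}})'$, commute termwise against the expansion of $A$ using \eqref{eq:zzp}, \eqref{comm2}, \eqref{comm4}, then use $S$-symmetry of the coefficients to absorb the $S$-prefactors and collapse the $j$-sum into a single $\xi$-integration before relabelling indices. The only cosmetic difference is that the paper dispenses with your cutoff argument by simply noting that the sums are finite in matrix elements on $\fpno\times\fpno$.
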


Note that the infinite sums above are actually finite in matrix elements; convergence questions do not arise.

\begin{proof}
We need to compute the commutator
\begin{equation}
[A,\phi'(g)]=\Big\lbrack \sum_{m,n=0}^{\infty}\int \frac{d\thetav d\etav}{m!n!}\;\cme{m,n}{A}(\thetav,\etav) z^{\dagger m}(\thetav)z^{n}(\etav),\;z^{\dagger}(\overline{g^{+}})'+z(\overline{g^{-}})'\Big\rbrack.
\end{equation}
Using \eqref{eq:zzp}, \eqref{comm2} and \eqref{comm4}, this expands to
\begin{equation}
 \begin{aligned}
\lbrack A,\phi'(g) \rbrack =& \hphantom{-}
\sum_{m\geq 0, n\geq 1}\int \frac{d\thetav d\etav}{m!n!}\;\cme{m,n}{A}(\thetav,\etav) \sum_{j=1}^{n}\ \Big(\prod_{l=1}^{j-1}S(\eta_{l}-\eta_{j})\Big)
\\
&\qquad\qquad\qquad \times z^{\dagger m}(\thetav)(B^{\overline{g^{+}},\eta_{j}})^{*}z(\eta_{1})\ldots\widehat{z(\eta_{j})}\ldots z(\eta_{n})\\
&-\sum_{m\geq 1, n\geq 0}\int \frac{d\thetav d\etav}{m!n!}\;\cme{m,n}{A}(\thetav,\etav)
\sum_{j=1}^{m}\Big(\prod_{l=j+1}^{m}S(\theta_{j}-\theta_{l})\Big)
\\
&\qquad\qquad\qquad \times z^{\dagger}(\theta_{1})\ldots\widehat{\zd(\theta_{j})}\ldots \zd(\theta_{m})B^{g^{-},\theta_{j}}z^{n}(\etav).
\end{aligned}
\end{equation}
We set $\eta_{j}=:\xi$ in the first sum and $\theta_{j}=:\xi$ in the second sum, and permute the argument of $\cme{m,n}{A}$ so that they read $(\hat\thetav,\xi,\etav)$ and $(\thetav,\xi,\hat\etav)$ respectively, noting that this cancels the $S$-factors in the sums due to $S$-symmetry of the $\cme{m,n}{A}$ \cite[Prop.~3.4]{BostelmannCadamuro:expansion}. This yields:
\begin{multline}
[A,\phi'(g)]=\sum_{m\geq 0, n\geq 1}\int \frac{d\thetav d\hat{\etav}}{m!(n-1)!}\int d\xi\;
 \cme{m+n}{A}(\thetav,\xi,\hat\etav)
z^{\dagger m}(\thetav)(B^{\overline{g^{+}},\xi})^{*}z^{n-1}(\hat{\etav})\\
-\sum_{m\geq 1, n\geq 0}\int \frac{d\hat{\thetav} d\etav}{(m-1)!n!}\int d\xi\;
 \cme{m+n}{A}(\hat{\thetav},\xi,\etav)z^{\dagger m-1}(\hat{\thetav})B^{g^{-},\xi}z^{n}(\etav).
\end{multline}
The result \eqref{maincommutator} now follows by a relabeling of the summation indices.
\cmpqed\end{proof}

We will now prove from conditions (FW) that the operator $A$, as given as in \eqref{eq:afromf-w}, is localized in $\leftwedge$. To that end, we need to show that $[A,\phi'(g)]$ vanishes if $g$ has support in $\rightwedge$. The idea for the proof is based on Prop.~\ref{proposition:aphicomm}, and works as follows. For our specific $A$, we have $\cme{m,n}{A}(\thetav,\etav)=F_{m+n}(\thetav+i\zerov,\etav+i\piv-i\zerov)$; it follows that $\cme{m+1,n}{A}(\thetav,\xi+i\pi,\etav)=\cme{m,n+1}{A}(\thetav,\xi,\etav)$. Also, we have (at least formally) that $B^{g^-,\xi+i \pi} = (B^{\overline{g^+},\xi})\st$, since $g^-(\theta+i\pi)=g^+(\theta)$. Inserting this into Eq.~\eqref{maincommutator}, we see that $[A,\phi'(g)]$ vanishes if it is allowed to shift the integration contour in $\xi$ from $\rbb$ to $\rbb+i\pi$.

Whether it is actually permitted to shift this integration contour is crucially dependent on the growth behavior of the analytic functions involved, and by this means, dependent on the localization regions of $F_k$ and $g$. We will therefore first analyze this growth behavior carefully. To that end, let $m,n,q\in\nbb_0$, $f \in \dcal(\rbb^{m+n})$, and $\nuv \in \rbb^q$ be fixed (until Lemma~\ref{lemma:integralshift} inclusive). We define
\begin{equation}\label{eq:kxi}
   K(\xi) := \Big(\prod_{j=1}^q S(\xi-\nu_j)\Big)\int d\thetav d\etav  \, f(\thetav,\etav)  F_{m+n+1}(\thetav+i\zerov,\xi,\etav+i\piv-i\zerov).
\end{equation}
By our assumptions on $F_k$, this $K$ is analytic for $\xi\in\strip(0,\pi)$, with distributional boundary values. We can derive bounds for $K$ near the boundary.

\begin{lemma} \label{lemma:Kbounds}
If $F_{m+n+1}$ fulfills \ref{it:fwmero} and \ref{it:fwboundsimag} for $r=0$, then there exist $c,c'>0$ such that
\begin{equation}
    \lvert K(\xi+i\lambda) \rvert
  \leq
   \frac{c \, e^{c' \omega(\cosh \xi)}}{(\lambda(\pi-\lambda))^{(m+n)/2}}, \quad 0 < \lambda < \pi.
\end{equation}
\end{lemma}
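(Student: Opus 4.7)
Plan: The strategy is to bound $F_{m+n+1}$ pointwise using the estimate (FW4), evaluated at a carefully chosen interior point of the tube $\tube(\ical_+^{m+n+1})$, and then to integrate against the compactly supported test function $f$ appearing in the definition of $K$. For fixed $\xi+i\lambda$ with $\lambda\in(0,\pi)$, I would interpret $K(\xi+i\lambda)$ as the limit of a regularized expression obtained by shifting the imaginary parts of $\thetav$ and $\etav$ to interior values $\lambdav_0\in\rbb^m$, $\boldsymbol{\mu}_0\in\rbb^n$, strictly ordered so that $(\lambdav_0,\lambda,\boldsymbol{\mu}_0)\in\ical_+^{m+n+1}$, and then sending those shifts to the boundaries $\zerov_+$ and $\piv_-$. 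The task reduces to bounding the regularized integral uniformly in the shifts.

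With entries evenly spread, $\lambda_{0,j}=j\lambda/(m+1)$ and $\mu_{0,j}=\lambda+j(\pi-\lambda)/(n+1)$, the Euclidean distance of $(\lambdav_0,\lambda,\boldsymbol{\mu}_0)$ to $\partial\ical_+^{m+n+1}$ is of order $\min(\lambda,\pi-\lambda)$, up to a combinatorial constant depending on $m,n$. Boundedness of $|S|$ on $\strip(0,\pi)$ controls the prefactor $\prod_j S(\xi-\nu_j)$, and compactness of $\operatorname{supp} f$ absorbs the factors $\prod_j e^{c'\omega(\cosh\theta_j)}\prod_j e^{c'\omega(\cosh\eta_j)}$ coming from (FW4) into a finite constant, leaving only the factor $e^{c'\omega(\cosh\xi)}$ outside the integral.

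To obtain the sharper exponent $(m+n)/2$ promised by the lemma, rather than the naive $(m+n+1)/2$ one would obtain by applying (FW4) to all $m+n+1$ variables simultaneously, I would apply Proposition~\ref{proposition:pointwise} to $F_{m+n+1}(\cdotarg,\xi+i\lambda,\cdotarg)$ regarded as an analytic function of the $m+n$ variables $(\thetav,\etav)$ alone, on the tube over $\ical^{\lambda}:=\{(\lambdav',\boldsymbol{\mu}'):(\lambdav',\lambda,\boldsymbol{\mu}')\in\ical_+^{m+n+1}\}$. The distance of $(\lambdav_0,\boldsymbol{\mu}_0)$ to $\partial\ical^{\lambda}$ is still of order $\min(\lambda,\pi-\lambda)$, so Proposition~\ref{proposition:pointwise} applied to this $(m+n)$-dimensional slice produces a pointwise estimate with exponent $(m+n)/2$; the $\gnorm{\cdotarg}{\times}$ supremum appearing there is in turn controlled via (FW4) once paired with the compactly supported $f$. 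Finally, the elementary inequality $\min(\lambda,\pi-\lambda)\geq\lambda(\pi-\lambda)/\pi$ converts $\min(\lambda,\pi-\lambda)^{-(m+n)/2}$ to $\pi^{(m+n)/2}(\lambda(\pi-\lambda))^{-(m+n)/2}$, giving the stated form of the bound.

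The main obstacle is this exponent: a direct application of (FW4) to all $m+n+1$ variables at once gives $(m+n+1)/2$, which is one half-power too singular as $\lambda\to 0$ or $\lambda\to\pi$. The key technical maneuver is to treat the $\xi$-direction separately from the $\thetav,\etav$-directions, so that Proposition~\ref{proposition:pointwise} is invoked only on the $(m+n)$-dimensional slice at fixed $\xi+i\lambda$; a secondary delicate point is verifying that the cross-norm supremum in that application can be made finite despite the $\omega$-growth of $F_{m+n+1}$ in $\thetav$ and $\etav$, which uses crucially the compact support of $f$ in the outer integration.
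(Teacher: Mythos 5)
Your overall geometry is right --- regularize into the interior of $\ical_+^{m+n+1}$ by shifts proportional to $\lambda$ and $\pi-\lambda$, use boundedness of $S$ on the closed strip, and let the compact support of $f$ absorb the $\omega$-factors in $\thetav,\etav$ so that only $e^{c'\omega(\cosh\xi)}$ survives --- but the two steps you single out as the key maneuvers do not go through as stated. First, applying Prop.~\ref{proposition:pointwise} to the $(m+n)$-dimensional slice at fixed $\xi+i\lambda$ requires finiteness of $\sup_{\lambdav'\in\ical^\lambda}\gnorm{F_{m+n+1}(\cdotarg+i\lambdav',\xi+i\lambda,\cdotarg)}{\times}$, where the supremum runs over the \emph{whole} interior of the slice, including points arbitrarily close to its boundary. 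The only available hypothesis is \ref{it:fwboundsimag}, a pointwise bound that degenerates like $\operatorname{dist}(\cdotarg,\partial\ical_+^{m+n+1})^{-(m+n+1)/2}$ exactly there, so it gives no finite bound on that supremum; and the cross-norm is a supremum over arbitrary $L^2$-normalized factorized test functions, so ``pairing with the fixed compactly supported $f$'' cannot stand in for it. Cross-norm control on the closed graph is \ref{it:twboundsimag}-type information, which is not among the hypotheses of the lemma (only \ref{it:fwmero} and \ref{it:fwboundsimag} are assumed); in effect you are running Prop.~\ref{proposition:pointwise} backwards.

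Second, even granting a pointwise bound at the evenly spread interior point, you have not connected it to $K(\xi+i\lambda)$, which is the pairing of $f$ with the \emph{boundary} distribution at $\thetav+i\zerov$, $\etav+i\piv-i\zerov$. Your estimate holds at a point whose distance to the boundary is of order $\min(\lambda,\pi-\lambda)$ only because the shifts are still of that size; as you send the shifts to $\zerov_+$ and $\piv_-$ the distance shrinks and the estimate blows up, so the regularized integrals are \emph{not} bounded uniformly in the shifts by this argument. This is precisely where the paper argues differently: it freezes $\xi+i\lambda$, restricts $F_{m+n+1}$ to a single complex line $z\mapsto(\thetav+zh\nuv_L,\xi+i\lambda,\etav-zh\nuv_R)$ with $h\sim\min(\lambda,\pi-\lambda)$, gets $|G(z)|\leq c_\rho\, e^{c'\omega(\cosh\xi)}(h|\im z|)^{-(m+n+1)/2}$ from \ref{it:fwboundsimag}, and then invokes the standard one-variable boundary-distribution estimate (antiderivatives in $z$ plus integration by parts against the test function), which tames the $|\im z|$-singularity and leaves only the coefficient $h^{-(m+n+1)/2}$. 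Some version of that antiderivative argument is indispensable; a pointwise interior bound alone does not control a boundary pairing. Incidentally, the half power you are chasing is not delivered by the paper's own argument either --- \ref{it:fwboundsimag} applied to $F_{m+n+1}$ gives the exponent $(m+n+1)/2$ --- and the discrepancy is harmless, since Lemma~\ref{lemma:integralshift} only needs some fixed power of $\lambda(\pi-\lambda)$; so the slice trick is unnecessary, and in any case cannot be justified from the stated hypotheses.
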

\begin{proof}
 We set $h:=\min(\lambda,\pi-\lambda)/(m+n+1)$ and $\nuv_L := (1,2,\ldots,m)$, $\nuv_R := (n,\ldots,2,1)$. For fixed $\xi,\lambda,\thetav,\etav$, set
\begin{equation}
   G(z):= F_{m+n+1}(\thetav+zh\nuv_L,\xi+i\lambda,\etav-zh\nuv_R).
\end{equation}
This function is analytic in $z \in \rbb+i(0,1)$, and for the imaginary part of the argument of $F_{m+n+1}$, we have
\begin{equation}
 \operatorname{dist}\big( (h \im z, \ldots, mh\im z, \lambda, \pi-nh\im z, \ldots, \pi-h \im z)   ,\partial\ical^k_+\big) \geq h \im z.
\end{equation}
Then, for any $\rho>0$, condition \ref{it:fwboundsimag} yields constants $c_\rho$ (dependent on $\rho$, but not on $\xi,\lambda$) and $c'$ such that
\begin{multline}\label{eq:gzbound}
   |G(z)| \leq c_\rho  e^{c' \omega(\cosh \xi)} (h \lvert \im z\rvert)^{-k/2}
\\
\quad \text{for all }  z \in (-\rho,\rho)+i(0,1), \; \|\thetav\| \leq \rho, \; \|\etav\| \leq \rho.
\end{multline}
Using standard techniques -- see, e.g., \cite[Prop.~4.2]{BostelmannFewster:2009} -- one obtains from \eqref{eq:gzbound} the following estimate for the boundary distribution:
\begin{equation}
 \Big\lvert
\int G(x+i0) g(x)\,dx
\Big\rvert
\leq c_{g,\rho} h^{-k/2}
   e^{c' \omega(\cosh \xi)}
\leq \frac{ c_{g,\rho} \, e^{c' \omega(\cosh \xi)}} {(3(m+n+1) \lambda(\pi-\lambda) )^{k/2}}
\end{equation}
where the constant $c_{g,\rho}$ may depend on the test function $g\in\dcal(-\rho,\rho)$ and the cutoff $\rho$, but \emph{not} on $G$ (and hence not on $\xi,\lambda,\thetav,\etav$).  In view of the definition \eqref{eq:kxi} of $K$, where the factors $S(\xi-\nu_j)$ are bounded functions on the strip $\strip(0,\pi)$, this yields the proposed result.
\cmpqed\end{proof}

This enables us to describe details of the proposed shifting of integral contours.

\begin{lemma} \label{lemma:integralshift}
If $F_{m+n+1}$ fulfills \ref{it:fwmero} and \ref{it:fwboundsimag} for $r=0$, then there exists an analytic indicatrix $\omega'\geq \omega$ such that for all $g \in \dcal^{\omega'}(\wcal)$,
\begin{equation}
  \int K(\xi+i0) g^-(\xi) d\xi = \int K(\xi+i\pi-i0) g^+(\xi) d\xi.
\end{equation}
\end{lemma}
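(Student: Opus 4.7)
The strategy is to deform the real-axis integration contour upward across the strip $\strip(0,\pi)$ using Cauchy's theorem on the analytic extension of $g^-$. Let $c'$ denote the constant from Lemma~\ref{lemma:Kbounds} and set $M := \lceil (m+n)/2 \rceil + 1$. I choose the indicatrix $\omega' := N(\omega + \log(1 + \cdot))$ with $N\in\nbb$ large enough that $\omega'(p)/a_{\omega'} - c'\omega(p) - M\log\cosh p \to \infty$ as $p \to \infty$; a direct check of \ref{it:omegamonoton}--\ref{it:omegaestimate} shows that this is an analytic indicatrix with $\omega' \geq \omega$ (analytic indicatrices are closed under sums and positive rescalings). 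By Prop.~\ref{prop:omegapw}, for any $g \in \dcal^{\omega'}(\wcal)$, the function $g^-$ extends analytically to $\strip(0,\pi)$, is continuous on the closure with $g^-(\xi + i\pi) = g^+(\xi)$, and satisfies
$$|\partial^\ell g^-(\xi + i\lambda)| \leq c_\ell (\cosh\xi)^\ell e^{-\omega'(\cosh\xi)/a_{\omega'}} \quad (\ell\in\nbb_0,\ \lambda\in[0,\pi]).$$

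Define $h(\zeta) := K(\zeta) g^-(\zeta)$, analytic on $\strip(0,\pi)$. Lemma~\ref{lemma:Kbounds} combined with the decay of $g^-$ yields
$$|h(\xi + i\lambda)| \leq \frac{c_g\, e^{c'\omega(\cosh\xi) - \omega'(\cosh\xi)/a_{\omega'}}}{(\lambda(\pi-\lambda))^{(m+n)/2}},$$
which by the choice of $\omega'$ is integrable in $\xi$ and vanishes as $|\xi| \to \infty$. For $0 < \lambda_1 < \lambda_2 < \pi$, apply Cauchy's theorem to the rectangle $\{|\re\zeta|\leq R,\ \im\zeta\in[\lambda_1,\lambda_2]\}$; the two vertical contributions vanish as $R \to \infty$ by the same pointwise bound evaluated at $\xi = \pm R$, so $\Phi(\lambda) := \int_\rbb h(\xi + i\lambda)\,d\xi$ is independent of $\lambda \in (0,\pi)$.

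It remains to identify $\lim_{\lambda \to 0^+}\Phi(\lambda)$ with $\int K(\xi+i0) g^-(\xi)\,d\xi$. Taylor-expand $g^-(\xi + i\lambda) = \sum_{k=0}^{M-1}\frac{(i\lambda)^k}{k!} g^{-(k)}(\xi) + R_M(\xi,\lambda)$ with $|R_M(\xi,\lambda)| \leq c\lambda^M(\cosh\xi)^M e^{-\omega'(\cosh\xi)/a_{\omega'}}$; the remainder contribution $\int K(\xi+i\lambda) R_M(\xi,\lambda)\,d\xi$ is $O(\lambda^{M-(m+n)/2})\to 0$. For each fixed $k\in\{0,\ldots,M-1\}$, the integral $\int K(\xi+i\lambda) g^{-(k)}(\xi)\,d\xi$ converges as $\lambda \to 0^+$ to the distributional pairing $\int K(\xi+i0) g^{-(k)}(\xi)\,d\xi$, by the standard theory of boundary values of analytic functions (Lemma~\ref{lemma:Kbounds} realizes $K$ as analytic in the strip with polynomial blow-up of order $(m+n)/2$ at the edge, so its boundary distribution has finite order; by integration by parts against the rapidly decaying test function $g^{-(k)}$, controlled through our choice of $\omega'$, the pairing exists and is the pointwise limit). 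Hence this integral is uniformly bounded near $\lambda=0$, so the Taylor terms for $k\geq 1$ vanish thanks to the prefactor $\lambda^k$, leaving only the $k=0$ contribution. An entirely analogous argument at $\lambda\to\pi^-$, using $g^-(\xi+i\pi)=g^+(\xi)$, gives $\int K(\xi+i\pi-i0) g^+(\xi)\,d\xi$, and the $\lambda$-independence of $\Phi$ equates the two boundary values.

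The main obstacle is this last step: the pointwise bound on $|K|$ diverges like $(\lambda(\pi-\lambda))^{-(m+n)/2}$ at the strip edges while the test function $g^-(\cdot + i\lambda)$ itself depends on $\lambda$, so no direct dominated-convergence argument is available. The Taylor device trades powers of $\lambda$ for differentiations of $g^-$, reducing the problem to distributional pairings of the boundary value $K(\cdot+i0)$ with the $\lambda$-independent test functions $g^{-(k)}$; these pairings are controlled by standard boundary theory, once $\omega'$ has been chosen large enough to suppress all relevant weights and derivative orders simultaneously.
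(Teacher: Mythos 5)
Your proof is correct and takes essentially the same route as the paper's: the same kind of rescaled indicatrix $\omega'$ built from $\omega$ and a logarithm, the same Cauchy contour shift justified by Lemma~\ref{lemma:Kbounds} and Prop.~\ref{prop:omegapw}, and the same mechanism for passing to the boundary, namely pairing the finite-order boundary distribution of $K$ (via antiderivatives/integration by parts) against derivatives of $g^-$ whose decay is controlled by choosing $\omega'$ large enough. The only cosmetic difference is that you organize the boundary limit through a Taylor expansion of $g^-(\cdotarg+i\lambda)$ to order $M$, whereas the paper uses the single difference $g^-(\xi)-g^-(\xi+i\epsilon)$ together with an $\ell$-fold antiderivative $K^{(-\ell)}$ of $K$; both reduce to the same estimates.
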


\begin{proof}
We set $\omega'(p):=(a_\omega+2) (c'\omega(p)+ \frac{m+n+6}{2} \log(1+p))$, with $c'$ as in Lemma~\ref{lemma:Kbounds}. One checks that $\omega'$, as a linear combination of the analytic indicatrices $\omega$ and \eqref{eq:omegalog} with positive coefficients, is itself an analytic indicatrix, with $a_{\omega'}=a_\omega+2$. 
Then, from Lemma~\ref{lemma:Kbounds} and Prop.~\ref{prop:omegapw}, we know that for fixed $g \in \dcal^{\omega'}(\wcal)$ and $\epsilon>0$, there is $c_\epsilon>0$ such that
\begin{equation}
\forall \lambda \in [\epsilon, \pi-\epsilon]: \quad \big\lvert g^-(\xi+i\lambda)K(\xi+i\lambda) \big\rvert \leq \frac{c_\epsilon}{(1+\cosh\xi)^{(m+n+6)/2}}.
\end{equation}
Hence by Cauchy's formula,
\begin{equation}\label{prop}
\forall \epsilon>0: \int d\xi\; K(\xi+i\epsilon)g^-(\xi+i\epsilon)=\int d\xi\; K(\xi+i\pi-i\epsilon)g^+(\xi-i\epsilon).
\end{equation}
We will show below that
\begin{equation}\label{contour}
\lim_{\epsilon \searrow 0}\int K(\xi+i\epsilon) g^-(\xi+i\epsilon)\;d\xi = \int K(\xi+i0)g^-(\xi)\;d\xi,
\end{equation}
which then holds similarly for the upper boundary. The result now follows from \eqref{prop} as $\epsilon \searrow 0$.---%
For Eq.~\eqref{contour}, we need to show that
\begin{equation}\label{eq:klimit}
\lim_{\epsilon \searrow 0} \int d\xi\; K(\xi+i\epsilon) \big( g^-(\xi)-g^-(\xi+i\epsilon) \big)=0.
\end{equation}
Let $K^{(-\ell)}$ be the $\ell$th antiderivative of $K$, normalized to $K^{(-\ell)}(i\frac{\pi}{2})=0$.
Using the bounds of Lemma~\ref{lemma:Kbounds}, we find by integration that for $\ell > (m+n)/2$, with some $c''>0$,
\begin{equation}
  |K^{(-\ell)}(\xi+i\lambda)|\leq c'' (1+|\xi|)^\ell e^{c' \omega(\cosh \xi)}.
\end{equation}
Integrating by parts in \eqref{eq:klimit}, this yields
\begin{multline}
\lim_{\epsilon \searrow 0}\left|  \int d\xi\; K(\xi+i\epsilon) \big( g^-(\xi)-g^-(\xi+i\epsilon) \big) \right|
\\
=\lim_{\epsilon\searrow 0}\left|  \int d\xi\; K^{(-\ell)}(\xi+i\epsilon)\left( \partial_\xi^\ell g^-(\xi)-\partial_\xi^\ell g^-(\xi+i\epsilon)\right)\right|\\
\leq c'' \lim_{\epsilon\searrow 0} \epsilon \int d\xi\; (1+|\xi|)^\ell e^{c' \omega(\cosh \xi)} \sup_{0 < \lambda< \pi } |\partial_\xi^{\ell+1} g^- (\xi+i\lambda)| = 0
\end{multline}
if we can show that the integral in the last line exists. Indeed, using the bounds on $\partial_\xi^{\ell+1} g^-$ from Prop.~\ref{prop:omegapw}, we know that for all $\lambda\in(0,\pi)$,
\begin{equation}
\begin{aligned}
 |\partial_\xi^{\ell+1} g^-(\xi+i\lambda)| &\leq c''' (\cosh \xi)^{\ell+1} e^{-\omega'(\cosh\xi)/a_{\omega'}}
  \\ &\leq  c''' (\cosh \xi)^{\ell-(m+n)/2-2}  e^{-c' \omega(\cosh \xi)},
\end{aligned}
\end{equation}
which makes the integral convergent if we choose $ m+n < 2 \ell \leq m+n+2$.
\cmpqed\end{proof}

This finally allows us to prove that $A$ is local in $\leftwedge$. We summarize:

\begin{theopargself}
\begin{proof}[\ofwhat{Thm.~\ref{theorem:wedgeequiv}\ref{it:wedgeequiv-fa}}]
Let $F_k$ be functions fulfilling (FW) with $r=0$. We define $A \in \qf^\omega$ by \eqref{eq:afromf-w}; we already noted that this is well-defined due to \ref{it:fwboundsreal}. Also, the expansion coefficients $F_{m+n}(\thetav+i\zerov,\etav+i\piv-i\zerov)$ are $S$-symmetric in $\thetav,\etav$ separately; this follows from \ref{it:fwsymm} by analytic continuation. Hence, by the uniqueness result in Thm.~\ref{theorem:expansion}, we have $\cme{m,n}{A}(\thetav,\etav)=F_{m+n}(\thetav+i\zerov,\etav+i\piv-i\zerov)$ as required for \eqref{eq:fmnfromF}. 

We claim that $A$ is $\omega$-local in $\leftwedge$. 
By Lemma~\ref{lemma:localitychar}\ref{it:charomegavar}, it suffices to show that for fixed $\psi,\chi\in\fpno$, one has $\hscalar{\psi}{ [A,\phi'(g)]\chi}=0$ for all $g \in \dcal^{\omega'}(\wcal)$, with an indicatrix $\omega'$ suitably chosen. By density arguments, we can assume that $\psi,\chi$ have fixed particle number and compact support in rapidity space. Using Prop.~\ref{proposition:aphicomm},
and considering a summand with fixed $m,n$ in Eq.~\eqref{maincommutator}, it suffices to show that for $g\in \mathcal{D}^{\omega'}(\wcal)$, for fixed $q \in \nbb_0$, and for fixed $f\in\dcal(\rbb^{m+n})$,
\begin{multline}
\int d\thetav d\etav  \int d\xi \; f(\thetav,\etav) \Big(F_{m+n+1}(\thetav+i\zerov, \xi +i\pi-i0, \etav +i\piv-i\zerov)(B_{q}^{\overline{g^{+}},\xi})^{*}\\
-F_{m+n+1}(\thetav+i\zerov,\xi+i0,\etav +i\piv-i\zerov)B_{q}^{g^{-},\xi}\Big)=0.\label{vanishing}
\end{multline}
With the definitions \eqref{multop} and \eqref{eq:kxi}, this claim rewrites to
\begin{equation}
  \int K(\xi+i0) g^-(\xi) d\xi = \int K(\xi+i\pi-i0) g^+(\xi) d\xi.
\end{equation}
But this is guaranteed by Lemma~\ref{lemma:integralshift}.
\cmpqed\end{proof}
\end{theopargself}

\section{Locality in a double cone}\label{sec:doublecones}

We now extend our analysis in the previous section to a characterization of observables localized in compact regions, more precisely, in a standard double cone $\ocal_r$ of radius $r$ around the origin. Again, we will formulate locality conditions (AD), (TD), and (FD) for quadratic forms, CR distributions, and analytic functions, respectively, and show their equivalence. 

Since $\ocal_r \subset \wcal_r'$, these new conditions need to be stronger than (AW), (TW) and (FW) before; but the way in which these conditions are strengthened involves some entirely new aspects. In particular, the functions $F_k$ will now extend meromorphically to all of $\cbb^k$. Moreover, as indicated in \eqref{eq:residue1k}, the residue of $F_k$ at the kinematic poles $\zeta_n-\zeta_m=i\pi$ ($m < n$) have a prescribed value involving  $F_{k-2}$, thus giving relations between the $F_k$ of different orders. We will refer to these as \emph{recursion relations}.


Let us formulate the strengthened locality conditions. The one on the level of quadratic forms is again easy to state.

\begin{definition}\label{def:conditionAD}
$A \in \qf^\omega$ fulfills condition (AD) if it is $\omega$-local in $\ocal_r$.
\end{definition}


On the other hand, the conditions for CR distributions differ noticeably from those in the wedge local case; in particular, they refer to a larger graph. Besides  $\gcal_+^k$ as introduced in Sec.~\ref{sec:wedges}, we also consider the graph  $\gcal_-^k=\gcal_+^k - \piv$, i.e., with all components of the nodes and edges shifted by $-\pi$. We label the nodes of $\gcal_-^k$ as $\lambdav^{(k,-j)}:=\lambdav^{(k,k-j)}-\piv$ ($j=0,\ldots,k$). Further, let $\gcal_0^k$ be the union of $\gcal_+^k$ and $\gcal_-^k$, noting that the two graphs have $\lambdav^{(k,0)}=0$ as a common node. See Fig.~\ref{fig:gk1} for a sketch of these graphs in $k=2$ and $k=3$ dimensions.
We remark at this point that the graphs can alternatively be written as follows:
\begin{align}
   \label{setgplus}
   \gcal_+^k &= \{  \text{edges} :  0 \leq \lambda_1 \leq \ldots \leq \lambda_k \leq \pi \},
\\
   \label{setgminus}
   \gcal_-^k &= \{  \text{edges} : -\pi \leq \lambda_1 \leq \ldots \leq \lambda_k \leq 0 \},
\\
   \label{setg0}
   \gcal_0^k &= \{  \text{edges} : -\pi \leq \lambda_1 \leq \ldots \leq \lambda_k \leq \lambda_1 +  \pi\leq 2\pi \}.
\end{align}
(Our shorthand notation $\{\text{edges} : C(\lambdav) \}$ denotes the graph containing all those next-neighbor edges on the grid $\pi\zbb^k$ where the condition $C(\lambdav)$ is true for all $\lambdav$ on the edge; the nodes of the graph are the end points of these edges.)

The CR distributions $T_k$ will now be defined on $\gcal_0^k$ rather than $\gcal_+^k$. Besides an appropriate extension of the previous conditions (TW) to this graph, we will also need to add a periodicity condition on $\gcal_0^k$, which will give rise to \eqref{eq:introperio} later, and a form of the recursion relations mentioned above.
This relation involves the factors $S_C$ and $R_C$ as defined in \eqref{eq:sc} and \eqref{eq:rc}.

\begin{definition} \label{def:conditionTD}
A collection $T=(T_{k})_{k=0}^\infty$ of distributions on $\tube(\gcal^k_0)$ fulfills condition (TD) if the following holds for any fixed $k$, and with $\thetav \in \rbb^k$ arbitrary:
\begin{enumerate}
\renewcommand{\theenumi}{(TD\arabic{enumi})}
\renewcommand{\labelenumi}{\theenumi}

\item \label{it:tdmero} \emph{Analyticity:} $T_k$ is a CR distribution on $\tube(\gcal^k_0)$.

\item \label{it:tdsymm} \emph{$S$-symmetry}:
For any $\sigma \in \perms{k}$, we have
$\displaystyle{
T_k(\thetav)= S^\sigma(\thetav) T_k(\thetav^\sigma) .
}$

\item \label{it:tdperiod} \emph{Periodicity:}
$
T_k (\thetav + i \lambdav^{(k,-k)}) =
T_k (\thetav + i \lambdav^{(k,k)}).
$

\item \label{it:tdrecursion} \emph{Recursion relations:}
For any $0 \leq m \leq k$,
\begin{equation*}
T_k(\thetav +i\lambdav^{(k,-m)})
=\sum_{C\in\ccal_{m,k-m}}(-1)^{|C|}\delta_{C}
S_{C} R_{C}(\thetav) T_{k-2|C|}(\check \thetav + i \lambdav^{(k-2|C|,m-|C|)}),
\end{equation*}
where $\check\thetav = (\theta_{m+1},\ldots,\widehat{\theta_{r_{1}}},\ldots,\widehat{\theta_{r_{|C|}}},\ldots,\theta_{k},\theta_{1},\ldots,\widehat{\theta_{l_{1}}},\ldots,\widehat{\theta_{l_{|C|}}},\ldots,\theta_{m}).$

\item \label{it:tdboundsreal}
\emph{Bounds at nodes:}
For any $j \in \{0,\ldots,k\}$,
\begin{equation*}
\onorm{ T_k( \cdotarg + i \lambdav^{(k,j)} )}{(k-j) \times j} < \infty,
 \quad
\onorm{ T_k( \cdotarg + i \lambdav^{(k,-j)} )}{j \times (k-j)} < \infty.
\end{equation*}

\item \label{it:tdboundsimag}
\emph{Bounds at edges:}
There exists $c>0$ such that for any $\lambdav\in \bar\gcal^{k}_{\pm}$,
\begin{equation*}
\gnorm{ e^{\pm i \mu r \sum_j \sinh \zeta_j}e^{-\sum_{j}\oa(\pm \sinh \zeta_{j})}
T_k( \pmb{\zeta})\big\vert_{\zetav=\cdotarg +i\lambdav} }{\times} \leq c.
\end{equation*}

\end{enumerate}

\end{definition}

\begin{figure}
        \centering
        \begin{subfigure}[b]{0.48\textwidth}
                \centering
                \includegraphics[width=\textwidth]{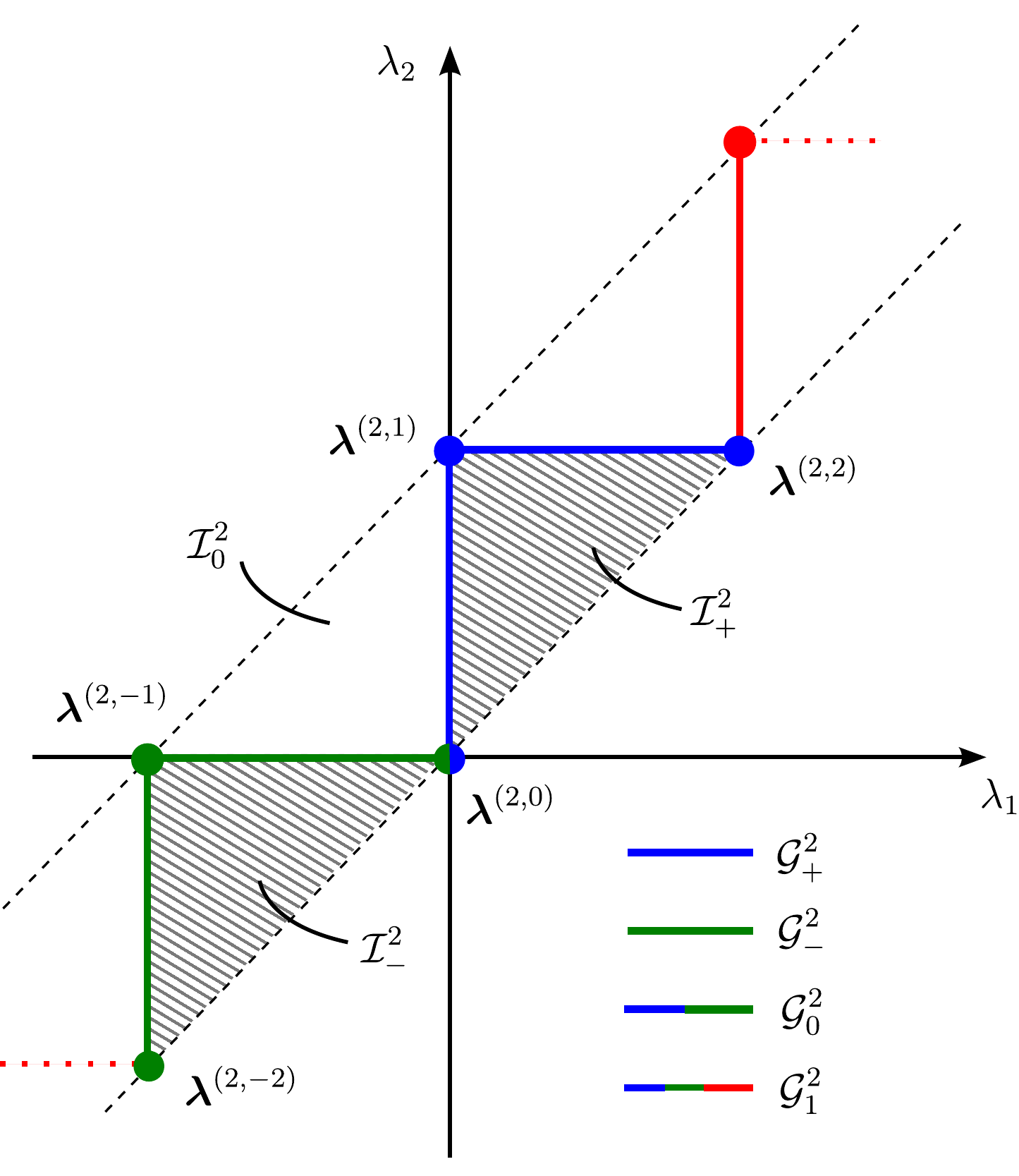}
                \caption{$k=2$}
                \label{fig:g21}
        \end{subfigure}%
        ~ 
        \begin{subfigure}[b]{0.48\textwidth}
                \centering
                \includegraphics[width=\textwidth]{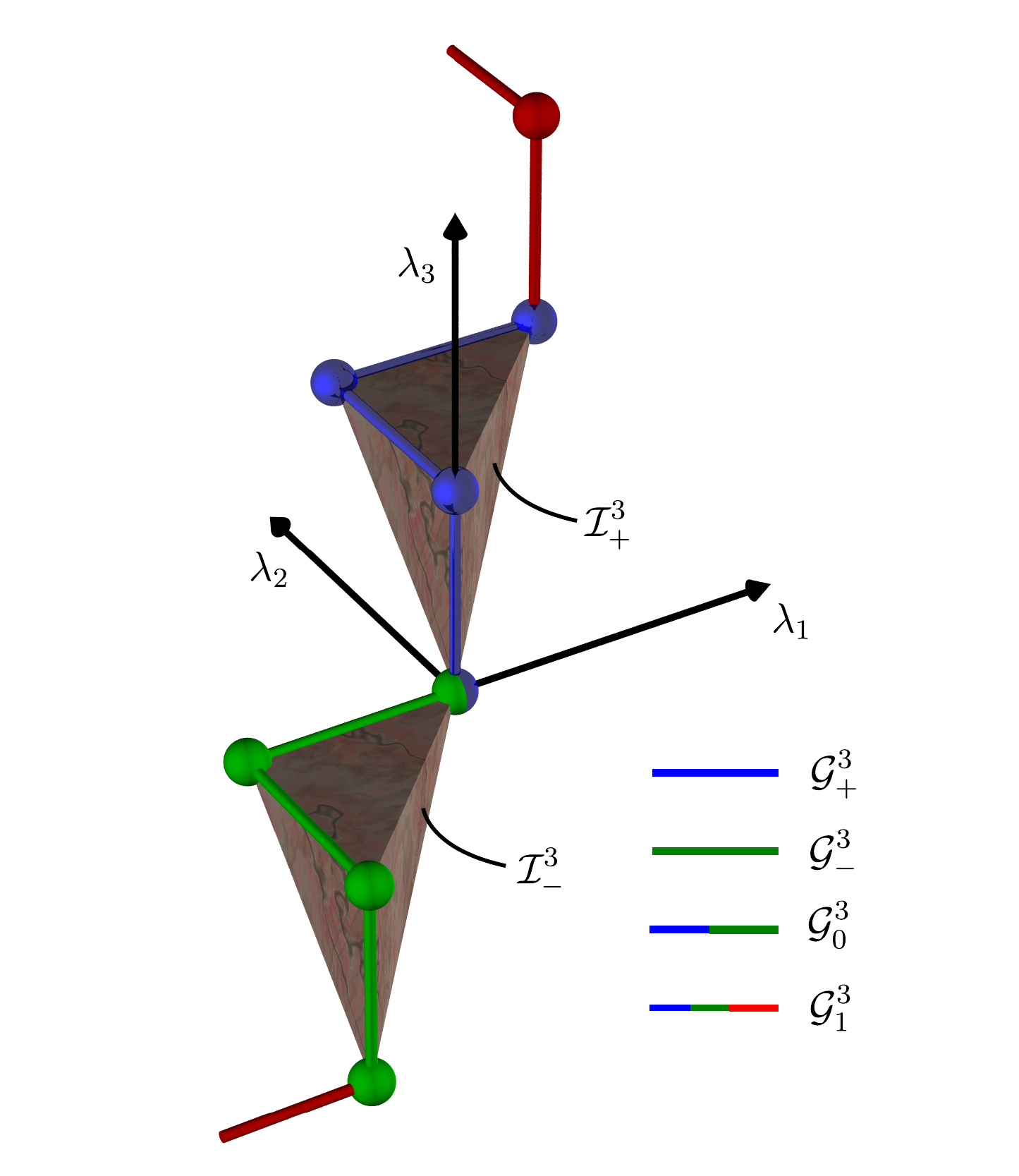}
                \caption{$k=3$}
                \label{fig:g31}
        \end{subfigure}
        \caption{The graph $\gcal^k_0$, composed of $\gcal^k_+$ and its translate $\gcal^k_-$, and the graph $\gcal^k_1$ which arises from $\gcal^k_0$ by periodic continuation.}\label{fig:gk1}
\end{figure}

On the level of analytic functions $F_k$, locality in a double cone implies a \emph{meromorphic} (but in general not analytic) extension to the entire multi-variable complex plane. The analyticity region is $\im \zeta_1 < \ldots < \im \zeta_k < \im \zeta_1 + 2\pi$, except for the kinematic poles mentioned above; outside this region, further singularities of $F_k$ will arise from poles of the scattering function $S$ outside the physical strip. Apart from the pole structure, reflected in the recursion relations, an $S$-periodicity condition as in \eqref{eq:introperio} is needed.

\begin{definition}\label{def:conditionFD}
A collection $F=(F_{k})_{k=0}^\infty$ of functions $\cbb^k \to \bar\cbb$ fulfills conditions (FD) if the following holds for any fixed $k$, and with $\zetav \in \cbb^k$ arbitrary:

\begin{enumerate}
\renewcommand{\theenumi}{(FD\arabic{enumi})}
\renewcommand{\labelenumi}{\theenumi}

\item \label{it:fdmero}
\emph{Analyticity:}
$F_k$ is meromorphic on $\cbb^k$, and analytic where $\im \zeta_1 < \ldots < \im \zeta_k < \im \zeta_1 + \pi$.

\item \label{it:fdsymm} \emph{$S$-symmetry:} For any $\sigma \in \perms{k}$, we have
$
\displaystyle{
F_k(\zetav)
=  S^\sigma(\zetav) F_k(\zetav^\sigma) .
}
$

\item \label{it:fdperiod} \emph{$S$-periodicity:}
$\displaystyle{
F_k (\zetav + 2i\pi \ev^{(j)} ) =
\Big(\prod_{\substack{i=1 \\ i \neq j}}^k S(\zeta_i-\zeta_j)\Big)  F_k (\zetav ).
}$

\item \label{it:fdrecursion}

\emph{Recursion relations:}
The $F_k$ have first order poles at $\zeta_n-\zeta_m = i \pi$, where $1 \leq m < n \leq k$,
and one has with $\hat\zetav = (\zeta_1,\ldots,\widehat{\zeta_m},\ldots, \widehat{\zeta_n}, \ldots, \zeta_k)$,
\begin{equation*}
\res_{\zeta_n-\zeta_m = i \pi} F_{k}(\boldsymbol{\zeta})
= - \frac{1}{2\pi i }
\Big(\prod_{j=m}^{n} S(\zeta_j-\zeta_m) \Big)
\Big(1-\prod_{p=1}^{k} S(\zeta_m-\zeta_p) \Big)
F_{k-2}( \boldsymbol{\hat\zeta} ).
\end{equation*}

\item \label{it:fdboundsreal}
\emph{Bounds at nodes:}
For each $j \in \{0,\ldots,k\}$ and $\ell \in \{-1,0\}$, we have
\begin{equation*}
\onorm{ F_k\big( \cdotarg + i \lambdav^{(k,j+k\ell)} + i \zerov \big) }{(k-j) \times j} < \infty.
\end{equation*}
Here $+i\zerov$ denotes approach from inside the region of analyticity as in \ref{it:fdmero}.

\item \label{it:fdboundsimag}
\emph{Pointwise bounds:}
There exist $c,c'>0$ such that for all $\zetav\in\tube(\ical^k_\pm)$:
\begin{equation*}
  |F_k(\zetav)| \leq c \,{ \operatorname{dist}(\im \zetav,\partial \ical_\pm^k)^{-k/2}} \prod_{j=1}^k \exp \big(\mu r  |\im \sinh \zeta_j|+ c' \omega(\cosh \re \zeta_j)\big).
\end{equation*}
\end{enumerate}
\end{definition}

Analogous to \eqref{eq:ikplus}, we have denoted $\ical^k_-:=\ich \gcal^k_-$.
Note that \ref{it:fdsymm}, \ref{it:fdperiod} and \ref{it:fdrecursion} are strengthened versions of \eqref{eq:ssymmdc}, \eqref{eq:introperio} and \eqref{eq:residue1k}, respectively, but that these stronger conditions may always be obtained from the weaker ones by using \eqref{eq:ssymmdc} repeatedly.


Equivalence of the three conditions is formulated as follows, very similar to Thm.~\ref{theorem:wedgeequiv} in the wedge local case.

\begin{theorem} \label{theorem:doubleconeequiv}
 Let $r>0$ and an analytic indicatrix $\omega$ be fixed.
\begin{enumerate}
\renewcommand{\theenumi}{(\roman{enumi})}
\renewcommand{\labelenumi}{\theenumi}
 \item \label{it:doubleconeequiv-at}
 If $A \in \qf^\omega$ fulfills (AD), then there are distributions $T_k$ fulfilling (TD) such that
\begin{equation} \label{eq:tboundary-d}
\cme{m,n}{A}(\thetav,\etav) = T_{m+n}(\thetav,\etav+i\piv),
\quad
\cme{m,n}{J A^\ast J}(\thetav,\etav) = T_{m+n}(\thetav-i \piv,\etav).
\end{equation}

 \item \label{it:doubleconeequiv-tf}
If $T_k$ fulfill (TD), then there are functions $F_k$ fulfilling (FD) such that for $-k \leq j \leq k$,
\begin{equation}\label{eq:ftboundary-d}
T_{k}\big(\thetav + i \lambdav^{(k,j)}\big)
= F_{k}\big(\thetav + i \lambdav^{(k,j)} + i\zerov \big).
\end{equation}

 \item \label{it:doubleconeequiv-fa} 
  If $F_k$ fulfill (FD), there is a quadratic form $A$ fulfilling (AD) such that
\begin{equation}\label{eq:fmnboundary-d}
 \cme{m,n}{A}(\thetav,\etav) = F_k(\thetav+i\zerov,\etav+i\piv-i\zerov), 
\quad
 \cme{m,n}{J A\st J}(\thetav,\etav) = F_k(\thetav-i \piv +i\zerov,\etav-i\zerov).
\end{equation}

\end{enumerate}
\end{theorem}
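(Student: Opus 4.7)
The overall plan is to reduce (AD) to a pair of wedge-locality conditions and invoke Thm.~\ref{theorem:wedgeequiv} twice. By Def.~\ref{definition:omegalocal}, $A$ is $\omega$-local in $\ocal_r = \wcal_{-r} \cap \wcal_r'$ if and only if both $A$ and $JA\st J$ are $\omega$-local in $\wcal_r'$: the first condition is immediate, and the equivalence ``$A$ $\omega$-local in $\wcal_{-r}$ iff $JA\st J$ $\omega$-local in $\wcal_r'$'' follows by unfolding Def.~\ref{definition:omegalocal}. Applying Thm.~\ref{theorem:wedgeequiv}\ref{it:wedgeequiv-at} separately to $A$ and to $JA\st J$, both with parameter $r$, produces CR distributions $T_k^+$ and $\tilde T_k^+$ on $\tube(\gcal_+^k)$ satisfying (TW), with node values $\cme{k-j,j}{A}$ and $\cme{k-j,j}{JA\st J}$ respectively. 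The candidate for $T_k$ on $\tube(\gcal_0^k)$ is then obtained by setting $T_k := T_k^+$ on $\tube(\gcal_+^k)$ and $T_k(\zetav) := \tilde T_k^+(\zetav + i \piv)$ on $\tube(\gcal_-^k)$, using $\lambdav^{(k,-m)} + \piv = \lambdav^{(k,k-m)}$; this immediately delivers the boundary identities \eqref{eq:tboundary-d}.

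For part \ref{it:doubleconeequiv-at}, the technical work is to verify that this gluing is consistent and that the new algebraic conditions hold. Compatibility at the common node $\lambdav = \zerov$ demands $\cme{k,0}{A} = \cme{0,k}{JA\st J}$, and the periodicity \ref{it:tdperiod} reduces to $\cme{0,k}{A} = \cme{k,0}{JA\st J}$; both identities follow from the $J$-covariance of the expansion coefficients derived in \cite[Sec.~3]{BostelmannCadamuro:expansion}. The $S$-symmetry \ref{it:tdsymm} and the bounds \ref{it:tdboundsreal}, \ref{it:tdboundsimag} transfer directly from the corresponding (TW) conditions on each half. The recursion relations \ref{it:tdrecursion} are the genuinely new ingredient: read at the nodes, they assert the identity
\begin{equation*}
\cme{m,k-m}{JA\st J}(\thetav) = \sum_{C \in \ccal_{m,k-m}} (-1)^{|C|} \delta_C \, S_C(\thetav)\, R_C(\thetav) \, \cme{m-|C|,k-m-|C|}{A}(\check\thetav).
\end{equation*}
This we plan to prove by inserting the explicit representation \eqref{eq:fmndef} on both sides: the matrix elements $\hscalar{\lvector{C}{\thetav}}{JA\st J\, \rvector{C}{\etav}}$ may be rewritten, via the commutation relations between $z^{(\dagger)}$ and the reflected $z^{(\dagger)\prime}$ encoded in \eqref{eq:comzpz}--\eqref{multop}, in terms of matrix elements of $A$ with a reduced number of creators; each swapped pair contributes a multiplication operator $B^{g,\theta}$, and tracking their combinatorics yields exactly the factor $R_C$.

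For part \ref{it:doubleconeequiv-tf}, Lemma~\ref{lem:graphtube} yields an analytic extension $F_k$ on $\tube(\ich\gcal_0^k)$ satisfying \eqref{eq:ftboundary-d}, and the pointwise bounds \ref{it:fdboundsimag} follow from \ref{it:tdboundsimag} by Lemma~\ref{lemma:maxmodcross} and Prop.~\ref{proposition:pointwise} applied separately on $\tube(\ical_+^k)$ and $\tube(\ical_-^k)$. The $S$-periodicity \ref{it:fdperiod} is first read off at the boundary from \ref{it:tdperiod} using $S$-symmetry, and is then employed to extend $F_k$ period by period to all of $\cbb^k$; the recursion relations \ref{it:tdrecursion} at the interior nodes force the analytic continuation to have first-order poles at $\zeta_n - \zeta_m = i \pi$ with the prescribed residues \ref{it:fdrecursion}, which need only be verified for $(m,n) = (1,k)$ since \eqref{eq:ssymmdc} handles the rest. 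Analyticity \ref{it:fdmero}, $S$-symmetry \ref{it:fdsymm} and \ref{it:fdboundsreal} are then immediate.

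For part \ref{it:doubleconeequiv-fa}, we define $A$ by \eqref{eq:afromf-w}, which is well-defined in $\qf^\omega$ by \ref{it:fdboundsreal} and Thm.~\ref{theorem:expansion}. The restriction of $F_k$ to $\tube(\ical_+^k)$ satisfies (FW), so Thm.~\ref{theorem:wedgeequiv}\ref{it:wedgeequiv-fa} gives that $A$ is $\omega$-local in $\wcal_r'$. For the remaining locality in $\wcal_{-r}$, we apply the same theorem to the shifted family $G_k(\zetav) := F_k(\zetav - i \piv)$, which is analytic on $\tube(\ical_+^k)$ (since $\ical_+^k - \piv = \ical_-^k$) and whose (FW) bounds are inherited from the bounds of $F_k$ on $\tube(\ical_-^k)$ in \ref{it:fdboundsimag}. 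By construction, the coefficients of the resulting quadratic form are $G_{m+n}(\thetav + i\zerov, \etav + i\piv - i\zerov) = F_{m+n}(\thetav - i\piv + i\zerov, \etav - i\zerov)$, which by the second identity in \eqref{eq:fmnboundary-d} and the uniqueness in Thm.~\ref{theorem:expansion} equal $\cme{m,n}{JA\st J}$; hence $JA\st J$ is $\omega$-local in $\wcal_r'$, equivalently $A$ is $\omega$-local in $\wcal_{-r}$. The hardest step of the whole proof is the recursion relation \ref{it:tdrecursion}: it is the sole ingredient genuinely beyond the wedge-local case and is precisely the bridge that knits the coefficients of $A$ and of $JA\st J$ into a single coherent meromorphic family on $\cbb^k$.
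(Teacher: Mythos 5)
Your overall architecture --- decomposing (AD) into wedge-locality of both $A$ and $JA\st J$ in $\wcal_r'$ and invoking Thm.~\ref{theorem:wedgeequiv} twice --- is exactly the paper's, and your part \ref{it:doubleconeequiv-at} is essentially correct (the paper simply cites \cite[Prop.~3.11]{BostelmannCadamuro:expansion} for the identity behind \ref{it:tdrecursion} that you propose to re-derive). There are, however, two genuine gaps.

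In part \ref{it:doubleconeequiv-tf}, the passage from the analytic function on $\tube(\ich\gcal_0^k)$ to a meromorphic function on $\cbb^k$ cannot be done ``period by period'': the $2\pi\zbb^k$-translates of $\ich\gcal_0^k$, even together with permuted copies, do not cover $\rbb^k$, and the real difficulty is continuation \emph{across} the kinematic hyperplanes $\im\zeta_n-\im\zeta_m=\pi$ separating $\ich\gcal_1^k$ from its shifted copies $\gcal_{1,m}^k$. The pieces of $T_k$ on these translated graphs do \emph{not} assemble into a CR distribution on their union $\gcal_2^k$: the boundary values at the common nodes differ by precisely the delta-supported terms of \ref{it:tdrecursion}. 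The paper's mechanism is to multiply by $\prod_{j>j'}(\zeta_j-\zeta_{j'}-i\pi)$, which annihilates these discrepancies so that the product \emph{is} CR on the connected graph $\gcal_2^k$; the tube theorem then applies, and dividing back produces the first-order poles with the correct residues. One further needs the $S$-symmetric extension to $\ical_3^k$ via the edge-of-the-wedge theorem and the envelope of meromorphy before periodic continuation covers $\cbb^k$, plus a well-definedness check for that periodic extension. Asserting that the recursion relations ``force'' the poles with the prescribed residues states the conclusion without supplying any of this mechanism.

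In part \ref{it:doubleconeequiv-fa}, your identification of the coefficients of the form built from $G_k=F_k(\cdotarg-i\piv)$ with $\cme{m,n}{J A\st J}$ is circular: you invoke ``the second identity in \eqref{eq:fmnboundary-d}'', which is exactly what is to be proved. The substantive step is to show that $F_{m+n}(\thetav-i\piv+i\zerov,\etav-i\zerov)$ --- the boundary value approached from the \emph{opposite} side of the kinematic poles --- coincides with the known combinatorial expression for $\cme{m,n}{J A\st J}$ in terms of $\cme{m,n}{A}$. This requires computing the higher-order residues (Lemma~\ref{lemma:fres}), applying the multi-variable boundary-value formula of Prop.~\ref{proposition:multivarres} to collect the delta-contributions from every pole hyperplane crossed, and matching the resulting contraction sum against \cite[Prop.~3.11]{BostelmannCadamuro:expansion}. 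This is the analytic heart of part (iii) and is absent from your proposal.
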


Again, the notation $\pm i\zerov$ denotes approach from inside the analyticity region as appropriate.

Comparing with the wedge-local variant in Thm.~\ref{theorem:wedgeequiv}, the most apparent change is that Thm.~\ref{theorem:doubleconeequiv} involves both $A$ and $J A\st J$. In fact, this will be the main idea of the proof: Locality in a double cone consists of two pieces of information, namely, that both $A$ and $JA\st J$ are localized in the wedge $\wcal_r'$. Using Thm.~\ref{theorem:wedgeequiv} for both of these, and putting these two pieces together, we will show the equivalence of the double cone locality conditions. As we shall see, the passage from $A$ to $J A\st J$, i.e., space-time reflection, corresponds to passing from $F_k$ to $F_k^\pi := F_k(\cdotarg + i \piv)$.

We remark at this point that the conditions are in fact invariant under space-time reflection in the following sense: If $A$ fulfills (AD), then so does $J A\st J$. If functions $F_k$ fulfill (FD), then $F_k^\pi$ fulfill (FD) as well. (This follows by using the periodicity condition \ref{it:fdperiod}, noting that the $S$ factors in the conditions depend only on differences of rapidities, and that $|\im \sinh(\zeta_j+i\pi)|=|\im \sinh(\zeta_j)|$.) On the level of the CR distributions $T_k$, a corresponding statement holds, but is more difficult to see directly; it will follow from our results.

We now proceed to the proof of Thm.~\ref{theorem:doubleconeequiv}, again handling each of the three parts in its own subsection. The passage (AD)$\Rightarrow$(TD)$\Rightarrow$(FD) in Secs.~\ref{sec:ad-to-td} and \ref{sec:td-to-fd} will involve an analytic continuation of the coefficients $\cme{m,n}{A}$ to larger and larger graphs and to their interior. Essential features in the geometry of these domains -- in particular, the kinematic poles -- become relevant only for $k\geq 3$, which makes them harder to understand intuitively. While we have sketched some of the regions in Fig.~\ref{fig:gk1} and \ref{fig:gk2}, the reader is invited to review the supplemental animation \videoref{} which gives a better geometric overview of the respective analyticity domains for $k=3$.

\subsection{(AD) \texorpdfstring{$\Rightarrow$}{=>} (TD)}\label{sec:ad-to-td}

For constructing the CR distributions $T_k$ from a quadratic from $A$ that is $\omega$-local in $\ocal_r$, the key technique is to apply Thm.~\ref{theorem:wedgeequiv}\ref{it:wedgeequiv-at} to both $A$ and $J A\st J$.

\begin{theopargself}
\begin{proof}[\ofwhat{Thm.~\ref{theorem:doubleconeequiv}\ref{it:doubleconeequiv-at}}]
Let $A$ fulfill (AD). Since $A$ in particular fulfills (AW), we can apply Thm.~\ref{theorem:wedgeequiv}\ref{it:wedgeequiv-at} which yields CR distributions $T_k$ on $\tube(\gcal_+^k)$, fulfilling (TW), with boundary values as in the first half of \eqref{eq:tboundary-d}.
Now $JA^{*}J$ fulfills (AD) and hence (AW) as well; therefore Thm.~\ref{theorem:wedgeequiv}\ref{it:wedgeequiv-at} yields another collection of CR distributions $T_k'$. We use this to define $T_k$ on $\tube(\gcal^k_-)$ by
\begin{equation}
   T_k(\zetav- i \piv) := T_k' (\zetav).
\end{equation}
This has the boundary distributions
\begin{equation}\label{eq:fkpotherbv}
 T_k(\thetav-i\piv,\etav) 
 = T_k'(\thetav,\etav+i\piv) 
 = \cme{m,n}{JA^{*}J}(\thetav,\etav),
\end{equation}
which shows the second half of \eqref{eq:tboundary-d}.
For $T_k$ being CR on $\tube(\gcal_0^k)$, it remains to show that the two boundary values of $T_k$ at the origin agree. By the above, we know that for real $\thetav$,
\begin{equation}
T_k(\thetav)\big\vert_{\gcal_+^k}=\cme{k,0}{A}(\thetav),
\quad
T_k(\thetav)\big\vert_{\gcal_-^k}=\cme{0,k}{J A\st J}(\thetav).
\end{equation}
However, a short computation from \eqref{eq:fmndef} shows that these are equal:
\begin{equation}
\cme{k,0}{A}(\thetav)
=\hscalar{z^{\dagger}(\theta_{1})\ldots z^{\dagger}(\theta_{k})\Omega}{A\Omega}
=\hscalar{\Omega}{J A\st J z^{\dagger}(\theta_{k})\ldots z^{\dagger}(\theta_{1})\Omega}
=\cme{0,k}{JA^{*}J}(\thetav).
\end{equation}
This proves \ref{it:tdmero}. Similarly, one shows that the boundary values $T_k(\thetav+i\lambdav^{(k,-k)})$ and $T_k(\thetav+i\lambdav^{(k,k)})$ agree, yielding the periodicity condition \ref{it:tdperiod}.

$S$-symmetry \ref{it:tdsymm} of the $T_k$ follows directly from \ref{it:twsymm}. 
For \ref{it:tdrecursion}, we use that $\cme{m,n}{J A\st J}$ can be expressed in terms of the $\cme{m,n}{A}$ as in \cite[Prop.~3.11]{BostelmannCadamuro:expansion}. Replacing the $\cme{m,n}{\cdotarg}$ with boundary values of $T_k$ as in \eqref{eq:tboundary-d}, that expression yields
\begin{equation}
T_{m+n}(\thetav -i\pmb{\pi},\etav)
=\sum_{C\in\ccal_{m,n}}(-1)^{|C|}\delta_{C}
 S_{C} R_C(\thetav,\etav) T_{m+n-2|C|}(\hat \etav, \hat\thetav + i \piv).\label{recursione5}
\end{equation}
This is exactly \ref{it:tdrecursion}.

The bounds \ref{it:tdboundsreal} and \ref{it:tdboundsimag} follow by combining the known bounds \ref{it:twboundsreal} and \ref{it:twboundsimag} for $T_k$ and $T_k'$, noting that a shift of arguments by $i\piv$ yields a minus sign in the exponent in \ref{it:twboundsimag}.
\cmpqed\end{proof}
\end{theopargself}

\subsection{(TD) \texorpdfstring{$\Rightarrow$}{=>} (FD)}\label{sec:td-to-fd}

We consider a collection of CR distributions $T_k$ on $\tube(\gcal^k_{0})$ fulfilling conditions (TD). In order to construct meromorphic functions $F_k$, we start by extending the $T_k$ to certain larger graphs, using the symmetry relations in conditions (TD).

Let us first consider the graph 
\begin{equation} \label{setg1}
 \gcal_1^k :=\gcal_{0}^{k}+2 \piv \mathbb{Z} = \{ \text{edges} : \lambda_1 \leq \ldots \leq \lambda_k \leq \lambda_1 +  \pi \},
\end{equation}
that is, $\gcal_1^k$ is $\gcal^k_0$ with all edges and nodes translated by integer multiples of $2\pi$ in all coordinates simultaneously; cf.~Eq.~\eqref{setg0} and Fig.~\ref{fig:gk1}.                                                                                                                                                                 
We continue $T_k$ to $\tube(\gcal_1^k)$ by defining for $n\in \mathbb{Z}$ and $\zetav\in \tube(\gcal_{0}^{k})$,
\begin{equation}\label{defonestair}
T_k(\zetav+2 i n \piv):=T_k(\zetav).
\end{equation}
This is indeed a CR distribution on the graph, since the boundary values $T_{k}(\thetav+i n \piv +i0\ev^{(k)})$ and $T_{k}(\thetav+in\piv-i0\ev^{(1)})$ agree for all $n \in 2\mathbb{Z}+1$ at real $\thetav$, due to \ref{it:tdperiod}.

\begin{figure}
        \centering
        \begin{subfigure}[b]{0.49\textwidth}
                \centering
                \includegraphics[width=\textwidth]{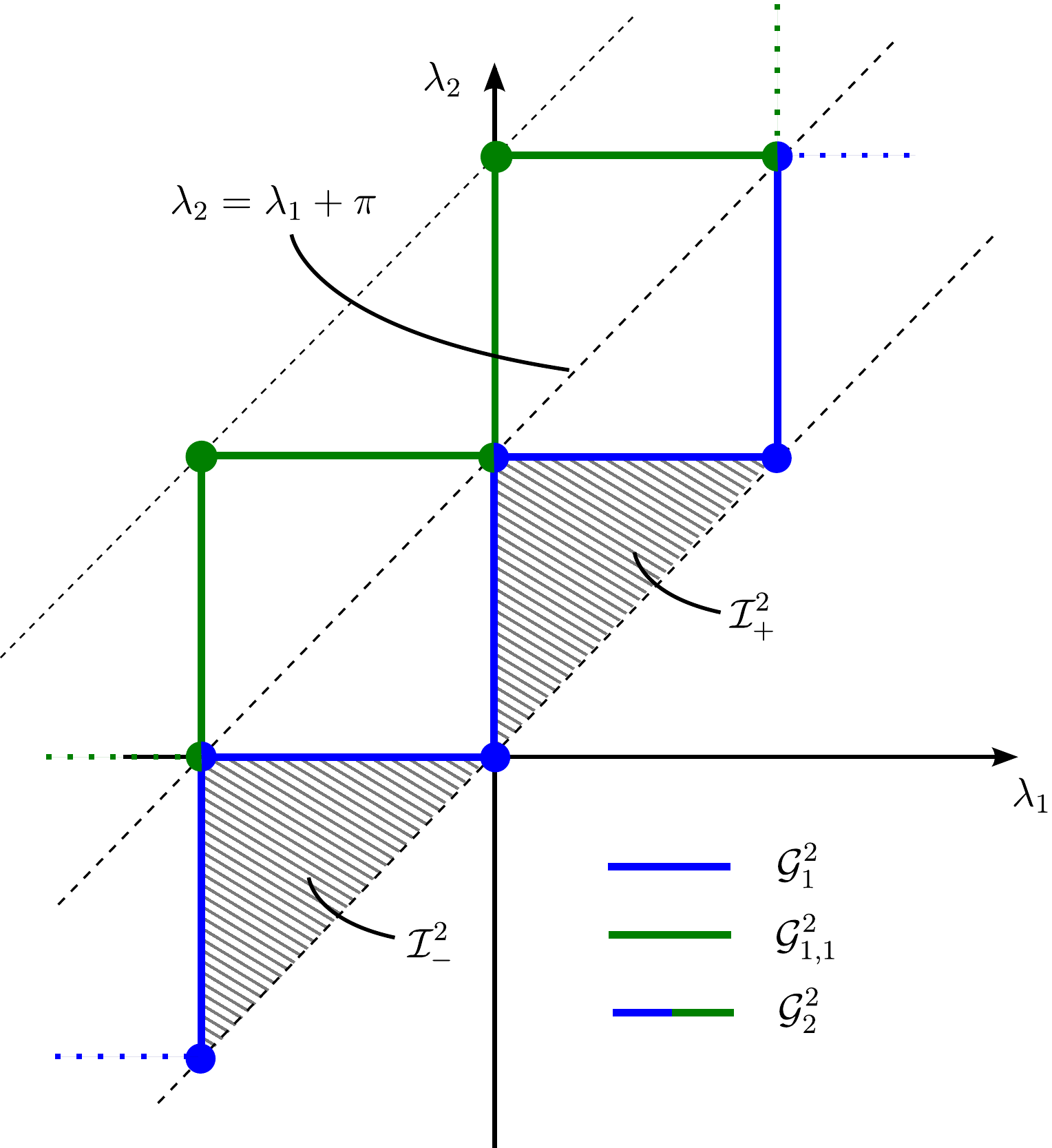}
                \caption{$k=2$}
                \label{fig:g22}
        \end{subfigure}%
        ~ 
        \begin{subfigure}[b]{0.49\textwidth}
                \centering
                \includegraphics[width=\textwidth]{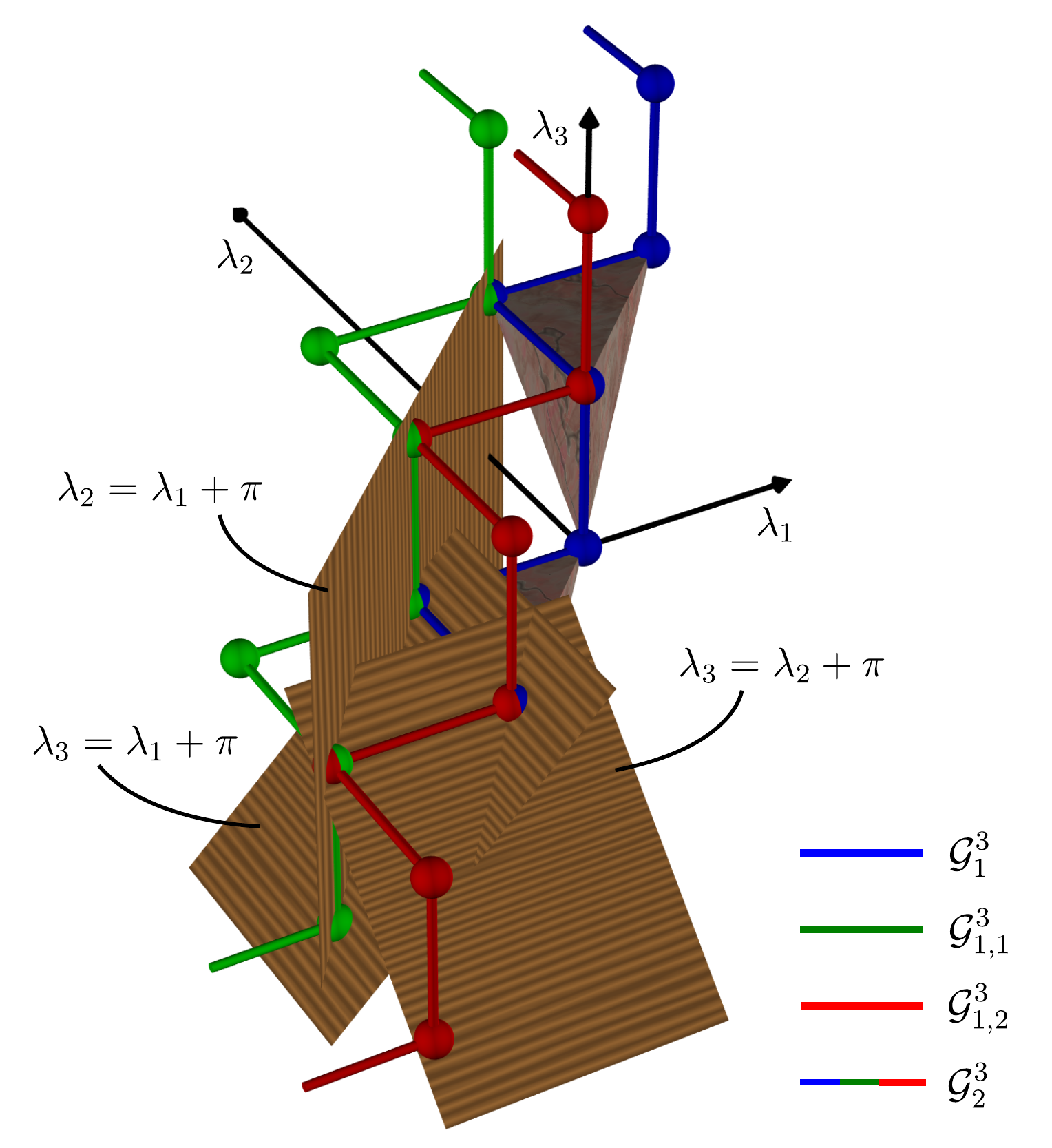}
                \caption{$k=3$}
                \label{fig:g32}
        \end{subfigure}
        \caption{The graph $\gcal^k_2$, composed of $\gcal^k_1$ and its translates $\gcal^k_{1,m}$.%
         The hyperplanes of the kinematic poles, $\lambda_n=\lambda_m+\pi$, are (partially) shown. Note that in the case $k=2$, the function $F_2$ does not actually have a pole on this hyperplane.}
        \label{fig:gk2}
\end{figure}

Next, we consider for $0 \leq m < k$ the graph (cf.~Fig.~\ref{fig:gk2})
\begin{equation}
\gcal^{k}_{1,m}:=\gcal^{k}_{1}+\lambdav^{(k,-m)}=\gcal^{k}_{1}+(\underbrace{-\pi,\ldots,-\pi}_{m},0,\ldots,0).
\end{equation}
One has $\gcal^k_{1,0}=\gcal^k_{1}$.
We define $T_k$ on $\tube(\gcal^{k}_{1,m})$ by
\begin{equation}\label{stairs}
T_k(\zetav):=T_{k}(\zeta_{m+1},\ldots,\zeta_{k},\zeta_{1}+2i\pi,\ldots,\zeta_{m}+2i\pi),
\end{equation}
noting that for $\zetav\in \tube(\gcal^{k}_{1,m})$, the argument on the r.h.s.~is in $\tube(\gcal^{k}_{1})$. Combining this for all $m$, we obtain $T_k$ as a distribution on the tube over the graph (again see Fig.~\ref{fig:gk2})
\begin{equation}\label{eq:g2def}
 \gcal^{k}_2 := \bigcup_{0 \leq m \leq k-1} \gcal^{k}_{1,m} =
\{  \text{edges} : \lambda_1 \leq \ldots \leq \lambda_k \leq \lambda_1 + 2 \pi \}.
\end{equation}
It is important here to note the following: While $T_k$ evidently are CR distributions on all $\tube(\gcal^{k}_{1,m})$, they are \emph{not} necessarily CR distributions on $\tube(\gcal^k_2)$. Namely, the two graphs $\gcal^{k}_{1,m}$ and $\gcal^{k}_{1,m'}$ ($m > m')$ have some nodes in common,
given by
\begin{equation}\label{eq:stairsmeet}
\underbrace{(\overbrace{-\pi \ldots -\pi}^{m'}, \overbrace{0 \ldots 0}^{m-m'}, \overbrace{\pi \ldots \pi}^{k-m})}_{=:\lambdav^{m \cap m'}}+\ell\piv , \quad \ell\in \mathbb{Z}.
\end{equation}
At these common nodes, the boundary values of $T_k$ from different edges need not agree. Indeed, let us compute the difference of the boundary values at the point $\zetav=\thetav+i\lambdav^{m \cap m'}$, setting $\ell=0$. On $\tube(\gcal_{1,m}^{k})$, we have
\begin{equation}\label{eq:fgm}
\begin{aligned}
&T_k(\zetav)\big\vert_{\gcal^{k}_{1,m}}
\\&=T_k(\theta_{m+1}+i\pi,\ldots,\theta_{k}+i\pi,\theta_{1}+i\pi,\ldots,\theta_{m'}+i\pi,\theta_{m'+1}+2i\pi,\ldots,\theta_{m}+2i\pi)
\\&=T_k(\theta_{m+1}-i\pi,\ldots,\theta_{k}-i\pi,\theta_{1}-i\pi,\ldots,\theta_{m'}-i\pi,\theta_{m'+1},\ldots,\theta_{m}),
\end{aligned}
\end{equation}
where we made us of \eqref{stairs} and \eqref{defonestair}. 
Analogously, we find
\begin{equation}\label{eq:fgmp}
T_k(\zetav)\big\vert_{\gcal^{k}_{1,m'}}
=T_k(\theta_{m'+1},\ldots,\theta_{m},\theta_{m+1}+i\pi,\ldots,\theta_{k}+i\pi,\theta_{1}+i\pi,\ldots,\theta_{m'}+i\pi).
\end{equation}
Note that on the r.h.s.~of \eqref{eq:fgm} and \eqref{eq:fgmp}, the distribution $T_{k}$ is evaluated at two different nodes of $\gcal_{0}^{k}$.
The difference of the boundary values \eqref{eq:fgm} and \eqref{eq:fgmp} can now be computed using condition \ref{it:tdrecursion}; it is in general nonzero and quite intricate to describe. 

Nevertheless, we can use the above results in order to construct a continuation of $T_k$ as meromorphic functions on the tube over the open set
\begin{equation}\label{eq:ich2}
\ical_2^k := \ich \gcal_2^k =\{ \lambdav\in\rbb^k : \lambda_{1}<\ldots < \lambda_{k}< \lambda_{1}+2\pi \}.
\end{equation}
This is the content of the next proposition.

\begin{proposition} \label{proposition:extendinterior}
 Let $T_k$ be distributions fulfilling (TD). Then there exist meromorphic functions $F_k$ on $\tube(\ical_2^k)$ which have the boundary values \eqref{eq:ftboundary-d}. They are analytic except for possible first-order poles at $\zeta_n-\zeta_m=i\pi$, $m<n$.
\end{proposition}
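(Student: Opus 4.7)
The plan is to obtain $F_k$ by pasting together analytic functions defined on the sub-regions $\ich \gcal_{1,m}^k$ ($m=0,1,\ldots,k-1$) and showing that these pieces fit into a single meromorphic function on $\tube(\ical_2^k)$ with at worst first-order poles on the kinematic hyperplanes $\zeta_n - \zeta_j = i \pi$. As a first step, I would apply the tube Lemma~\ref{lem:graphtube} to $T_k$ restricted to each $\gcal_{1,m}^k$ individually (on which it is CR by construction via \eqref{stairs}), producing analytic functions $F_k^{(m)}$ on $\tube(\ich \gcal_{1,m}^k)$. These interiors are pairwise disjoint open subsets of $\ical_2^k$, separated precisely by the kinematic hyperplanes, so each $F_k^{(m)}$ is a candidate local piece of the desired $F_k$.

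The core of the argument is to compare the boundary distributions of $F_k^{(m)}$ and $F_k^{(m')}$ at a node shared by $\gcal_{1,m}^k$ and $\gcal_{1,m'}^k$, given by \eqref{eq:stairsmeet}. The two boundary values are \eqref{eq:fgm} and \eqref{eq:fgmp}, and their difference can be computed explicitly from the recursion relation \ref{it:tdrecursion}: it is a sum over nontrivial contractions, where the leading $|C|=1$ term carries a single factor $\delta(\theta_{l_1}-\theta_{r_1})$ supported on the kinematic hyperplane, multiplied by a lower-order $T_{k-2}$. This is exactly the shape of the distributional jump across a first-order pole in the sense of Lemma~\ref{lemma:onepole}: reading that lemma backwards, two functions analytic on opposite sides of a hyperplane with such a $\delta$-type boundary discontinuity extend meromorphically across it, with a first-order pole whose residue is the coefficient of the $\delta$. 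Higher-order contractions ($|C|\geq 2$) contribute products of delta functions supported on intersections of several kinematic hyperplanes, and by Proposition~\ref{proposition:multivarres} these are precisely the contributions expected from iterated first-order residues.

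I would make this rigorous by induction on $k$. For $k\leq 1$, the set $\ical_2^k$ contains no kinematic hyperplane and Lemma~\ref{lem:graphtube} directly yields the analytic $F_k$. For general $k$, the inductive hypothesis supplies meromorphic $F_{k-2|C|}$ on $\tube(\ical_2^{k-2|C|})$, whose boundary values furnish the right-hand sides in \ref{it:tdrecursion}. Starting from $F_k^{(0)}$ on $\tube(\ich \gcal_1^k)$, I would extend meromorphically across one kinematic hyperplane at a time by using the $|C|=1$ term of the jump as residue data in Lemma~\ref{lemma:onepole}, reaching every $\tube(\ich \gcal_{1,m}^k)$ in finitely many steps. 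The $S$-symmetry \ref{it:tdsymm} of the $T_k$ ensures that the outcome of the procedure is independent of the order in which the hyperplanes are crossed, and the extension agrees on each subregion with the corresponding $F_k^{(m)}$ obtained directly from Lemma~\ref{lem:graphtube}.

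The step I expect to be most delicate is the combinatorial bookkeeping that matches the sum over contractions in \ref{it:tdrecursion} with the iterated-residue expansion of Proposition~\ref{proposition:multivarres}: one must verify that the factors $S_C$, $R_C$ and the signs $(-1)^{|C|}$ conspire exactly to reproduce a meromorphic function whose singular part, at each hyperplane $\zeta_n - \zeta_j = i\pi$ separately, is a simple pole rather than a pole of higher order or an additional non-polar singularity. Concretely, this amounts to checking that the $|C|\geq 2$ terms account correctly for the boundary behavior at intersections of kinematic hyperplanes and that the residues extracted by Lemma~\ref{lemma:onepole} coincide with the prescribed formula appearing in \ref{it:fdrecursion}; once this consistency is in place, the meromorphic extension and first-order pole structure follow as claimed.
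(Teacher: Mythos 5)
There is a genuine gap, and it is primarily geometric. You assert that the interiors $\ich\gcal_{1,m}^k$ are ``separated precisely by the kinematic hyperplanes'' and that one can reach every $\tube(\ich\gcal_{1,m}^k)$ from $\tube(\ich\gcal_1^k)$ by crossing one hyperplane at a time. This is true for $k=2$ but false for $k\geq 3$: the union $\bigcup_m \ich\gcal_{1,m}^k$ is much smaller than $\ical_2^k$ minus the pole hyperplanes. For $k=3$, the point $\lambdav=(0,\,0.6\pi,\,1.2\pi)$ lies in $\ical_2^3$ and on no hyperplane $\lambda_n-\lambda_j=\pi$, yet it belongs to none of $\ich\gcal_1^3$, $\ich\gcal_{1,1}^3$, $\ich\gcal_{1,2}^3$ (these require $\lambda_3<\lambda_1+\pi$, $\lambda_2>\lambda_1+\pi$, $\lambda_3>\lambda_2+\pi$ respectively). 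So your pasting procedure never produces the function on most of $\tube(\ical_2^k)$; the essential convexity gain comes from applying the tube theorem (Lemma~\ref{lem:graphtube}) to the \emph{whole connected graph} $\gcal_2^k$ at once, not to its pieces. A second, smaller gap is the claim that Lemma~\ref{lemma:onepole} can be ``read backwards'': that lemma computes the jump of a function already known to be meromorphic; the converse (a $\delta$-type jump between two separately analytic pieces implies meromorphic continuation with a simple pole) is a distinct edge-of-the-wedge-type statement that you do not supply.

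Both gaps are closed simultaneously by the device the paper uses: set $G_k(\zetav):=T_k(\zetav)\prod_{j>j'}(\zeta_j-\zeta_{j'}-i\pi)$ on $\tube(\gcal_2^k)$. By \ref{it:tdrecursion}, the mismatch of boundary values at the shared nodes \eqref{eq:stairsmeet} is a sum over contractions in which every nontrivial term carries $\delta_C$, and the polynomial factor vanishes on $\supp\delta_C$; hence only the trivial contraction survives and the boundary values of $G_k$ agree. Thus $G_k$ is a CR distribution on the \emph{connected} graph $\gcal_2^k$, Lemma~\ref{lem:graphtube} extends it analytically to all of $\tube(\ich\gcal_2^k)=\tube(\ical_2^k)$, and dividing by the polynomial gives $F_k$ with at most first-order poles at $\zeta_n-\zeta_m=i\pi$. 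Note also that matching the residues with \ref{it:fdrecursion}, on which you spend considerable effort, is not needed for this proposition (it is established separately in Prop.~\ref{proposition:residues}); here one only needs that the poles are at most first order, which is immediate from the division.
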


\begin{proof}
Using the extended $T_k$ constructed above, we define distributions $G_k$ on $\mathcal{T}(\gcal_{2}^{k})$ by
\begin{equation}\label{gdef}
G_k(\zetav) := T_k(\zetav)  \prod_{j > j'} \big( \zeta_{j}-\zeta_{j'} - i \pi \big).
\end{equation}
We claim that these are CR distributions on the graph. For that, it remains to show that their boundary values agree at the nodes $\im \zetav=\lambdav^{m \cap m'}+\ell\piv$, see Eq.~\eqref{eq:stairsmeet}. We treat only the case $\ell=0$; the case $\ell=1$ can then be treated similarly,
and for general $\ell$ the result can be obtained by periodicity. 

For $\ell=0$, we can compute the boundary values in question from \eqref{eq:fgm}, \eqref{eq:fgmp} and \ref{it:tdrecursion}. In fact, this computation simplifies greatly since the polynomial factor in \eqref{gdef} vanishes on the support of $\delta_{C}$ on the right hand side of \ref{it:tdrecursion}, except for the term corresponding to the contraction $C=(m,n,\emptyset)$. This leads us to
\begin{equation}
G_{k}(\zetav)\big\vert_{\gcal^{k}_{1,m}}=G_{k}(\zetav)\big\vert_{\gcal^{k}_{1,m'}}.
\end{equation}
Hence, the $G_k$ are CR distributions on $\tube(\gcal_2^k)$. 

We can now apply Lemma \ref{lem:graphtube} to the connected graph $\gcal_2^k$, which yields an extension of $G_k$ to an analytic function on $\tube(\ich \gcal_{2}^{k})$, with distributional boundary values on $\tube(\ach \gcal_{2}^{k})$. We then define $F_k$ as
\begin{equation}
F_k(\zetav) := G_k(\zetav) \cdot \prod_{j > j'} \big(\zeta_{j}-\zeta_{j'}-i\pi\big)^{-1},
\end{equation}
which is evidently analytic on the same domain, except for possible poles at $\zeta_{n}-\zeta_{m}=i\pi$, $m<n$. Taking the boundary limit to $\tube(\gcal^k_0)$ from within the convex hull of $\gcal^k_0$, the boundary distribution coincides with $T_k$ by construction, i.e., \eqref{eq:ftboundary-d} holds.
\cmpqed\end{proof}

As meromorphic functions, we can extend $F_{k}$ even further, using $S$-symmetry of the $T_k$.

\begin{proposition}\label{proposition:extendpermuted}
  The functions $F_k$ of Prop.~\ref{proposition:extendinterior} extend meromorphically to $\tube(\ical_3^k)$, where
\begin{equation}\label{eq:i3k}
\ical_3^k :=\{ \lambdav \in \rbb^k: |\lambda_{j}-\lambda_{j'}|< 2\pi \text{ for all $j,j'$}\}.
\end{equation}
They fulfill the $S$-symmetry condition \ref{it:fdsymm}.
\end{proposition}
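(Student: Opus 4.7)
The plan is to build up the extension of $F_k$ to $\tube(\ical_3^k)$ one adjacent transposition at a time. The key geometric observation is that $\ical_3^k$ decomposes as the union of the permuted staircases $\sigma\ical_2^k:=\{\lambdav:\lambdav^\sigma\in\ical_2^k\}$, $\sigma\in\perms{k}$, together with the lower-dimensional ``seams'' $\{\lambda_i=\lambda_j\}$ that separate them. For an adjacent transposition $\tau=(j,j+1)$, the seam between $\ical_2^k$ and $\tau\ical_2^k$ lies in the hyperplane $\{\lambda_j=\lambda_{j+1}\}$. On this hyperplane, the kinematic pole conditions $\lambda_n-\lambda_m=\pi$ ($m<n$) are satisfied only on a lower-dimensional subset, so $F_k$ is analytic on a one-sided neighborhood of the generic points of the seam inside $\tube(\ical_2^k)$.

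Exploiting this analyticity, Cauchy's formula applied in the direction perpendicular to the seam extends $F_k$ analytically to a $\tau$-invariant open neighborhood $V\subset\cbb^k$ of the (generic part of the) seam, reaching into $\tube(\tau\ical_2^k)$. I would then introduce the candidate extension $F_k(\zetav):=S(\zeta_{j+1}-\zeta_j)F_k(\zetav^\tau)$ on $\tube(\tau\ical_2^k)$ and verify that it agrees with the Cauchy extension on $V\cap\tube(\tau\ical_2^k)$. For this, consider on $V$ the function
\begin{equation*}
H(\zetav):=F_k(\zetav)-S(\zeta_{j+1}-\zeta_j)\,F_k(\zetav^\tau),
\end{equation*}
which is analytic because $V$ is $\tau$-invariant and $S$ has no poles near the real axis. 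By the construction of $F_k$ from $T_k$, the boundary value of $F_k$ at $\im\zetav=0$ is precisely $T_k(\thetav)$ (the node $\lambdav^{(k,0)}=0$); condition \ref{it:tdsymm} then yields $T_k(\thetav)=S(\theta_{j+1}-\theta_j)T_k(\thetav^\tau)$, so $H$ vanishes on the real slice $\rbb^k\cap V$. Since $\rbb^k$ is a totally real submanifold of $\cbb^k$ of maximal real dimension, the iterated one-variable identity principle forces $H\equiv 0$ on $V$. Hence the two definitions of $F_k$ agree, producing a meromorphic function on $\tube(\ical_2^k\cup\tau\ical_2^k)$ that satisfies \ref{it:fdsymm} for $\sigma=\tau$.

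Iterating this construction over all adjacent transpositions, which generate $\perms{k}$, extends $F_k$ step by step to a meromorphic function on $\tube(\ical_3^k)$; the $S$-symmetry \ref{it:fdsymm} for a general $\sigma\in\perms{k}$ then follows because the relations established for adjacent transpositions propagate through products via the composition law of $S^\sigma$.

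The main obstacle I anticipate is verifying that the iterated extension is unambiguous at the ``multi-way junctions'' where three or more permuted staircases meet (genuinely multi-dimensional only for $k\geq 3$). At such a junction, different sequences of adjacent transpositions may take one to the same target staircase, and one must check they produce the same meromorphic function there. This reduces once more to the identity principle applied to a suitable difference $H$ on a connected $\tau$-invariant neighborhood of the junction, combined with the braid-type relations that $S^\sigma$ satisfies automatically from its definition as an ordered product of two-particle $S$-factors.
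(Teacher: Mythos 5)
Your overall scaffolding --- the decomposition of $\ical_3^k$ into the permuted staircases $\ical_{2,\sigma}^k$ and the definition of the extension on $\tube(\ical_{2,\sigma}^k)$ by $F_k(\zetav):=S^\sigma(\zetav)F_k(\zetav^\sigma)$ --- is exactly the paper's starting point. But the step on which everything rests, the continuation of $F_k$ \emph{across} the seam $\{\lambda_j=\lambda_{j+1}\}$, is not justified. One-sided analyticity of $F_k$ near a boundary hypersurface of $\tube(\ical_2^k)$ does not permit an extension across it: Cauchy's formula reproduces a function inside its domain of analyticity but cannot enlarge that domain. To cross the seam you would need matching distributional boundary values from both sides (an edge-of-the-wedge input), and conditions (TD) supply no data there --- a generic seam point has $\lambda_j=\lambda_{j+1}\neq 0$ and lies nowhere near the graph $\gcal_0^k$ on which the $T_k$ are given. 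Your fallback, the identity principle for $H(\zetav)=F_k(\zetav)-S(\zeta_{j+1}-\zeta_j)F_k(\zetav^\tau)$, fails for the same reason: a neighborhood $V$ of the generic part of the seam is disjoint from $\rbb^k$, so ``$H$ vanishes on $\rbb^k\cap V$'' is vacuously true of the empty set and forces nothing.

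The paper circumvents the seams entirely. All $k!$ tubes $\tube(\ical_{2,\sigma}^k)$ share the single real edge $\rbb^k$ (the tube over the origin $\lambdav=0$), and there the boundary distributions of all the $S$-transformed candidates agree by \ref{it:tdsymm}, since each equals $S^\sigma(\thetav)T_k(\thetav^\sigma)=T_k(\thetav)$. The (multi-wedge) edge-of-the-wedge theorem applied around real points then yields analyticity on a full complex neighborhood $\ncal'$ of $\rbb^k$, making $\ncal'\cup\bigcup_\sigma\tube(\ical_{2,\sigma}^k)$ connected; the tubular edge-of-the-wedge theorem of Bros identifies the envelope of holomorphy of this domain with the tube over the convex hull of its base, which is $\tube(\ical_3^k)$, and the envelope of meromorphy coincides with the envelope of holomorphy. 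This global envelope argument is what fills in the seams and the junctions; there is no local gluing across $\{\lambda_j=\lambda_{j+1}\}$ to be done, and without some substitute for it your induction over adjacent transpositions cannot get started.
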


\begin{proof}
For any fixed permutation $\sigma \in \perms{k}$, we consider the region
\begin{equation}
  \ical_{2,\sigma}^k :=\{ \lambdav  \in \rbb^k: \lambda_{\sigma(1)}<\ldots<\lambda_{\sigma(k)}<\lambda_{\sigma(1)}+2\pi \}.
\end{equation}
We define $F_k$ on $\tube(\ical_{2,\sigma}^k)$ by
\begin{equation}\label{permf}
F_{k}(\zetav):=F_{k}(\zeta_{\sigma(1)},\ldots,\zeta_{\sigma(k)}) \,S^{\sigma}(\zetav)
\end{equation}
with $S^\sigma$ as in Eq.~\eqref{eq:Sperm}. Since $S$, and hence $S^\sigma$, is meromorphic for all arguments, this gives $F_k$ as a meromorphic function on the disjoint regions $\tube(\ical_{2,\sigma}^k)$. Since $S$ has no poles on the real line, we can in fact find a complex neighborhood $\ncal$ of $\rbb^k$ (not necessarily tubular) where all $S^\sigma$ are analytic; hence $F_k$ is analytic in $\ncal \cap \tube(\ical_{2,\sigma}^k)$ for all $\sigma$. Due to \ref{it:tdsymm}, the boundary distributions at $\rbb^k$ from within all these domains agree. An application of the edge-of-the-wedge theorem (e.g., in the form of \cite{Eps:edge_of_wedge}) around each real point yields an extension of $F_k$  to a possibly smaller complex neighborhood $\ncal'\subset \ncal$ of $\rbb^k$. That is, $F_k$ is meromorphic on the connected domain
\begin{equation}
   \rcal := \ncal' \cup \bigcup_{\sigma \in \perms{k}} \tube(\ical_{2,\sigma}^k).
\end{equation}
It follows from the tubular edge-of-the-wedge theorem \cite{Bros:1977} that the envelope of holomorphy of $\rcal$ is $\operatorname{conv}( \rcal )$. But this agrees with its envelope of meromorphy \cite[Theorem~3.6.6]{JarnickiPflug:2000}. Hence $F_k$ continues meromorphically to $\operatorname{conv}( \rcal) = \tube(\ical_{3}^k)$.
\cmpqed\end{proof}

Periodicity and $S$-symmetry of $T_k$ finally allow us to extend $F_k$ to the entire multi-variable complex plane.

\begin{proposition}\label{proposition:extendeverywhere}
  The functions $F_k$ of Prop.~\ref{proposition:extendinterior} extend meromorphically to $\cbb^k$. They fulfill \ref{it:fdmero}, \ref{it:fdsymm} and \ref{it:fdperiod}.
\end{proposition}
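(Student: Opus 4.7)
The plan is first to establish the $S$-periodicity condition \ref{it:fdperiod} on the already-constructed domain $\tube(\ical_3^k)$, then to use it as a functional equation to extend $F_k$ meromorphically to all of $\cbb^k$, and finally to verify the three claimed properties on the extended domain.

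For the first step, observe that the definition \eqref{stairs} (specialized to $m=1$) yields the boundary identity $T_k(\zetav)=T_k(\zeta_2,\ldots,\zeta_k,\zeta_1+2i\pi)$ on $\tube(\gcal^k_{1,1})$, which passes to the meromorphic extension as $F_k(\zetav) = F_k(\zeta_2,\ldots,\zeta_k,\zeta_1+2i\pi)$ on a suitable open subset of $\tube(\ical_2^k)$, and hence on the connected domain $\tube(\ical_3^k)$ by analytic continuation. Combining this with the $S$-symmetry \ref{it:fdsymm} from Prop.~\ref{proposition:extendpermuted}, applied to the cyclic permutation $\sigma_0 \colon i \mapsto i+1 \pmod{k}$ at the point $(\zeta_1+2i\pi,\zeta_2,\ldots,\zeta_k)$, and using the $2i\pi$-periodicity $S(\zeta+2i\pi)=S(\zeta)$ (cf.~\eqref{eq:srelat}) to eliminate the shift inside the resulting $S$-factors, one obtains
\begin{equation*}
F_k(\zeta_1+2i\pi,\zeta_2,\ldots,\zeta_k) = \Big(\prod_{j=2}^k S(\zeta_j-\zeta_1)\Big) F_k(\zeta_2,\ldots,\zeta_k,\zeta_1+2i\pi) = \Big(\prod_{j=2}^k S(\zeta_j-\zeta_1)\Big) F_k(\zetav),
\end{equation*}
which is precisely condition \ref{it:fdperiod} for $j=1$. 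The remaining cases $j \geq 2$ follow by the same argument applied to $\gcal^k_{1,m}$ for the corresponding $m$, or by combining $j=1$ with transpositions via \ref{it:fdsymm}.

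Having \ref{it:fdperiod} available on $\tube(\ical_3^k)$, I would extend $F_k$ to $\cbb^k$ by iterating the functional equation. Given $\zetav\in\cbb^k$, set $n_j := \lfloor \im \zeta_j / (2\pi)\rfloor$ and $\zetav':=\zetav - 2\pi i \nv$; then $\im \zeta_j' \in [0,2\pi)$, so $\zetav'\in\tube(\ical_3^k)$, and $F_k(\zetav)$ is defined by applying \ref{it:fdperiod} once for each unit in each component of $\nv$, starting from the already-known $F_k(\zetav')$. Well-definedness (independence of the decomposition $\zetav=\zetav'+2\pi i \nv$ and of the order in which the shifts are performed) follows from a cocycle computation: since $S(\zeta+2i\pi)=S(\zeta)$, successive shifts by $2\pi i \ev^{(j)}$ and $2\pi i \ev^{(j')}$ yield the same accumulated $S$-factor in either order, and on $\tube(\ical_3^k)$ the construction is consistent with the identity already proved.

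The three properties on $\cbb^k$ are then verified as follows. Condition \ref{it:fdperiod} holds by construction. For \ref{it:fdsymm}, both sides are meromorphic on $\cbb^k$ and coincide on the connected set $\tube(\ical_3^k)$ by Prop.~\ref{proposition:extendpermuted}, hence they agree everywhere by the identity principle. For \ref{it:fdmero}, given $\zetav$ with $\im\zeta_1<\ldots<\im\zeta_k<\im\zeta_1+\pi$, setting $\ell:=\lfloor \im \zeta_1/(2\pi)\rfloor$ puts $\zetav - 2\pi i \ell \mathbf{1}$ into $\tube(\ical_2^k)$, where $F_k$ is known from Prop.~\ref{proposition:extendinterior} to be analytic away from the hyperplanes $\zeta_n-\zeta_m=i\pi$ ($m<n$); in our region $\im(\zeta_n-\zeta_m)<\pi$, so these poles are avoided. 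Transfer of analyticity back to $\zetav$ rests on the total-shift periodicity $F_k(\zetav+2\pi i \mathbf{1}) = F_k(\zetav)$, which follows from iterating \ref{it:fdperiod} over all coordinates and noting that $\prod_j \prod_{i\neq j} S(\zeta_i-\zeta_j)$ collapses to $1$ by the unitarity relation $S(\theta)S(-\theta)=1$ in \eqref{eq:srelat}. The main technical obstacle lies in the first step: one must carefully promote the boundary identity from \eqref{stairs} to a meromorphic equality on the connected domain where \ref{it:fdsymm} can be applied, and track the $2i\pi$-shifts inside the $S$-factors so that they cancel exactly as dictated by the $2\pi i$-periodicity of $S$.
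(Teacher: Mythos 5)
Your proposal is essentially the paper's proof with the logical order rearranged: the paper defines the extension directly by the formula $F_k(\zetav):=S_{\nv}(\zetav)F_k(\zetav+2i\pi\nv)$ and then verifies consistency, while you first derive \ref{it:fdperiod} on $\tube(\ical_3^k)$ and then iterate it; in both cases the essential input is \eqref{stairs} combined with $S$-symmetry and the $2i\pi$-periodicity of $S$, so the two arguments are the same in substance.

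The one place where your write-up is thinner than it needs to be is well-definedness. The order-independence of the accumulated $S$-factor (your ``cocycle computation'') is indeed immediate from $S(\zeta+2i\pi)=S(\zeta)$, but it does not by itself give consistency: when two base points $\zetav'$ and $\zetav'+2i\pi\nv$ both lie in $\tube(\ical_3^k)$ with $\nv$ having several nonzero entries (after the reduction this is exactly the case $\nv\in\{0,1\}^k$ treated in the paper), the intermediate points of a coordinate-by-coordinate chain of single shifts generally leave $\tube(\ical_3^k)$, so the identity you proved in step 1 does not apply link by link. You can repair this either by chaining the single-step identities on a smaller open subset where all intermediate imaginary parts do remain in $\ical_3^k$ (such a subset exists) and then invoking the identity theorem on the convex, hence connected, overlap domain, or by doing what the paper does, namely verifying the simultaneous-shift relation \eqref{neq01form} directly from \eqref{stairs} for general $m$ together with \ref{it:tdsymm}. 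With that point made explicit, the rest of your argument (the floor-function normalization, the collapse of $\prod_j\prod_{i\neq j}S(\zeta_i-\zeta_j)$ to $1$ for the total shift, and the transfer of \ref{it:fdmero} into $\tube(\ich\gcal_1^k)$ where the kinematic poles are avoided) is correct.
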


\begin{proof}
We define $F_{k}$ on $\cbb^k$ by
\begin{equation}\label{fperiodica}
F_{k}(\zetav) :=
\Bigg( \prod_{\ell=1}^{k}\Big( \prod_{j \neq \ell} S(\zeta_\ell-\zeta_j)
   \Big)^{ n_\ell } \Bigg)
F_{k}(\zetav + 2 i \pi \nv),
\end{equation}
where $\nv\in\zbb^k$ is chosen such that $\zetav + 2 i \pi\nv \in \tube(\ical_3^k)$.
We need to show that this is well-defined: It is certainly possible to choose such $\nv$ for any $\zetav$, but there might be several such choices. 
Suppose that, for fixed $\zetav$, there exist $\nv\neq\nv'$ such that $\im\zetav+2\pi\nv\in \ical_3^k$ and $\im\zetav+2\pi\nv'\in \ical_3^k$.
We need to show that
\begin{equation}\label{periodrel}
\underbrace{\prod_{\ell=1}^{k}\Big( \prod_{j \neq \ell} S(\zeta_\ell-\zeta_j)
   \Big)^{ n_\ell }}_{=:S_{\nv}(\zetav)}
F_{k}(\zetav +2i\pi \nv)
=\underbrace{\prod_{\ell=1}^{k}\Big( \prod_{j \neq \ell} S(\zeta_\ell-\zeta_j)
   \Big)^{ n'_\ell }}_{=S_{\nv'}(\zetav)}
F_{k}(\zetav+2i\pi \nv').
\end{equation}
Dividing by $S_{\nv'}(\zetav)$, and using $2 i\pi$-periodicity of the $S$-factors, we can assume without loss of generality that $\nv'=0$ and $\im\zetav \in \ical_3^k$.
Further, one checks that the factor $S_{\nv}(\zetav)$ defined above fulfills $S_{\nv}(\zetav)=S_{\nv^\rho}(\zetav^\rho)$ for any permutation $\rho$. Taking into account that $F_k$ is known to be $S$-symmetric by Prop.~\ref{proposition:extendpermuted}, we see that the relation \eqref{periodrel} is invariant under permuting the components of $\zetav,\nv$; hence we can assume that $n_1 \leq \ldots \leq n_k$.

Now, with $\lambdav:=\im\zetav$, the conditions $\lambdav\in \ical_3^k$ and $\lambdav+2 \pi \nv \in \ical_3^k$ imply, cf.~\eqref{eq:i3k},
\begin{equation}
\forall j,k: \quad |\lambda_{j}-\lambda_{k}| < 2\pi, \quad |\lambda_{j}-\lambda_{k}+2\pi (n_{j}-n_{k})| < 2\pi. 
\end{equation}
A short computation shows that $n_j\in\{N,N+1\}$ for all $j$ with some fixed $N \in \zbb$. 
In the following, we treat only the case $N=0$; the case $N=-1$ can be handled with similar arguments, and for all other $N$ we employ $2i\piv$-periodicity of $F_k$.

\sloppy
For showing the identity \eqref{periodrel} between meromorphic functions --  where now $\nv=(0,\ldots,0,1,\ldots,1)$ with $m$ entries of $0$, and $\nv'=0$ --, it suffices to check it on a real open set, possibly on the boundary of the domain. Therefore, we can choose $\im \zetav=0$ and $\im\zetav+2\pi \nv\in \bar\ical_3^k$.
Inserting $T_k$ as the boundary value of $F_k$, and using \eqref{eq:Sperm}, it remains to show that for real $\thetav$ and in the sense of distributions,
\begin{equation}\label{neq01form}
T_k(\thetav)=\Big(\prod_{\ell>m}\prod_{j \leq m}S(\theta_{\ell}-\theta_{j})\Big)
T_k(\theta_1,\ldots,\theta_m,\theta_{m+1} + 2 \pi i, \ldots, \theta_k+2\pi i).
\end{equation}
On the right hand side, $T_k$ is evaluated on a point of $\tube(\gcal^{k}_{1,m})$. Using Eq.~\eqref{stairs}, and then $2 i \piv$-periodicity of $T_k$ \eqref{defonestair}, we find
\begin{equation}\label{stairform}
\rhs{neq01form} = \Big(\prod_{\ell>m}\prod_{j \leq m}S(\theta_{\ell}-\theta_{j})\Big) T_k(\theta_{m+1},\ldots,\theta_{k},\theta_{1},\ldots,\theta_{m}) = S^\sigma(\thetav) T_k(\thetav^\sigma)
\end{equation}
with a certain permutation $\sigma$.
Since $T_k$ is $S$-symmetric by \ref{it:tdsymm}, this proves \eqref{neq01form}.

Now that $F_k$ is known to be well-defined on $\cbb^k$, it is clear that it is meromorphic everywhere and analytic on $\tube(\ich \gcal_1^k)$, hence \ref{it:fdmero} is fulfilled. \ref{it:fdsymm} was shown in Prop.~\ref{proposition:extendpermuted}, and \ref{it:fdperiod} is a special case of \eqref{fperiodica}, which extends to all complex arguments.
\cmpqed\end{proof}

Now, we want to compute the residues of $F_{k}$ in order to derive the recursion relations \ref{it:fdrecursion}. They arise as a consequence of the corresponding condition \ref{it:tdrecursion}.

\begin{proposition}\label{proposition:residues}
  The first-order poles of $F_k$ at $\zeta_n-\zeta_m=i\pi$, $m<n$, have residues as given by \ref{it:fdrecursion}.
\end{proposition}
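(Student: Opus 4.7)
The strategy is to reduce to the case $m = 1$, $n = k$ via $S$-symmetry, and then extract the residue from the recursion relation \ref{it:tdrecursion} by identifying its two sides as boundary distributions of $F_k$ from opposite sides of the kinematic pole hyperplanes, matched via the jump formula of Prop.~\ref{proposition:multivarres}. For the reduction, choose $\sigma \in \perms{k}$ with $\sigma(1) = m$ and $\sigma(k) = n$; the $S$-symmetry \ref{it:fdsymm} gives $F_k(\zetav) = S^\sigma(\zetav) F_k(\zetav^\sigma)$, transferring the residue at $\zeta_n - \zeta_m = i\pi$ to the residue at $\zeta_k - \zeta_1 = i\pi$ of the permuted function. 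The $S^\sigma$ prefactor, combined with the base-case $S$-factor structure, will reproduce the product $\prod_{j=m}^n S(\zeta_j - \zeta_m)$ appearing in the claim.

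For the base case, apply \ref{it:tdrecursion} with $m = 1$: since all $l_j$ lie in $\{1\}$, only contractions with $|C| \leq 1$ contribute, yielding
\[ T_k(\thetav + i\lambdav^{(k,-1)}) = T_k(\check\thetav + i\lambdav^{(k,1)}) - \sum_{r=2}^k \delta(\theta_1 - \theta_r)\, S_{C_r}(\thetav)\, R_{C_r}(\thetav)\, T_{k-2}(\check\thetav^{(r)}), \]
with $C_r = \{(1,r)\}$, $\check\thetav = (\theta_2,\ldots,\theta_k,\theta_1)$, and $\check\thetav^{(r)}$ obtained by further removing $\theta_r$. Both sides are interpreted as boundary distributions of $F_k$ at the node $\lambdav^{(k,-1)}$: the left-hand side is the boundary approached from $\ich\gcal_{1,0}^k$ (where $\lambda_k - \lambda_1 < \pi$), while the first term on the right-hand side equals the boundary approached from $\ich\gcal_{1,1}^k$ (where $\lambda_k - \lambda_1 > \pi$), via the meromorphic identity
\[ F_k(\zeta_2,\ldots,\zeta_k,\zeta_1+2i\pi) = F_k(\zetav), \]
which follows by combining the $S$-periodicity \ref{it:fdperiod} with the cyclic-permutation $S$-symmetry \ref{it:fdsymm}.

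At $\lambdav^{(k,-1)}$ the $k-1$ pole hyperplanes $\zeta_r - \zeta_1 = i\pi$ ($r = 2,\ldots,k$) intersect. Applying Prop.~\ref{proposition:multivarres} to this configuration, the jump decomposes into contributions indexed by subsets $M \subset \{2,\ldots,k\}$; the $|M| \geq 2$ terms involve iterated residues $\res_{\zeta_{r'}} \res_{\zeta_r} F_k$ which vanish because each residue removes the variable $\zeta_1$, consistent with the absence of $|C| \geq 2$ terms in the recursion. Matching the coefficient of $\delta(\theta_1 - \theta_k)$ alone yields $2\pi i \, \res_{\zeta_k - \zeta_1 = i\pi} F_k\big|_{\theta_k = \theta_1} = -S_{C_k}(\thetav) R_{C_k}(\thetav)\big|_{\theta_k = \theta_1}\, F_{k-2}(\theta_2,\ldots,\theta_{k-1})$, where $F_{k-2}$ arises as the boundary value of $T_{k-2}(\check\thetav^{(k)} + i\zerov)$. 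The main obstacle is the final computational step: verifying, via \eqref{eq:sc}, \eqref{eq:rc} and the symmetries \eqref{eq:srelat} of $S$ (notably $S(0)^2 = 1$ and $S(i\pi) = S(0)$), together with the $S^\sigma$ factor absorbed in Step 1, that $S_{C_k} R_{C_k}$ reorganizes into the prescribed product $\prod_{j=1}^k S(\zeta_j - \zeta_1)\big(1 - \prod_{p=1}^k S(\zeta_1 - \zeta_p)\big)$ on the pole.
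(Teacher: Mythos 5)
Your proposal is correct and follows essentially the same route as the paper: reduce to $(m,n)=(1,k)$ by $S$-symmetry, compute the jump of $F_k$ across the kinematic hyperplane as the difference of two node boundary values related by \eqref{stairs}, identify that difference via \ref{it:tdrecursion} with $m=1$, and read off the residue from the jump formula. The paper works at the node $\lambdav^{(k,1)}$ and restricts to generic real points ($\theta_j\neq\theta_{j'}$ except $j=1,j'=k$) so that only the single-pole Lemma~\ref{lemma:onepole} is needed rather than Prop.~\ref{proposition:multivarres}; and the ``final obstacle'' you flag does close routinely, using $S(\zeta)S(-\zeta)=1$, $S(0)^2=1$ and $2\pi i$-periodicity to reconcile $\frac{1}{2\pi i}\bigl(1-\prod_p S(\zeta_p-\zeta_1)\bigr)$ with the form stated in \ref{it:fdrecursion}.
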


\begin{proof}
 It suffices to prove \ref{it:fdrecursion} for $m=1$, $n=k$; the general case then follows by $S$-symmetry. Since the residues are meromorphic functions on the pole hypersurfaces, it suffices to verify \ref{it:fdrecursion} on a real open set. We therefore compare the boundary values of $F_k$ at the points $\zetav_\pm = \thetav + i (0,\ldots, 0, \pi \pm 0)$, where we can assume $\theta_j \neq \theta_{j'}$ for $j \neq j'$ (unless $j=1,j'=k$).  We note that $\im\zetav_-\in\ich \gcal_1^k $ but $\im\zetav_+\in\ich \gcal_{1,k-1}^k $. Using \eqref{stairs} and the boundary values of $F_k$ as in \eqref{eq:ftboundary-d}, we have
\begin{equation}
\begin{aligned}
   F_{k}(\theta_{1}, \ldots, &\theta_{k-1},\theta_{k}+i\pi -i0)- F_{k}(\theta_{1}, \ldots, \theta_{k-1},\theta_{k}+i\pi+i0)\\
&=T_{k}(\theta_{1}, \ldots, \theta_{k-1},\theta_{k}+i\pi)-T_{k}(\theta_{k}-i\pi,\theta_{1}, \ldots, \theta_{k-1})\\
&=\delta(\theta_k-\theta_1)\Big( 1- \prod_{p=1}^{k}S(\theta_p-\theta_k)\Big)F_{k-2}(\theta_{2}, \ldots, \theta_{k-1}),
\end{aligned}
\end{equation}
where in the second equality we made use of \ref{it:tdrecursion} in the case $m=1$. Referring to Lemma~\ref{lemma:onepole}, we can read off the residue of the pole:
\begin{equation}
\res_{\zeta_{k}-\zeta_{1}=i\pi}F_{k}(\zetav)=\frac{1}{2 \pi i}\Big( 1-\prod_{p=1}^{k}S(\zeta_{p}-\zeta_{1})\Big)F_{k-2}(\hat{\zetav}).
\end{equation}
This is exactly \ref{it:fdrecursion} in the case $m=1,n=k$.
\cmpqed\end{proof}

The only remaining properties to be discussed are the bounds \ref{it:fdboundsreal} and \ref{it:fdboundsimag}, which are easy to obtain from results of Sec.~\ref{sec:wedges}. We summarize:

\begin{theopargself}
\begin{proof}[\ofwhat{Theorem~\ref{theorem:doubleconeequiv}\ref{it:doubleconeequiv-tf}}]
 Let $T_k$ fulfill (TD). We saw in Prop.~\ref{proposition:extendeverywhere} and \ref{proposition:residues} that these distributions have meromorphic extensions $F_k$ which fulfill \ref{it:fdmero}--\ref{it:fdrecursion}. They have the proposed boundary values at nodes (Prop.~\ref{proposition:extendinterior}). The bounds on nodes \ref{it:fdboundsreal} are a direct consequence of \ref{it:tdboundsreal}. Bounds in the interior \ref{it:fdboundsimag} can be obtained by applying Thm.~\ref{theorem:wedgeequiv}\ref{it:wedgeequiv-tf} twice, namely to $T_k$ and $T_k(\cdotarg - i \piv)$, which both fulfill conditions (TW).
\cmpqed\end{proof}
\end{theopargself}

\subsection{(FD) \texorpdfstring{$\Rightarrow$}{=>} (AD)}\label{sec:fd-to-ad}

We now set out from meromorphic functions $F_k$ fulfilling (FD), and construct an associated local observable $A$. Before proceeding to the definition of $A$, let us first compute the higher-order residues of $F_k$ as a consequence of \ref{it:fdrecursion}. We remind the reader of the notion of contractions, see Eqs.~\eqref{eq:lcvector}--\eqref{eq:rc}.

\begin{lemma}\label{lemma:fres}
There holds
\begin{equation}\label{resgenrec}
\res_{\eta_{r_{1}}-\theta_{\ell_{1}}=0} \!\!\ldots\!\! \res_{\eta_{r_{|C|}}-\theta_{\ell_{|C|}}=0}F_{m+n}(\thetav,\etav +i\piv)=\\
\frac{(-1)^{|C|}}{(2i\pi)^{{|C|}}}S_{C} R_{C}(\thetav,\etav) F_{m+n-2{|C|}}(\hat{\thetav},\hat{\etav}+i\piv),
\end{equation}
where $C$ is the contraction $(m,n,\{(\ell_1,r_1+m),\ldots,(\ell_{|C|},r_{|C|}+m)\})$.
\end{lemma}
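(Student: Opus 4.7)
The plan is to prove \eqref{resgenrec} by induction on $|C|$, using the single-pole recursion relation \ref{it:fdrecursion} as the essential input.

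For the base case $|C| = 0$, both sides of \eqref{resgenrec} equal $F_{m+n}(\thetav, \etav+i\piv)$ since the empty products in \eqref{eq:sc} and \eqref{eq:rc} give $S_C = R_C = 1$. For the single-pair case $|C| = 1$ with $C = (m,n,\{(\ell, r+m)\})$, I would set $\zetav = (\thetav, \etav + i\piv)$ so that $\eta_r - \theta_\ell = 0$ becomes the kinematic pole $\zeta_{m+r} - \zeta_\ell = i\pi$, apply \ref{it:fdrecursion} with its indices relabelled as $(m,n) \to (\ell, m+r)$, and then use the symmetry relations \eqref{eq:srelat} to absorb the $i\pi$-shifts in the $\etav$-block into the $S$-factors. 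A direct comparison then matches the resulting expression with the definitions \eqref{eq:sc}, \eqref{eq:rc} specialized to a single contracted pair.

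For the inductive step I would exploit that iterated residues at distinct first-order pole hyperplanes commute: locally around the intersection of the pole manifolds, $F_{m+n}$ multiplied by the product of the $|C|$ linear pole factors is analytic, so the residues are just coefficients of a Laurent-type expansion. Taking $\res_{\eta_{r_1}-\theta_{\ell_1}=0}$ first reduces $F_{m+n}(\thetav, \etav+i\piv)$ to a multiple of $F_{m+n-2}(\hat\thetav, \hat\etav+i\piv)$ by the single-pair case, and the remaining $|C|-1$ residues correspond to the contraction $C' := C \setminus \{(\ell_1, r_1+m)\}$, reinterpreted as a contraction of type $(m-1, n-1)$ on the shortened variable list $(\hat\thetav, \hat\etav)$. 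The inductive hypothesis applied to $C'$ supplies the factor $(-1)^{|C|-1}/(2i\pi)^{|C|-1}\, S_{C'} R_{C'}$, which combines with the single-step factors into $(-1)^{|C|}/(2i\pi)^{|C|}\, S_C R_C$, as required.

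The main obstacle will be the combinatorial bookkeeping of the $S$-factors. The quantity $S^{(m)}_{p,q}$ appearing in \eqref{eq:sc}, \eqref{eq:rc} takes argument $\xi_p - \xi_q$ or $\xi_q - \xi_p$ depending on whether $p,q$ both lie in the left block $\{1,\ldots,m\}$ (or both in the right block) versus straddling the boundary, and this case distinction is precisely what is encoded by the $i\pi$-shift via \eqref{eq:srelat}. The block structure itself changes, however, when a pair is removed, since the reduced problem carries $m-1$ and $n-1$ indices on either side, and the indices $\ell_j, r_j$ of the surviving contraction $C'$ must be renumbered accordingly. I would need to verify carefully that the single-step $S$-factor coming from \ref{it:fdrecursion} supplies precisely the terms in the nested products \eqref{eq:sc} indexed by the pair $(\ell_1, r_1)$, and likewise that the single-step $R$-factor matches exactly one of the factors in the outer product \eqref{eq:rc}. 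This is the same type of bookkeeping that underlies the derivation of the transformation formula for $\cme{m,n}{J A\st J}$ in \cite[Prop.~3.11]{BostelmannCadamuro:expansion}, and it is exactly such combinatorial identities that make the recursion condition \ref{it:tdrecursion} compatible with its meromorphic counterpart \ref{it:fdrecursion}.
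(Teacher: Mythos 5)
Your proposal is correct and follows essentially the same route as the paper: induction on $|C|$, with the base case obtained by specializing \ref{it:fdrecursion} to $\zetav=(\thetav,\etav+i\piv)$, and the inductive step splitting off a single contracted pair and recombining the prefactors. The combinatorial recombination you flag as the main obstacle, $S_{C'}R_{C'}(\thetav,\etav)\,S_{C_1}R_{C_1}(\hat\thetav,\hat\etav)=S_C R_C(\thetav,\etav)$ on the support of $\delta_C$, is exactly \cite[Lemma~3.2]{BostelmannCadamuro:expansion}, which the paper invokes at that point.
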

\begin{proof}
Our proof uses induction on $|C|$.
We first note that \ref{it:fdrecursion} in our specific situation simplifies to
\begin{equation}\label{eq:oneres}
\res_{\eta_{r}-\theta_{\ell}=0}F_{m+n}(\thetav,\etav+i\piv)
=-\frac{1}{2\pi i }S_{C_{1}} R_{C_{1}} (\thetav,\etav)
F_{m+n-2}( \boldsymbol{\hat\theta},\boldsymbol{\hat\eta} +i\piv),
\end{equation}
where $C_1=(m,n,\{(\ell,r+m)\})$.
This is just Eq.~\eqref{resgenrec} in the case $|C|=1$.

Now assume that Eq.~\eqref{resgenrec} holds for $|C|-1$ in place of $|C|$.
We split $C$ into two contractions, namely into $C'=(m,n,\{ (\ell_{2},r_{2}+m),\ldots, (\ell_{|C|},r_{|C|}+m)  \})$ and $C_1 \in \ccal_{m-|C'|,n-|C'|}$, $|C_1|=1$. (Cf.~\cite[Sec.~3]{BostelmannCadamuro:expansion} for details; in notation used there, we have $C=C'\dot\cup C_{1}$.) Employing the induction hypothesis, we obtain
\begin{equation}
\begin{aligned}
\res_{\eta_{r_{1}}-\theta_{\ell_{1}}=0}&\Big(\res_{\eta_{r_{2}}-\theta_{\ell_{2}}=0}\ldots\res_{\eta_{r_{|C|}}-\theta_{\ell_{|C|}}=0} F(\thetav ,\etav +i\piv) \Big)\\
&=\frac{(-1)^{|C'|}}{(2i\pi)^{|C'|}}R_{C'} S_{C'} (\thetav,\etav)
\res_{\eta_{r_{1}}-\theta_{\ell_{1}}=0} F_{m+n-2|C|+2}(\hat{\thetav},\hat{\etav}+i\piv)\\
&=\frac{(-1)^{|C|}}{(2i\pi)^{|C|}} R_{C'} S_{C'} (\thetav,\etav)
 R_{C_{1}} S_{C_{1}} (\hat\thetav,\hat\etav) F_{m+n-2|C|}( \Hat{\Hat{\thetav}},\Hat{\Hat{\etav}} +i\piv),
\end{aligned}
\end{equation}
where Eq.~\eqref{eq:oneres} was used. Note that the argument $(\thetav, \etav)$ on the right-hand side needs to be read on the support of $\delta_C$. By \cite[Lemma~3.2]{BostelmannCadamuro:expansion}, the factor $R_{C'} S_{C'} (\thetav,\etav)
 R_{C_{1}} S_{C_{1}} (\hat\thetav,\hat\etav)$ can then be replaced with $R_{C} S_{C} (\thetav,\etav)$, which gives the desired result.
\cmpqed\end{proof}

Noting that the $F_k$ fulfill in particular (FW), we now take $A\in\qf^\omega$ to be the quadratic form constructed in Thm.~\ref{theorem:wedgeequiv}\ref{it:wedgeequiv-fa}. Its expansion coefficients are 
\begin{equation}\label{eq:fmnbydef}
 \cme{m,n}{A}(\thetav,\etav)=F_{m+n}(\thetav+i\zerov,\etav+i\piv-i\zerov),
\end{equation}
i.e., the first half of \eqref{eq:fmnboundary-d} is fulfilled. The crucial point is now to establish the second half, or in other words, the correspondence between the shifted function $F_k^\pi = F_k(\cdotarg - i \piv)$ and the reflected operator $J A\st J$.

\begin{proposition}\label{proposition:fshifted}
If the $F_{k}$ fulfill condition (FD), then the quadratic form $A$ above  fulfills
\begin{equation}\label{eq:Fshifted}
\cme{m,n}{J A^{*} J}(\thetav,\etav) = F^{\pi}_{m+n}(\thetav+i\zerov,\etav+i\piv-i\zerov).
\end{equation}
\end{proposition}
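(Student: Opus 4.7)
The plan is to compute $\cme{m,n}{JA^*J}$ via the formula from \cite[Prop.~3.11]{BostelmannCadamuro:expansion}, substitute the known values of $\cme{m',n'}{A}$, and identify the resulting sum with $F^\pi_{m+n}$ evaluated at the specified boundary using the recursion relations \ref{it:fdrecursion}. Concretely, the cited reference (used also in deriving Eq.~\eqref{recursione5}) yields
\begin{equation*}
\cme{m,n}{JA^*J}(\thetav,\etav) = \sum_{C\in\ccal_{m,n}} (-1)^{|C|} \delta_C S_C R_C(\thetav,\etav)\, \cme{n-|C|,m-|C|}{A}(\hat\etav,\hat\thetav).
\end{equation*}
Substituting $\cme{m',n'}{A}$ from the construction via Thm.~\ref{theorem:wedgeequiv}\ref{it:wedgeequiv-fa}, and using the $2i\piv$-periodicity of $F_{m+n}$ (derivable from \ref{it:fdperiod} as in Prop.~\ref{proposition:extendeverywhere}), which identifies $F^\pi_{m+n}(\thetav+i\zerov,\etav+i\piv-i\zerov)$ with $F_{m+n}(\thetav-i\piv+i\zerov,\etav-i\zerov)$ approached from inside $\ical^k_-$, the claim \eqref{eq:Fshifted} reduces to the distributional identity
\begin{equation*}
F_{m+n}(\thetav-i\piv+i\zerov,\etav-i\zerov) = \sum_{C} (-1)^{|C|} \delta_C S_C R_C(\thetav,\etav)\, F_{m+n-2|C|}(\hat\etav+i\zerov,\hat\thetav+i\piv-i\zerov),
\end{equation*}
which is precisely the recursion relation \ref{it:tdrecursion} applied to the CR distributions $T_k$ on $\tube(\gcal^k_0)$ obtained as boundary values of $F_k$.

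To establish this identity, I would apply the multi-dimensional boundary distribution formula (Prop.~\ref{proposition:multivarres}) to $F_{m+n}$ at the base point $\thetav+i\lambdav^{(k,-m)}$, with $k=m+n$. The relevant pole hyperplanes are $\zeta_j-\zeta_{j'}=i\pi$ for $j>m\geq j'$, with normals $\ev^{(j)}-\ev^{(j')}$; the approach direction from inside $\ical^k_-$ satisfies the negative-side sign condition for every such pair. The sum over subsets $M$ of these hyperplanes then collapses, by linear independence together with the observation that iterated residues at hyperplanes sharing an index vanish (after one residue, the $F_{k-2}$ produced by \ref{it:fdrecursion} is analytic at the remaining hyperplane), to a sum over valid contractions $C\in\ccal_{m,n}$. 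For each such $C$, iterating \ref{it:fdrecursion} in the manner of Lemma~\ref{lemma:fres} evaluates the corresponding iterated residue as $\frac{(-1)^{|C|}}{(2i\pi)^{|C|}} S_C R_C$ times $F_{m+n-2|C|}$ on the non-contracted coordinates, and the $(2\pi i)^{|C|}$ factor from Prop.~\ref{proposition:multivarres}, together with the product of delta functions on the pole normals, reproduces the claimed sum.

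The main obstacle will be careful bookkeeping of $S$-factors and argument orderings: the residue calculation naturally yields $F_{m+n-2|C|}$ with arguments in the order $(\hat\thetav,\hat\etav)$ at imaginary parts $(-\pi,\ldots,0,\ldots)$, whereas the target expression calls for $(\hat\etav,\hat\thetav+i\piv)$ at $(0,\ldots,\pi,\ldots)$. Reconciling these involves applying the $S$-symmetry \ref{it:fdsymm} together with $2i\piv$-periodicity, using the cocycle-type composition identities for $S_C$ and $R_C$ from \cite[Sec.~3]{BostelmannCadamuro:expansion} (the same tools invoked in the proof of Lemma~\ref{lemma:fres}). In effect, the argument mirrors Prop.~\ref{proposition:residues} but runs in the converse direction, deriving the full recursion \ref{it:tdrecursion} from the single-pole residue condition \ref{it:fdrecursion}.
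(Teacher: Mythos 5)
Your proposal is correct and follows essentially the same route as the paper's proof: reduce the claim via the formula for $\cme{m,n}{JA^\ast J}$ from \cite[Prop.~3.11]{BostelmannCadamuro:expansion}, periodicity \ref{it:fdperiod} and $S$-symmetry \ref{it:fdsymm}, then establish the resulting multi-contraction boundary-value identity by combining Prop.~\ref{proposition:multivarres} with the iterated residues of Lemma~\ref{lemma:fres}, discarding index-sharing pole subsets and fixing the argument ordering with the contraction swap identities. The only (cosmetic) difference is that the paper first permutes the arguments to $(\etav-i\zerov,\thetav+i\piv+i\zerov)$ before applying Prop.~\ref{proposition:multivarres}, whereas you apply it directly at $\thetav+i\lambdav^{(k,-m)}$ and defer the reordering; these are equivalent under the symmetries you invoke.
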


\begin{proof}
The right hand side of \eqref{eq:Fshifted} can be rewritten as
\begin{multline}\label{eq:fpi}
F^{\pi}_{m+n}(\thetav+i\zerov,\etav+i\piv-i\zerov)
=F_{m+n}(\thetav-i\piv+i\zerov,\etav-i\zerov)\\
=\Big(\prod_{p=1}^{m}\prod_{q=1}^{n}S(\theta_{p}-\eta_{q}) \Big)
F_{m+n}(\thetav+i\piv+i\zerov,\etav-i\zerov)
=F_{m+n}(\etav-i\zerov,\thetav+i\piv+i\zerov),
\end{multline}
where we used \ref{it:fdperiod} and \ref{it:fdsymm} in the second and third equality, respectively.

To the last expression, we apply Prop.~\ref{proposition:multivarres} with the substitution $\zv = (\etav,\thetav+i\piv)$,
with the indices $p$ there labeling pairs $(\ell,r)$, $1 \leq \ell \leq n$, $n+1 \leq r \leq n+m$, with contractions $C \in \ccal_{n,m}$ in place of $M \subset \{1,\ldots,p\}$, and with the following vectors in $\rbb^{n+m}$,
\begin{align} \notag
\av^{(\ell,r)} &:= (0,\ldots,0,\underset{\substack{\uparrow \\ \ell}}{1},0,\ldots,0,\underset{\substack{\uparrow \\ r}}{-1},0,\ldots,0),\\
\notag
\bv^C &:= (1,\ldots,\underset{\substack{\uparrow \\ \ell_j}}{0},\ldots,\underset{\substack{\uparrow \\ n}}{1},\underset{\substack{\uparrow \\ n+1}}{-1},\ldots,\underset{\substack{\uparrow \\ r_j}}{0},\ldots,-1),
  \quad \text{where } C=(n,m,\{(\ell_j,r_j)\}), \\
\cv &:= (\underbrace{-1,\ldots,-1}_{n},\underbrace{1,\ldots,1}_m).
\end{align}
One notes $\av^{(\ell,r)} \cdot \bv^C \geq 0$, and $=0$ exactly when $(\ell,r)$ is a contracted pair of $C$; also $\av^{(\ell,r)} \cdot \cv < 0$, so that Prop.~\ref{proposition:multivarres} is applicable.
We insert the residues of $F_{m+n}$ known from Lemma~\ref{lemma:fres}, observing however that the orientation of the hyperplanes $\zv\cdot\av_{\ell,r}=0$ is opposite to those in \eqref{resgenrec}, yielding a factor $(-1)^{|C|}$. In this way we obtain
\begin{equation}
F_{m+n}(\etav-i\zerov,\thetav+i\piv+i\zerov)
=
\sum_{C\in\ccal_{n,m}}\delta_{C} S_C R_C (\etav,\thetav) F_{m+n-2|C|}(\hat{\etav}+i\zerov,\hat\thetav+i\piv-i\zerov).
\end{equation}
(We note that for those sets of pairs $(\ell,r)$ that do not form a valid contraction, the corresponding residues vanish.) Now we use \cite[Lemma~3.10]{BostelmannCadamuro:expansion} to swap left with right indices in the contraction, replacing $\delta_{C} S_C R_C (\etav,\thetav)$ with $(-1)^{|C|}\delta_{C} S_C R_C(\thetav,\etav)$.
Together with \eqref{eq:fpi} and \eqref{eq:fmnbydef}, we arrive at
\begin{equation}
 F^{\pi}_{m+n}(\thetav+i\zerov,\etav+i\piv-i\zerov)
=
\sum_{C\in\ccal_{m,n}} (-1)^{|C|}\delta_{C} S_C R_C(\thetav,\etav) \cme{n-|C|,m-|C|}{A}(\hat{\etav},\hat\thetav).
\end{equation}
But in view of~\cite[Prop.~3.11]{BostelmannCadamuro:expansion}, the right-hand side is just $\cme{m,n}{J A^\ast J}(\thetav,\etav)$, which concludes the proof.
\cmpqed\end{proof}

This finally allows us to conclude that $A$ is local in a double cone.

\begin{theopargself}
\begin{proof}[\ofwhat{Theorem~\ref{theorem:doubleconeequiv}\ref{it:doubleconeequiv-fa}}]
Let $F_k$ fulfill (FD), and let $A$ be the quadratic form of Thm.~\ref{theorem:wedgeequiv}\ref{it:wedgeequiv-fa}. Then $A$ is $\omega$-local in $\wcal_r'$, and both parts of \eqref{eq:fmnboundary-d} hold, cf.~Prop.~\ref{proposition:fshifted}. Now with $F_k$, also $F_k^\pi$ fulfill (FD) and hence (FW). Theorem~\ref{theorem:wedgeequiv}\ref{it:wedgeequiv-fa} applied to $F_k^\pi$ yields another quadratic form $A^\pi$ which is $\omega$-local in $\wcal_r'$. Due to \eqref{eq:fmnfromF} for $A^\pi$ and Prop.~\ref{proposition:fshifted}, we have $A^\pi=J A\st J$, since they agree in all $\cme{m,n}{\cdotarg}$.
Thus $A$ is $\omega$-local in $\wcal_r' \cap \wcal_{-r} = \ocal_r$. That is, it fulfills (AD).
\cmpqed\end{proof}
\end{theopargself}

\section{Conclusions and outlook}\label{sec:conclusion}

In this article, we have found a one-to-one characterization of local observables (quadratic forms) $A$ in terms of the analyticity properties of their expansion coefficients $\cme{m,n}{A}$. To that end, symmetry properties and pole structure of the coefficients, which are expected from the form factor programme, needed to be complemented with specific bounds on the analytic functions and on their boundary value distributions. 
The characterizing conditions depend on the localization region (we discussed wedges and double cones) and on the high energy behaviour of $A$. They do however not require that the observable is of a specific internal structure, such as being a smeared pointlike field.

Our present results, in the form stated, are valid for integrable models that contain only one scalar species of particle. This restriction was chosen mainly to simplify the discussion. But we certainly expect that an analogous result can be obtained in models with richer particle structure, incorporating any finite number of particle species, allowing also for nonzero spin and for gauge symmetries, as long as the two-particle $S$ matrix is analytic in the physical strip. This would make use of the multi-component wedge-local fields established in \cite{LechnerSchuetzenhofer:2012}.

We would see potential applications of our characterization result mainly in two directions: for constructing local observables, and conversely, for establishing restrictions on their existence.
In that context, we recall that in a number of integrable models, it is currently unclear whether local observables in double cones exist at all. These models have been established in terms of wedge algebras (Borchers triples), but the size of local observable algebras, which are intersections of two wedge algebras, is often unknown. Indeed, even in the present class of S-matrices with a simplified single particle space, the existence of local observables was shown in \cite{Lechner:2008} only under certain extra assumptions on the scattering function, and is in general unclear.
Sufficient criteria have been established for showing \emph{existence} of local observables, e.g., the split property for wedge inclusions or modular nuclearity (cf.~\cite{Lechner:2008}), but few if any criteria in the opposite direction are known.

Now on the one hand, our results might be the basis for the explicit construction of local operators. That is, one would explicitly determine a sequence of analytic functions $F_k$ that fulfill all locality conditions, and investigate the corresponding observable $A$. In fact, in a number of models, candidates for these functions $F_k$ are known; see, e.g., \cite{FringMussardoSimonetti:1993} for the sinh-Gordon model or \cite{SchroerTruong:1978} for the massive Ising model. One strategy for their construction is as follows. One first computes the \emph{minimal solution} of the model, essentially a function $F_\mathrm{min}$ of one variable such that $\zetav \mapsto F_\mathrm{min}(\zeta_2-\zeta_1)$ fulfills \ref{it:fdmero}--\ref{it:fdperiod} for $k=2$. In the simple case of the Ising model ($S=-1$), one can choose $F_\mathrm{min}(\zeta)=-i \sinh (\zeta/2)$, whereas in other situations $F_\mathrm{min}$ can be computed from $S$ by a certain integral expression \cite{Karowski:1979}. One then makes the ansatz
\begin{equation}
    F_k (\zeta) = Q_k(e^{\zeta_1},\ldots,e^{\zeta_k}) \prod_{m<n} F_\mathrm{min}(\zeta_n-\zeta_m)
\end{equation}
with symmetric functions $Q_k$ of suitable analyticity. These $F_k$ automatically satisfy \ref{it:fdmero}--\ref{it:fdperiod} for all $k$. Now one chooses $Q_k$ recursively, starting from given $Q_1$, $Q_2$, so that the relations \ref{it:fdrecursion} are fulfilled. For pointlike localized observables $A$, \cite{SchroerTruong:1978,FringMussardoSimonetti:1993} obtain $Q_k$ as certain rational functions. The bounds \ref{it:fdboundsreal} and \ref{it:fdboundsimag} should then also hold for $F_k$ after smearing $A$ with a test function from a suitable class. The challenge is now to show that the quadratic forms $A$ so constructed extend to closable (unbounded) operators; in other words, one needs to control the 
domain of $A$. We have not investigated the last mentioned aspect here, but sufficient criteria can be found, and they can be verified at least in simple examples from the Ising model \cite{Cadamuro:2012}. We intend to return to this point elsewhere \cite{BostelmannCadamuro:examples-wip}.

This constructive application might also lead to a deeper understanding of concrete, physically important local quantities such as the energy density. With Fewster \cite{BostelmannCadamuroFewster:2013}, the present authors  have recently established lower bounds for the energy density (``quantum energy inequalities'') in the massive Ising model; it will be interesting to see whether similar properties pertain to a larger class of integrable models.

On the other hand, our results might be employed to prove \emph{non-}existence of local observables, in the sense of a ``no go theorem''.  That is, one could aim to show that for certain functions $S$, the conditions \ref{it:fdmero}--\ref{it:fdboundsimag} are incompatible, or more generally that they yield constraints on the size of the local algebras. An example might be provided by the ``exotic'' S-matrix $S(\theta)=\exp (a \sinh \theta)$.  Variants of the present method could also be used to clarify the corresponding question in massless models \cite{BostelmannLechnerMorsella:2011}, where it is open in general, or in higher-dimensional generalizations \cite{BuchholzSummers:2007} where the existence of observables in bounded regions is not expected, but where to the knowledge of the authors it has not rigorously been ruled out either.

\section*{Acknowledgements}

The authors are indebted to K.~H.~Rehren for valuable suggestions.
They are also grateful to J.~Bros for discussions and to G.~Lechner for comments on a draft version.
D.~C.~would further like to thank M.~Bischoff, W.~Dybalski and Y.~Tanimoto for helpful discussions.


\bibliographystyle{alpha} 
\bibliography{../../integrable,suppl_arxiv}

\end{document}